\newcommand{\Gtot}{G_{tot}}
\newcommand{\dist}{\operatorname{dist}}
\DeclareMathOperator*{\argmin}{arg\,min}
\newcommand*\diff{\mathop{}\!\mathrm{d}}
\newtheorem{theorem}{Theorem}
\newtheorem{lemma}[theorem]{Lemma}
\newtheorem{cor}[theorem]{Corollary}
\newtheorem{prop}[theorem]{Proposition}
\newtheorem{definition}{Definition}
\newtheorem{example}{Example}
\newtheorem{remark}{Remark}
\title{The Gain of Energy Accumulation in Multi-hop Wireless Network Broadcast}
\author{\IEEEauthorblockN{Majid~Khabbazian, and Keyvan~Gharouni~Saffar\\
\IEEEauthorblockA{Department of Electrical and Computer Engineering\\
		University of Alberta, Canada\\
		\{mkhabbazian, gharouni\}@ualberta.ca}
}}
\begin{document}
\maketitle

\begin{abstract}
Broadcast is a fundamental network operation, widely used in wireless networks to disseminate messages.
The energy-efficiency of broadcast is important particularly when devices in the network are energy constrained.
To improve the efficiency of broadcast, different approaches have been taken in the literature. 
One of these approaches is broadcast with energy accumulation.
Through simulations, it has been shown in the literature that broadcast with energy accumulation can result in energy saving.
The amount of this saving, however, has only been analyzed for linear multi-hop wireless networks.
In this work, we extend this analysis to two-dimensional (2D) multi-hop networks.
The analysis of saving in 2D networks is much more challenging than that in linear networks.
It is because, unlike in linear networks, in 2D networks, finding minimum-energy broadcasts with or without energy accumulation are both NP-hard problems.
Nevertheless, using a novel approach, we prove that this saving is constant when the path loss
exponent $\alpha$ is strictly greater than two.
Also, we prove that the saving is $\theta(\log n)$ when $\alpha=2$, 
where $n$ denotes the number of nodes in the network.



\end{abstract}
%
\section{Introduction}
Applications of wireless multi-hop networks  ranges from Smart Grid \cite{blumsack2012ready,mcdaniel2009security} and Internet of Things (IoT) \cite{mattern2010internet,atzori2010internet} to Machine-to-Machine (M2M) communications networks \cite{niyato2011machine,wu2011m2m} and smart environments \cite{cook2004smart,poslad2011ubiquitous}. 
Network-wide broadcast, simply referred to as broadcast, is a fundamental operation frequently used in these applications.
Reducing the energy consumption of broadcast is important, because many wireless devices (e.g., mobile devices) have limited energy supplies as they run on batteries.

Different techniques have been used in the literature to reduce the energy consumption of broadcast.
These techniques are typically based on transmission power control \cite{aziz2013survey}, reducing the number of redundant transmissions\cite{khabbazian2012local,nikolov2015towards}, and cooperative communication.
The cooperative communication technique includes three main approaches: 
i) coherent signal synchronization~\cite{KAMZ2003}; 
ii) mutual information accumulation~\cite{MMYZ2007}; and 
iii) energy accumulation~\cite{maric2004cooperative}. 

The coherent signal synchronization approach requires transmitters to synchronize their transmission at the signal level when transmitting
to a single receiver.
This approach offers higher benefits than information accumulation and energy accumulation approaches, but is hard to implement in practice 
because of its tight synchronization requirement~\cite{baghaie2011delay}.
In addition, the benefit of this approach is not clear in a multi-hop network, where there are multiple receivers~\cite{baghaie2011delay}.

The second approach is mutual-information accumulation, which can be implemented using fountain codes~\cite{MMYZ2007}.
In this approach, a node stores partial information from multiple received transmissions, 
and can decode the message when the total accumulated information from previous transmissions of the message exceeds the entropy of the message. 
At low signal-to-noise ratios (SNRs), this approach was shown to be equivalent to the energy accumulation approach~\cite{DLMY08ICC}.

In energy accumulation, which is the subject of this work, a node can decode a message if the total accumulated energy 
from the previous transmissions of the message is above a threshold \cite{qiu2014energy}.
This generalizes the conventional non-cooperative broadcast models in which a node can decode the message 
only if the received power from a single transmission of the message is above a threshold.
For example, suppose the decoding threshold is 1.0, and a node $u$ receives exactly three transmitted signals with received powers $0.5$, $0.4$,
and $0.3$.
With energy accumulation, node $u$ is able to decode since the sum of the received powers is $1.2$, which is above the threshold.
In non-cooperative broadcast, however, decoding is not possible because, for each transmission, the received power at $u$ is strictly less than the threshold.


Energy accumulation can let transmitters reduce their transmission powers while still assuring other nodes will receive the message.
This, as shown in \cite{maric2004cooperative} and \cite{baghaie2011delay} through simulations, 
can reduce the power consumption of broadcast, hence saving energy.
Theoretical analysis of the amount of this energy saving, however, can shed more light on how much gain/saving can be achieved using energy accumulation.

If the minimum energy consumption of cooperative broadcast and non-cooperative broadcasts are $P_c$ and $P_n$, respectively, 
then the cooperation gain (i.e., the energy saving) is defined as $\frac{P_n}{P_c}$.
To evaluate the cooperation gain, therefore, we need to find values of $P_n$ and $P_c$.
Unfortunately, computing $P_n$ and $P_c$ are both NP-hard~\cite{li2001minimum,vcagalj2002minimum}.
Nevertheless, we evaluate the cooperation gain $\frac{P_n}{P_c}$ using a novel approach.
 
In our approach, we first design a conversion method that converts any cooperative broadcast algorithm into a non-cooperative broadcast algorithm.
Then, we  prove that the ratio of the energy consumption of the constructed non-cooperative algorithm over that of the cooperative algorithm 
is constant when the path loss exponent is strictly greater than two.
Therefore, if there is a cooperative broadcast algorithm with power consumption $P_c$, then there exists a non-cooperative broadcast with
power consumption $\mathcal{O}(1)\cdot P_c$, thus the cooperation gain is $\mathcal{O}(1)$.
Using the same approach we show that the cooperation gain is $\mathcal{O}(\log n)$ when $\alpha=2$,
where $n$ denotes the number of nodes in the network.

The main  contributions of this work are: 
\begin{enumerate}
	\item 
	  When the path loss exponent is greater than two (the typical case in practice), we show that the gain of energy accumulation in 2D networks is limited (constant)
	  irrespective of the number of nodes, and their locations in the network.
	\item
	  When $\alpha=2$, we prove that the gain grows at best logarithmically with the number of nodes in the network.  
    \item
      When $\alpha=2$, we show that there are networks such as grid and random networks, in which the gain grows logarithmically with the number
      of nodes in the network.
\end{enumerate}

This paper is organized as follows.
We discuss related work in Section~\ref{related work}.
System model and definitions are described in Section~\ref{sec:model}.  Our analytical results including bounds on the cooperation gain are presented in Section~\ref{AnGtot}. In Section~\ref{third:simulation}, we verify some of our analytical results through simulation. Finally, Section~\ref{conclu} concludes the paper.

%
%
%
%

\section{Related Work}
\label{related work}
Energy accumulation can be performed at receivers utilizing maximal ratio combining (MRC) of orthogonal signals in time, frequency or code domain (see \cite{sirkeci2007power,baghaie2011delay,qiu2014energy}). 

Energy accumulation has been studied for both routing~\cite{Chen05,Gomez17}, 
and broadcasting~\cite{Maricand2002, Agarwal2004} in wireless multi-hop networks.
It has also been used to enhance throughput in cognitive radio networks~\cite{Sharifi12, Zhang13, Atat18}. 
The scope of our work is to analyze the energy saving that can be achieved by using energy accumulation for broadcast in wireless multi-hop networks.

Existing energy accumulation based cooperative broadcast algorithms fall into two groups.
The first group includes algorithms (e.g., \cite{hong2006energy,sirkeci2007power}) in which receiving nodes can combine signals from all previous transmissions to benefit from transmission diversity.
These algorithms are called cooperative broadcast algorithms with memory. 
The other group includes ``memoryless'' cooperative broadcast algorithms such as the one proposed in \cite{baghaie2011delay}.
In these algorithms, a node can only use transmissions in the present time slot to accumulate energy; Signals received from transmissions in previous time slots are discarded.
Our work studies cooperative broadcast algorithms with memory as they fully benefit from the energy accumulation. 
As the result, our derived upper bounds on the cooperation gain also apply to the  ``memoryless'' cooperative broadcast algorithms.

The problem of cooperative broadcast with minimum energy can be broken into two sub-problems i) transmission scheduling, which determines the set of transmitters and the order of transmissions;
ii) power allocation, in which the transmission powers are set.
It was proven that, given a transmission scheduling, optimal power allocation can be computed in polynomial time, 
but finding an optimal scheduling that leads to a minimum power consumption is NP-hard \cite{hong2006energy,maric2004cooperative}.

In addition to saving energy, energy accumulation can be used to reduce broadcast latency~\cite{Mergen06}.
Some existing work study the tradeoff between energy and latency in cooperative broadcast~\cite{baghaie2011delay, qiu2014energy}.
In~\cite{baghaie2011delay}, Baghaie and Krishnamachari prove that the problem of
minimizing the energy consumption while meeting a desired latency constraint is not only NP-hard 
but also $o(\log n)$ inapproximable.

The problem of finding non-cooperative broadcast with minimum energy was also proven to be NP-hard \cite{vcagalj2002minimum}. 

Several approximation algorithms have been proposed in the literature for the case where nodes lie in a Euclidean space (e.g.,~\cite{WieselthierNE00, vcagalj2002minimum, CartignySS03}). 
Two well-known approximation algorithms are MST heuristic and BIP (broadcast incremental power)~\cite{WieselthierNE00}.
In 2D networks, these algorithms were proven to have approximation ratios of 6 and 4.33, respectively~\cite{Ambuhl05, WanCLF01}.

The best existing approximation algorithm to the problem is, however, due to Caragiannis, Flammini, and Moscardelli \cite{caragiannis2013exponential}. 
In 2D wireless networks, their algorithm has an approximation ratio of 4.2 for Euclidean cost graphs, and a logarithmic approximation for non-Euclidean cost graphs.

Unlike 2D networks, in linear networks, 
the minimum energy of both cooperative and non-cooperative broadcast algorithms can be computed in polynomial time~\cite{vcagalj2002minimum}.
For linear networks, the ratio of the two minimum power consumptions was proven to be constant 
with respect to the number of nodes in the network~\cite{mobini2016asymptotic}.
Our work extends this study to general 2D networks. 
\section{System Model and Definitions}\label{sec:model}
We consider a static 2D wireless network with a set of $n$ nodes $\mathscr{U}$, and a single source node $s \in \mathscr{U}$, 
which is to broadcast a single message to every other node in $\mathscr{U}$ in a multi-hop fashion. 
We adopt the simplified path loss model used in \cite{maric2004cooperative}, \cite{hong2006energy} and \cite{maric2005cooperative}.
This model is commonly used for system design as it captures the essence of signal propagation 
without resorting to complicated statistical models~\cite{Goldsmith2005}.
In the simplified path loss model, the link gain $\textit{h}_{u_i,u_j}$ is represented as $\textit{h}_{u_i,u_j} = \textit{d}_{u_i,u_j}^{-\alpha}$,
where $\textit{d}_{u_i,u_j}$ is the distance between nodes $\textit{u}_i$ and $\textit{u}_j$ and $\alpha$ is the path loss exponent.
 
We assume that $\alpha \geq 2$ as this is normally the case in practice~\cite{Rappaport96}. 



  Similar to~\cite{maric2004cooperative}, we focus on energy saving without latency constraints.
  This is motivated by applications for networks such as wireless sensor networks, 
  where energy-efficiency is the primary goal.

Any broadcast algorithm can be converted into a collision-free broadcast algorithm in which every node transmits at most once.
To do this, if two nodes transmit at the same time, those transmissions can be separated in time to avoid collisions.
Clearly this does not affect the total power consumption (this may impact latency, which is not the concern of this work).
Also, if a node transmits multiple times, the power of those transmissions can be added together and used  in a single transmission.
For example, if a node $u$ transmits, say twice at times $t_1$ and $t_2$ ($t_1 \leq t_2$) with powers $P_1$ and $P_2$, respectively, 
those transmissions can be merged into one transmission at time $t_1$ with power $P_1 + P_2$.
Therefore, for every broadcast algorithm, there exists a collision-free broadcast algorithm with the same total power consumption in which 
every node transmits at most once.
Since we focus on the total power consumption, we can safely assume that 1) each node transmits at most once, 
and 2) transmissions are collision free.

%

Using the above mode, a broadcast algorithm can be formally defined as follows:
\begin{definition}[Broadcast Algorithm]
	\label{def:BA1}
	We represent a broadcast algorithm by a tuple ($\mathscr{A}, <, \rho$), where the binary relation $<$ is a strict total order on $\mathscr{U}$, and the transmit power function $\rho:\mathscr{U} \to \mathbb{R}_{\geq0}$ is a function from set of nodes to real numbers $\mathbb{R}_{\geq0}$ which returns the transmission power of each node. 
	The set $\mathscr{A} \subseteq \mathscr{U}$ represents the set of nodes that transmit (i.e. set of nodes $u$ for which $\rho(u) \not= 0$). 
	The binary relation determines the order of transmission. 
	In particular, we have
	\begin{itemize}
		\item $u_i<u_j$ if $u_i,u_j \in \mathscr{A}$ and $u_i$ transmits before $u_j$. 
		\item $u_i<u_j$ if $u_i \in \mathscr{A}$ and $u_j \in \mathscr{\bar{A}} = \mathscr{U}\backslash \mathscr{A}$.
		\item $u_i=u_j$ if $u_i,u_j \in \mathscr{\bar{A}}$.
	\end{itemize}
	The binary relation $<$ helps in express equations (e.g. Equation~\ref{cond:coop})  concisely.
\end{definition}
\noindent
In the rest of this paper, we represent a broadcast algorithm by $(\mathscr{A},\rho)$ instead of $(\mathscr{A},<,\rho)$, for convenience.


Definition~\ref{def:BA1} of broadcast algorithm does not guarantee that the message is delivered to all the nodes,
as it does not put any condition on the transmission power $\rho$.
In this work, we only study broadcast algorithms that guarantee full delivery.
To achieve full delivery, extra conditions need to be imposed on the transmit power function.
The following two definitions enforce such conditions, and describe two general classes of broadcast algorithms with full delivery.   

\begin{definition}[Cooperative Broadcast Algorithm]
	A broadcast algorithm ($\mathscr{A},\rho$) is called a cooperative broadcast algorithm,
	and is denoted by ($\mathscr{A}^{(c)},\rho^{(c)}$), if and only if
	
	\begin{equation}
	\label{cond:coop}
	\forall u_{i} \in \mathscr{U}\backslash\{s\}: \quad
	\sum_{u_j<u_i}^{}\rho^{(c)}(u_j)\textit{h}_{u_i, u_j} \geq P_{th},
	\end{equation}
	where $P_{th}$ is the decoding threshold.
	The inequality implies that the sum of received powers at every node (except the source) is not less than the threshold,
	hence every node receives the message.
	
\end{definition}

\begin{definition}[Non-cooperative Broadcast Algorithm]
	A broadcast algorithm ($\mathscr{A},\rho$) is called a non-cooperative broadcast algorithm,
	and is denoted by ($\mathscr{A}^{(n)},\rho^{(n)}$), if and only if:
	\begin{equation}
	\label{cond:non-coop}
	\forall u_{i} \in \mathscr{U} \backslash \{s\},
	\exists u_j < u_i:	\quad
	\rho^{(n)}(u_j)\textit{h}_{u_i,u_j} \geq P_{th}.
	\end{equation}
	Similar to (\ref{cond:coop}), Inequality~\ref{cond:non-coop} ensures full delivery.
\end{definition}

\noindent
The cooperation gain $G_{tot}$ is defined as follows:
\begin{definition}[$\Gtot$]
	\label{Gtot}
	Let ($\mathscr{A}^{(c)}_\dag,\rho^{(c)}_\dag$) and  ($\mathscr{A}^{(n)}_\dag,\rho^{(n)}_\dag$) be, respectively, 
	optimal cooperative and non-cooperative broadcast algorithms with minimum power consumption. 
	We define the cooperation gain $G_{tot}$ as 
	\begin{align}\label{eqn:gtot}
	\Gtot= \frac{\sum_{u_i \in \mathscr{U}}^{}\rho^{(n)}_\dag(u_i)}{\sum_{u_i \in \mathscr{U}}^{}\rho^{(c)}_\dag(u_i)}.
	\end{align}
\end{definition}
\vline

\noindent
Without loss of generality, we assume $P_{th}=1$  in the remaining of the paper.

\section{ANALYSIS OF $G_{tot}$}
\label{AnGtot}

  Before delving into the technical details, let us use a simple network to provide some intuition behind our results, 
  e.g., why the gain grows differently for cases  $\alpha=2$, and  $\alpha>2$.

  As shown in Fig. \ref{grd:2D}, consider a \emph{grid network} with $n$ nodes and minimum distance $d$ between nodes.
  Let $(\mathscr{A}^{(c)},\rho^{(c)})$ be any cooperative broadcast algorithm.
  When a node $u$ transmits, the sum of received powers at nodes other than $u$ and the source~$s$ is
  \begin{equation}
  \label{equ:sumPow}
    \sum_{v\notin \{u,s\}}\frac{\rho^{(c)}(u)}{d_{u,v}^\alpha},
  \end{equation}
  where $d_{u,v}$ denotes the distance between nodes $u$ and $v$.
  If $v$ is at coordinate $(i,j)$ with respect to $u$, then we can simply write $d_{u,v}=d\cdot C_{i,j}$, where $C_{i,j}=(i^2+j^2)^{\frac{1}{2}}$ is the distance
  between coordinate $(i,j)$ and the origin in a Cartesian coordinate system.
  Therefore, by rewriting~(\ref{equ:sumPow}), we get
  \[
    \sum_{v\notin \{u,s\}}\frac{\rho^{(c)}(u)}{d_{u,v}^\alpha} 
    =  \frac{\rho^{(c)}(u)}{d^\alpha} \cdot \underbrace{\sum_{i,j}\frac{1}{C_{i,j}^\alpha}}_{\omega_u}
    =\omega_u \cdot  \frac{\rho^{(c)}(u)}{d^\alpha}.
  \]  
  Note that $\sum_{i,j}\frac{1}{C_{i,j}^\alpha}$ takes its maximum value when $u$ is at the center of the grid.
 Therefore, $\omega_u\leq \zeta_\alpha$, where
  \begin{equation}
    \label{equ:zeta}
    \zeta_\alpha = \sum_{-\lfloor \frac{\sqrt{n}-1}{2}\rfloor \leq i,j \leq \lceil \frac{\sqrt{n}-1}{2}\rceil; (i,j)\neq (0,0)} \frac{1}{\left(i^2+j^2\right)^\frac{\alpha}{2}}
  \end{equation}
  
  Now, notice that in any cooperative broadcast algorithm, the total received powers at every node (except the source node) must be at least one.
  Therefore, since we have $n-1$ nodes other than the source, we must have
  \[
    \sum_{u\in \mathscr{U}} \omega_u \cdot  \frac{\rho^{(c)}(u)}{d^\alpha} \geq n-1,
  \]
  where $\mathscr{U}$ is the set of nodes.
  The above inequality is equivalent to
  \[
    \sum_{u\in \mathscr{U}} \omega_u \cdot  \rho^{(c)}(u)\geq (n-1)d^\alpha.
  \]
  Since  $\omega_u\leq \zeta_\alpha$ for every node $u\in\mathscr{U}$, we get
  \begin{equation}
  \label{equ:omg}
    \sum_{u\in \mathscr{U}}\rho^{(c)}(u)\geq \frac{1}{\zeta_\alpha}\cdot(n-1)d^\alpha.
  \end{equation}
  In other words, the total power consumption of any cooperative broadcast in the grid network is at least $\frac{1}{\zeta_\alpha}\cdot(n-1)d^\alpha$.
  
  On the other hand, non-cooperative broadcast can deliver the message with total power consumption of at most $(n-1)d^\alpha$.
  It is because the algorithm in which every node in the gird (except the last one) transmits with power $d^\alpha$ is a non-cooperative broadcast algorithm,
  and has total power consumption of $(n-1) d^\alpha$.
  Consequently, by~(\ref{equ:omg}), we get the following bound on the cooperation gain
  \begin{equation}
  \label{equ:zetGtot}
    G_{tot}\leq \frac{(n-1) d^\alpha}{\frac{1}{\zeta_\alpha}\cdot(n-1)d^\alpha}=\zeta_\alpha.
  \end{equation}

  It is not hard to show that $\zeta_\alpha$ (defined in~(\ref{equ:zeta})) is constant when $\alpha>2$, and is $\mathcal{O}(\log(n))$ when $\alpha=2$.
  Indeed, this is a 2D version of the fact that $\sum_{i=1}^{n}\frac{1}{i^t}$ is constant if $t>1$, and $\mathcal{O}(\log n)$ if $t=1$.   
  Therefore, by~(\ref{equ:zetGtot}), the gain is constant if $\alpha>2$, and $\mathcal{O}(\log n)$ if $\alpha=2$

  We remark that~(\ref{equ:zetGtot}) also holds for the case $1<\alpha<2$.
  In this case, we have  $\zeta_\alpha\in\mathcal{O}(n^{1-\frac{\alpha}{2}})$.
  This implies that the cooperation gain is $\mathcal{O}(n^{1-\frac{\alpha}{2}})$ when $1< \alpha <2$.
  This upper bound is, however, not tight because non-cooperative broadcast can do better in grid networks when $1<\alpha<2$.
  In particular, note that the source can send the message to everyone with a single transmission of power at most $\left(\sqrt{2n}\cdot d\right)^\alpha$,
  because the maximum distance between any two nodes in the grid is $\sqrt{2n}\cdot d$. 
  When $1<\alpha<2$, this simple non-cooperative broadcast algorithm (with a single transmission) has asymptotically lower total power consumption 
  (of a factor of $\mathcal{O}(n^{1-\frac{\alpha}{2}})$) than the one described earlier
  (i.e., the one in which nodes transmit with power $d^\alpha$).
  This means that the cooperation gain in grid networks is indeed constant when $1<\alpha<2$.
  


  \begin{figure}[ht]
	\includegraphics[scale=0.5]{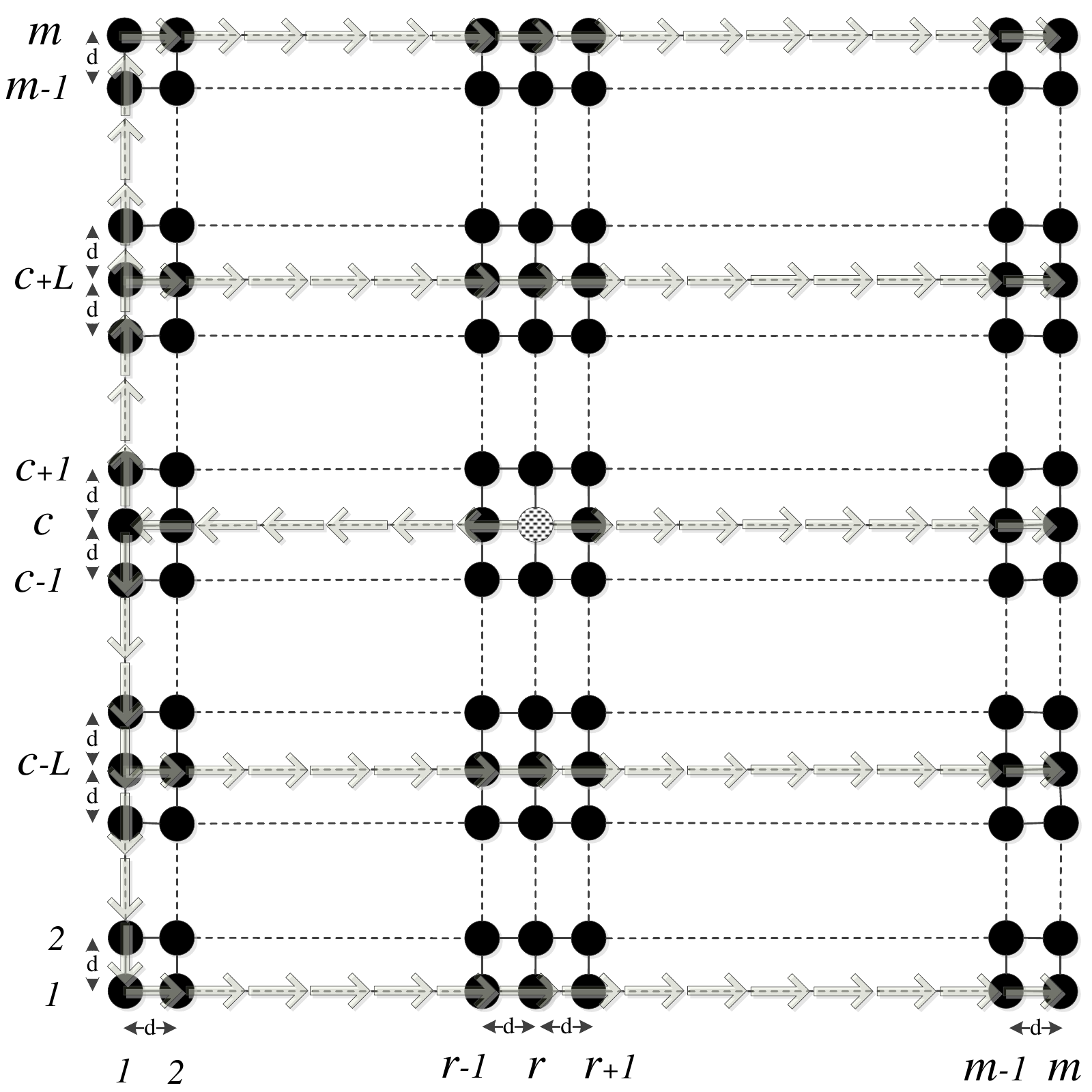}
	\centering
	\caption{An $m\times m$ grid network with  minimum node distance $d$, and $n=m^2$ nodes.}
	\label{grd:2D}
\end{figure}

  Analyzing the gain in general networks is challenging, because the analysis should work for any arbitrary arrangement of nodes in the network.
 To handle this, we first propose a \textit{novel conversion} method that converts any given cooperative broadcast algorithm into a non-cooperative algorithm.
 Then, we leverage the conversion method to establish upper bounds on the cooperation gain, $G_{tot}$, 
 achievable in 2D networks for the two cases of i) $\alpha>2$, and ii) $\alpha=2$.

%

We begin with $\alpha>2$, and show that any cooperative broadcast algorithm with power consumption $P_{tot}$ can be converted to a non-cooperative algorithm with power consumption $c \times P_{tot}$ where $c$ is a constant value, independent of the network size and topology. 
Therefore, $\Gtot = \mathcal{O}(1)$ in any 2D networks when $\alpha>2$.
After that, we consider the case $\alpha=2$,  and prove that $\Gtot = \mathcal{O}(\log n)$.
Then, we show that when $\alpha=2$, the cooperation gain indeed increases logarithmically with the number of nodes in grid networks.

\subsection{The Conversion Method}
\label{Conv-Method}
In this section, we propose a carefully crafted conversion method that converts any  
cooperative broadcast algorithm $(\mathscr{A}^{(c)},\rho^{(c)})$
into a non-cooperative broadcast algorithm $(\mathscr{A}^{(n)},\rho^{(n)})$.

Consider any 2D network with a set of $n$ nodes $\mathscr{U}$.  
Let $(\mathscr{A}^{(c)},\rho^{(c)})$ be an arbitrary cooperative broadcast algorithm with power consumption $P_{tot}^{(c)}$.
To better explain our conversion method, we will use the simple network shown in Fig.~\ref{fig:Samp1} as an example.
In Fig.~\ref{fig:Samp1}, the set of nodes transmitting in the cooperative algorithm  is $\mathscr{A}^{(c)} = \{u_1, u_3, u_5, u_6\}$.
These nodes are shown by asterisks.
The transmission order is $u_1 <^{(c)} u_3 <^{(c)} u_5 <^{(c)} u_6$ with $u_1$ being the source node.
The task of the conversion method is to 1) assign powers to the node, that is to determine the function $\rho^{(n)}$, and 2) determine the order of transmissions.  
Before explaining the power assignment and the order of transmissions, we need a few definitions.
\begin{figure}[htb!]
	\includegraphics[width = 50mm,scale = 1]{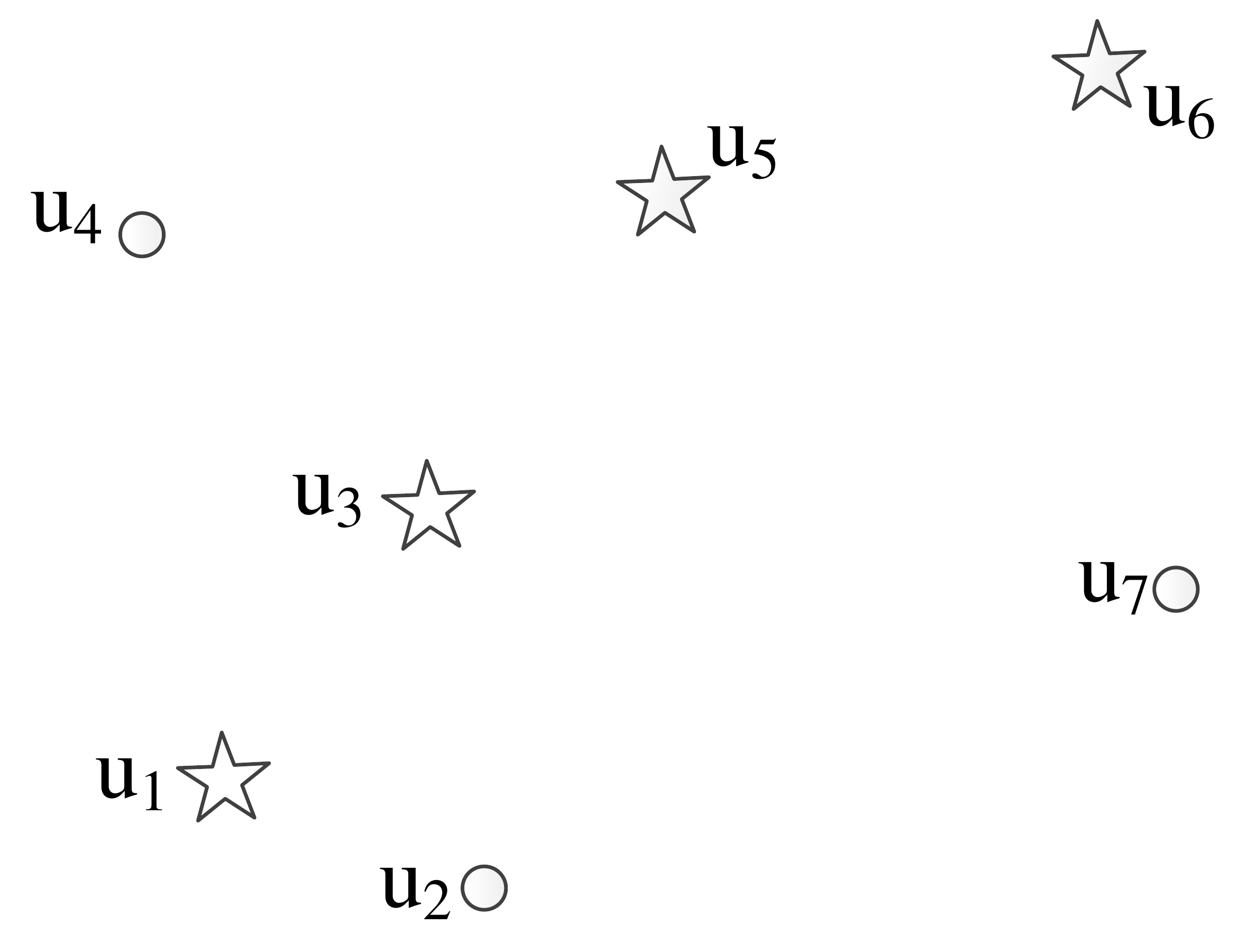}
	\centering
	\caption{The execution of $(\mathscr{A}^{(c)},\rho^{(c)})$ in a sample network. The algorithm $(\mathscr{A}^{(c)},\rho^{(c)})$ is the input to our conversion method.}
	\label{fig:Samp1}
\end{figure}

For any node $u\in\mathscr{U}\backslash\{s\}$, let $\mathcal{R}(u)$ be the closest node to $u$ that transmits before $u$ in the cooperative algorithm $(\mathscr{A}^{(c)},\rho^{(c)})$.
Formally 
\begin{equation}
	\begin{split}
		\label{equ:resp}
		\mathcal{R}:\mathscr{U}\backslash\{s\} \to \mathscr{A}^{(c)}\\
		\mathcal{R}(u) \in \argmin_{v <^{(c)} u} d_{u,v}.
	\end{split}
\end{equation}
In our sample network (Fig.~\ref{fig:Samp1}), we have
\begin{equation*}
	\begin{split}
		\mathcal{R}(u_7) = u_6, \;\; \mathcal{R}(u_6) = u_5, \;\; \mathcal{R}(u_5) = u_3,\\
		\mathcal{R}(u_4) = u_3, \;\; \mathcal{R}(u_3) = u_1,  \;\; \mathcal{R}(u_2) = u_1. 
	\end{split}
\end{equation*}
For any node $u\in\mathscr{U}\backslash\{s\}$ let
\begin{equation}
	\label{allDisks}
	D_u = \{(x,y)|(x-x_u)^2+(y-y_u)^2 \leq d_{u,\mathcal{R}(u)}^2\}
\end{equation}
be the disk with radius $r_u =d_{u,\mathcal{R}(u)}$ centred at node $u$, where $d_{x,y}$ denotes the distance between two nodes $x$ and $y$.
Let
\[
	\mathcal{D} = \{D_u| u \in  \mathscr{U}\backslash\{s\}\}.
\]
In other words, $\mathcal{D}$ is the set of every disk centred at a node $u\in\mathscr{U}\backslash\{s\}$ with the closest node that transmits before $u$ being located on the boundary of the disk. Fig.~\ref{fig:Samp2} shows the disks $D_{u} \in \mathcal{D}$ for our sample network.

\begin{figure}[htb!]
	\includegraphics[width=0.8\columnwidth]{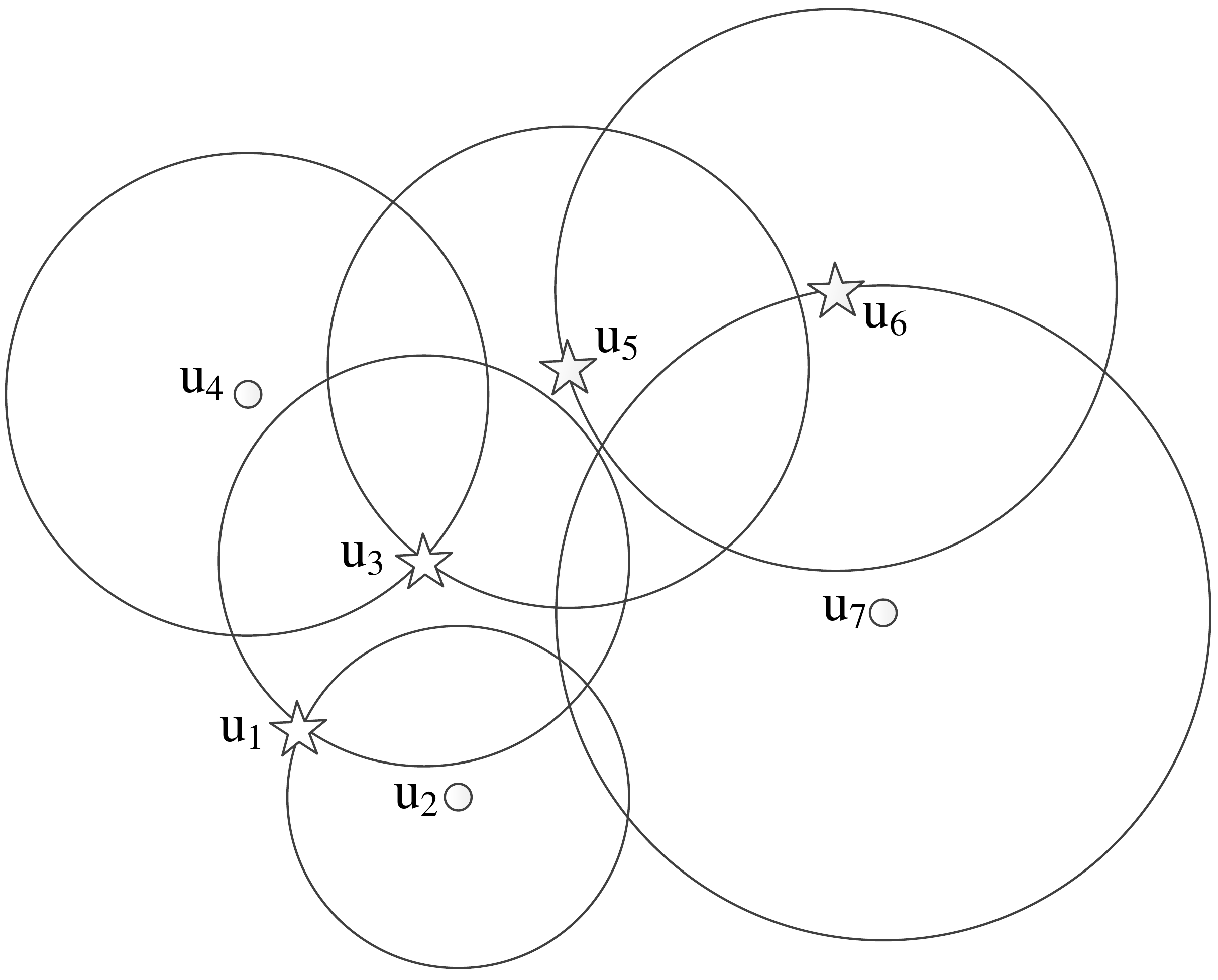}
	\centering
	\caption{Disks $D_u \in \mathcal{D}$ in the sample network.}
	\label{fig:Samp2}
\end{figure}

The power assignment is done in two steps:
\begin{enumerate}
	\item Step 1:
	\label{Step1}
	In this step, we find a subset  $\mathcal{I}$ of non-overlapping disks in $\mathcal{D}$ through an iterative process.
		Initially, we set $\mathcal{I} = \{\}$.
		In each iteration, we find the largest disk in $\mathcal{D} \backslash \mathcal{I}$ that does not overlap with any disk in $\mathcal{I}$, and add that to $\mathcal{I}$.
		We stop when every disk in $\mathcal{D} \backslash \mathcal{I}$ overlaps with at least one disk in $\mathcal{I}$.
		Disks with thick boundary in Fig.~\ref{fig:Samp3} demonstrate the set of non-overlapping disks $\mathcal{I}$ for our sample network.
	
	\item Step 2:
	\label{Step2}
	Let
		\[
		\mathcal{I} = \{D_{u_1},D_{u_2},...,D_{u_{|\mathcal{I}|}}\}, 
		\]
	be the result of the first step.
	To every disk $D_{u_i}\in \mathcal{I}$, we assign a node $w_i$ 
	\begin{equation}
		\label{equ:w_i}
		w_i = \min_{u \in \mathcal{S}_i}(u)
	\end{equation}
	where the minimization is with respect to $<^{(c)}$, and
	\begin{equation}
		\label{equ:S_i}
		\mathcal{S}_i = \{\mathcal{R}(v)|D_v \cap D_{u_i} \neq \emptyset ,r_v \leq r_{u_i} ,D_v \in \mathcal{D}\}.
	\end{equation}
	Equation~\ref{equ:w_i} simply implies that $w_i$ is the node in $\mathcal{S}_i$ that transmit before any other node in $\mathcal{S}_i$.	
	This rather strange/complex choice of node $w_i$ and the set $\mathcal{S}_i$ are better understood in the proof of Theorem~\ref{the:FD}.
	Shaded asterisks in Fig.~\ref{fig:Samp3} represent nodes $w_i$ for disks $D_{u_i} \in \mathcal{I}$ in our sample network. 
	Note that every node in $\mathcal{S}_i$ is a transmitting node because, by (\ref{equ:resp}),  $\mathcal{R}(v)\in  \mathscr{A}^{(c)}$ for every node $v\neq s$.

	In our constructed non-cooperative broadcast algorithm $(\mathscr{A}^{(n)},\rho^{(n)})$, only nodes $w_i$, $1\leq i \leq |\mathcal{I}|$, are assigned non-zero transmission power accordingly to Algorithm~\ref{alg:1}.
	
\end{enumerate}

\begin{algorithm}
	\caption{Power Assignment}
	\label{alg:1}
		\begin{algorithmic}[1]
			\State $\forall u \in \mathscr{U}$ set $\rho^{(n)}(u) \gets 0$
			\For{ $i \gets 1,~ |\mathcal{I}|$}
			\State $\rho^{(n)}(w_i) \gets \max\{\rho^{(n)}(w_i),(5r_{u_i})^\alpha\}$
			\EndFor
	\end{algorithmic}
\end{algorithm}


\begin{figure}[htb!]
	\includegraphics[width=0.8\columnwidth]{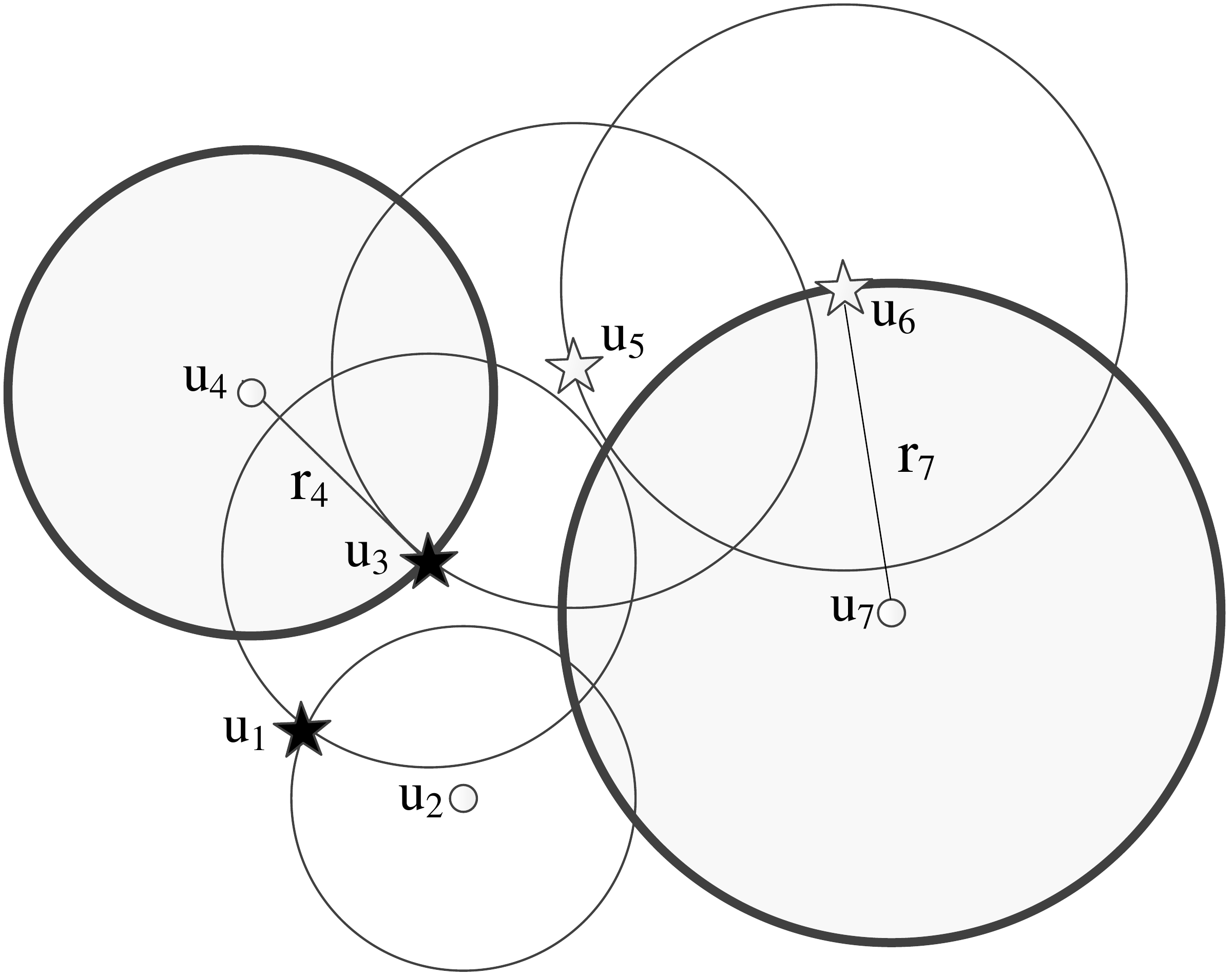}
	\centering
	\caption{Finding transmitting nodes of non-cooperative broadcast algorithm.}
	\label{fig:Samp3}
\end{figure}
\noindent
In the sample network, the transmitting set of our constructed non-cooperative algorithm is
\[
\mathscr{A}^{(n)} = \{u_1, u_3\},
\]
and the transmission powers are
\begin{equation*}
\rho^{(n)}(u_1) = (5r_{7})^\alpha, \;\;\;\;\;\rho^{(n)}(u_3) = (5r_{4})^\alpha.
\end{equation*}
Note that the constructed non-cooperative algorithm is not necessarily efficient in terms of power consumption since it is designed to alleviate the obstacles in the analysis of cooperation gain for 2D wireless networks.

\begin{remark}
\label{rem:ptotCons}
	\normalfont
	By Line 3 of Algorithm~\ref{alg:1}, we have 
	\[
		P_{tot}^{(n)} \leq \sum_{D_{u_i} \in \mathcal{I}}^{} (5r_{u_i})^\alpha,
	\]
	where $P_{tot}^{(n)}$ is the power consumption of the constructed non-cooperative algorithm.
\end{remark}

\begin{remark}
	\normalfont
	It can be inferred from Algorithm~\ref{alg:1} that $\mathscr{A}^{(n)} \subseteq \mathscr{A}^{(c)}$, since
		\[
		\forall 1 \leq i \leq |\mathcal{I}|: ~ w_i \in \mathscr{A}^{(c)}.
		\]
	i.e, every node that transmits in the constructed non-cooperative algorithm, also transmits in the given cooperative algorithm.
\end{remark}	
	With the power assignment summarized in Algorithm~\ref{alg:1}, the only remaining task to fully define our constructed non-cooperative broadcast algorithm	is to establish a transmission order among the transmitting nodes $\mathscr{A}^{(n)}$.
	To this end, in the constructed cooperative algorithm, we simply follow the same transmission order as in the non-cooperative algorithm, that is
	\begin{equation}
		\label{equ:order}
		\forall u,v \in \mathscr{A}^{(n)}:~ u <^{(c)} v \Rightarrow u <^{(n)} v,
	\end{equation}
	or equivalently
	\begin{equation}
		\label{equ:order-m}
		\forall u \in \mathscr{A}^{(n)}:~ u <^{(c)} v \Rightarrow u <^{(n)} v.
	\end{equation}

\begin{theorem}
	\label{the:FD}
	The constructed broadcast algorithm $(\mathscr{A}^{(n)},\rho^{(n)})$ is a non-cooperative broadcast algorithm.
		(i.e., the message is fully delivered to all the nodes in the network.)
\end{theorem}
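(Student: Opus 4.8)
The plan is to verify condition~(\ref{cond:non-coop}) one node at a time: fix an arbitrary $u\in\mathscr{U}\backslash\{s\}$ and exhibit a transmitter $w$ of the constructed algorithm with $w<^{(n)}u$ and $\rho^{(n)}(w)\geq d_{u,w}^{\alpha}$ (equivalently $\rho^{(n)}(w)\,h_{u,w}\geq 1=P_{th}$). The witness $w$ will turn out to be one of the nodes $w_i$ attached to a disk $D_{u_i}\in\mathcal{I}$, and the whole proof hinges on choosing the \emph{right} disk $D_{u_i}$.

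The first ingredient is a purely combinatorial claim about Step~\ref{Step1}: \emph{some disk $D_{u_i}\in\mathcal{I}$ satisfies $D_u\cap D_{u_i}\neq\emptyset$ and $r_{u_i}\geq r_u$.} If $D_u\in\mathcal{I}$ we take $D_{u_i}=D_u$. Otherwise the stopping condition of Step~\ref{Step1} forces $D_u$ to overlap at least one disk of $\mathcal{I}$; let $D_{u_i}$ be whichever of these was added to $\mathcal{I}$ first. At the iteration in which $D_{u_i}$ was selected, $D_u$ was still in $\mathcal{D}\setminus\mathcal{I}$, and since $D_{u_i}$ is the earliest overlapper, $D_u$ was disjoint from every disk already in $\mathcal{I}$, so $D_u$ was a legal candidate in that iteration; the "pick a largest candidate" rule then gives $r_{u_i}\geq r_u$. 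I expect this claim — and, more precisely, extracting from it the fact that $\mathcal{R}(u)$ lands in the set $\mathcal{S}_i$ of~(\ref{equ:S_i}) — to be the crux of the argument, since this is exactly where the design of Step~\ref{Step1} (greedily largest, non-overlapping) and of Step~\ref{Step2} is exploited.

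Granting the claim, take $v=u$ in~(\ref{equ:S_i}): $D_u\cap D_{u_i}\neq\emptyset$, $r_u\leq r_{u_i}$ and $D_u\in\mathcal{D}$, so $\mathcal{R}(u)\in\mathcal{S}_i$, whence by~(\ref{equ:w_i}) we get $w_i\leq^{(c)}\mathcal{R}(u)$; combined with $\mathcal{R}(u)<^{(c)}u$ (which holds by~(\ref{equ:resp})) this gives $w_i<^{(c)}u$, and since $w_i\in\mathscr{A}^{(n)}$, rule~(\ref{equ:order-m}) applied with transmitting node $w_i$ yields $w_i<^{(n)}u$. This is precisely the reason $w_i$ is defined as the $<^{(c)}$-minimum of $\mathcal{S}_i$ rather than an arbitrary element of it: a different choice could transmit \emph{after} $u$. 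For the power bound, write $w_i=\mathcal{R}(v)$ for some $v$ as in~(\ref{equ:S_i}), i.e. $D_v\cap D_{u_i}\neq\emptyset$ and $r_v\leq r_{u_i}$. Three triangle inequalities then give $d_{u,w_i}\leq 5r_{u_i}$: namely $d_{u,u_i}\leq r_u+r_{u_i}\leq 2r_{u_i}$ and $d_{u_i,v}\leq r_v+r_{u_i}\leq 2r_{u_i}$ (the two disk pairs meet), and $d_{v,w_i}=r_v\leq r_{u_i}$ by the definition of $r_v$.

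Finally, in the iteration of Algorithm~\ref{alg:1} that processes $D_{u_i}$ the value $\rho^{(n)}(w_i)$ is raised to at least $(5r_{u_i})^{\alpha}$ and is never decreased afterwards, so $\rho^{(n)}(w_i)\geq(5r_{u_i})^{\alpha}\geq d_{u,w_i}^{\alpha}$, i.e. $\rho^{(n)}(w_i)\,h_{u,w_i}\geq 1$. Since $u\in\mathscr{U}\backslash\{s\}$ was arbitrary, $(\mathscr{A}^{(n)},\rho^{(n)})$ satisfies~(\ref{cond:non-coop}) and is therefore a non-cooperative broadcast algorithm. Along the way one should dispatch the minor well-definedness points: $\mathcal{R}(u)$ exists for every $u\neq s$ because $s$ precedes all other nodes under $<^{(c)}$, and each $\mathcal{S}_i$ is nonempty since it contains $\mathcal{R}(u_i)$ (take $v=u_i$), so each $w_i$ is well defined.
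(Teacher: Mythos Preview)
Your proof is correct and follows essentially the same approach as the paper's: identify a disk $D_{u_i}\in\mathcal{I}$ with $D_u\cap D_{u_i}\neq\emptyset$ and $r_{u_i}\geq r_u$, conclude $\mathcal{R}(u)\in\mathcal{S}_i$ so that $w_i<^{(c)}u$, and bound $d_{u,w_i}\leq 5r_{u_i}\leq \rho^{(n)}(w_i)^{1/\alpha}$ via the triangle inequality. The paper wraps this in a contradiction (take the $<^{(n)}$-minimal node $f$ that fails to receive) and splits into two cases $D_f\in\mathcal{I}$ versus $D_f\notin\mathcal{I}$ with bounds $3r_{u_i}$ and $5r_{u_i}$ respectively, whereas you verify condition~(\ref{cond:non-coop}) directly and unify the two cases under the single bound $5r_{u_i}$; the underlying argument is the same.
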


\begin{proof}
  Please see Appendix~\ref{app:FD} \\
\end{proof}

So far, we have shown how to construct a non-cooperative broadcast algorithm from any cooperative broadcast algorithm.
Next, we compare the total power consumption of the cooperative algorithm with that of the constructed non-cooperative algorithm to derive upper bounds on $\Gtot$ 
for different path loss exponents~$\alpha$.

\subsection{Upper Bound Analysis for $\alpha>2$}
\label{Upper-gt-2}
In the previous section, we presented a conversion method that gets an arbitrary cooperative broadcast algorithm as input
and returns a non-cooperative broadcast algorithm as output.
In the process of this conversion, a set of non-overlapping disks 
$\mathcal{I} = \{D_{u_1},D_{u_2},...,D_{u_{|\mathcal{I}|}}\} $
was constructed.
In Remark~\ref{rem:ptotCons}, it was shown  that the total power consumption of the constructed non-cooperative algorithm, 
$P_{tot}^{(n)}$, is at most
\begin{equation}
\label{equ:ptotCons}
  P_{tot}^{(n)} \leq \sum_{D_{u_i} \in \mathcal{I}}^{} (5r_{u_i})^\alpha,
\end{equation}
where $r_{u_i}$ denotes the radius of the disk $D_{u_i}$ in the set  $\mathcal{I}$.

Our goal here is to analyze the ratio 
\[
  \frac{\sum_{D_{u_i} \in \mathcal{I}}^{} (5r_{u_i})^\alpha}{P_{tot}^{(c)}},
\]
which by (\ref{equ:ptotCons}) is an upper bound on $\frac{P_{tot}^{(n)}}{P_{tot}^{(c)}}$,
hence an upper bound on the gain of cooperative broadcast.
First, we assume that $\alpha>2$, which is the typical case in practice.
Later, in Section~\ref{Upper-eq-2}, we study the case $\alpha=2$, which corresponds to propagation in free-space.

Consider a 2D network, and let $(\mathscr{A}^{(c)},\rho^{(c)})$ be any arbitrary cooperative broadcast algorithm.
Let us call a disk in the network ``bright''\footnote{A more accurate definition of ``bright disk'' will be given later in Definition~\ref{def:BD}.}
if the sum of the received powers at any point in the disk is above the 
decoding threshold.
Let $\mathcal{I}_{\gamma}$ denote the set of all disks in $\mathcal{I}$ contracted by a factor $\gamma$, where $\gamma > 1$ is a real number. 
In other words, for every disk $D_{u_i}$ in $\mathcal{I}$, there is a disk $D'_{u_i}$ in $\mathcal{I}_{\gamma}$
that has the same center as $D_{u_i}$ but $1/\gamma$ fraction of its radius.

Before delving into the complicated details of analyzing the upper bound, 
let us provide a roadmap to our analysis.

\noindent
\textbf{Roadmap:}
\begin{enumerate}
  \item We multiply all transmission powers in the cooperative broadcast algorithm $(\mathscr{A}^{(c)},\rho^{(c)})$  by a constant number (to be set later).
  \item We show that, now,  every disk in $\mathcal{I}_{\gamma}$ is bright. 
  \item In a carefully crafted iterative process, we transfer transmission powers from nodes in $\mathscr{A}^{(c)}$ 
    to the centers of the disks in $\mathcal{I}_{\gamma}$.
  \item We prove that after all these power transfers, every disk in $\mathcal{I}_{\gamma}$ will remain bright.
  \item 
    We show that if the transmission sources are placed at the center of disks, then to make all the disks in $\mathcal{I}_{\gamma}$ bright, we need 
    $\theta(\sum_{i=1}^{|\mathcal{I}_{\gamma}|} {r'}_{u_i}^\alpha) $ total power, where ${r'}_{u_i}$, $1\leq i \leq |\mathcal{I}_{\gamma}|$ denote the radii  of disks in $\mathcal{I}_{\gamma}$.
    
\end{enumerate}

In the above process, the total transmission power of $(\mathscr{A}^{(c)},\rho^{(c)})$  changes only by a constant in the first step.
In the remaining steps, the total transmission power does not change; In Step 3, we only transfer some powers from one point to another point.
By the last item of the roadmap, the total transmission power is $\theta(\sum_{i=1}^{|\mathcal{I}_{\gamma}|} {r'}_{u_i}^\alpha)$.
Therefore, by~(\ref{equ:ptotCons}) and the fact that ${r'}_{u_i}=\gamma r_{u_i}$ we get $\frac{P_{tot}^{(n)}}{P_{tot}^{(c)}}=\mathcal{O}(1)$.



Now, we explain each step of the roadmap in details.
In the first step of the roadmap, we multiplying the transmission powers of $(\mathscr{A}^{(c)},\rho^{(c)})$  by $(1+1/\gamma)^{\alpha}$.
In other words, we set 
\[
	\forall u \in \mathscr{A}^{(c)}:~~~~ \rho^{(c)}(u)  \longleftarrow (1+1/\gamma)^{\alpha} \times \rho^{(c)}(u).	
\] 
Note that $(1+1/\gamma)^{\alpha}$ is constant (irrespective of the value of~$\gamma$) because $1+1/\gamma < 2$ as $\gamma >1$.

The next step, as stated in the roadmap, is to prove that for any $\gamma>1$, every disk in $\mathcal{I}_{\gamma}$ is ``bright''.
Before doing so, let us precisely define the term ``bright disk''.
\begin{definition}[Bright Disk]
\label{def:BD}
  Let $D'_{u_i}$ be a disk  in $\mathcal{I}_{\gamma}$, where $\gamma>1$.
  We say that the disk $D'_{u_i}$ is \emph{bright} with respect to a set of transmitters $\mathscr{T}$ 
  if at any point in the disk $D'_{u_i}$, the sum of received powers from transmitters in $\mathscr{T}$  
  is above the decoding threshold.
  The term ``bright'' comes from viewing the transmitters as sources of light (e.g., light bulbs).
\end{definition}

Recall that $\mathscr{A}^{(c)}$ denotes the set of transmitters.
For any disk $D_{u_i} \in \mathcal{I}$, 
let $\mathscr{T}_{u_i}$ denote the set of nodes in $\mathscr{A}^{(c)}$ that are not inside the disk $D_{u_i}$.
By the definition of $D_{u_i}$, every node that transmits before $u_i$ is not inside the disk $D_{u_i}$, thus it will be in the set $\mathscr{T}_{u_i}$.


\begin{prop}
  \label{prp:bri}
  Every disk $D'_{u_i}\in \mathcal{I}_{\gamma}$ is bright with respect to $\mathscr{T}_{u_i}$.
\end{prop}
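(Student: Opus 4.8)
The plan is to derive brightness directly from the rescaled cooperative condition at $u_i$ by a single application of the triangle inequality. First I would fix an arbitrary disk $D'_{u_i}\in\mathcal{I}_{\gamma}$, an arbitrary point $p$ inside it, and an arbitrary transmitter $u_j\in\mathscr{T}_{u_i}$. Because $p$ lies in the contracted disk, $d_{p,u_i}\le r_{u_i}/\gamma$; and because $u_j\in\mathscr{T}_{u_i}$ is not in the interior of $D_{u_i}$, we have $d_{u_i,u_j}\ge r_{u_i}$. Feeding these two facts into the triangle inequality $d_{p,u_j}\le d_{p,u_i}+d_{u_i,u_j}\le r_{u_i}/\gamma+d_{u_i,u_j}$ and using $r_{u_i}\le d_{u_i,u_j}$ gives $d_{p,u_j}\le(1+1/\gamma)\,d_{u_i,u_j}$, hence the path gain loss is controlled: $d_{p,u_j}^{-\alpha}\ge(1+1/\gamma)^{-\alpha}\,d_{u_i,u_j}^{-\alpha}$.

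Next I would multiply this inequality by the (already rescaled) power $\rho^{(c)}(u_j)$ and sum over all $u_j\in\mathscr{T}_{u_i}$. The key observation is that every node transmitting before $u_i$ lies at distance at least $r_{u_i}=d_{u_i,\mathcal{R}(u_i)}$ from $u_i$ by the minimality in the definition~(\ref{equ:resp}) of $\mathcal{R}$, so it is not in the interior of $D_{u_i}$ and therefore belongs to $\mathscr{T}_{u_i}$; since all summands are non‑negative, the sum over $\mathscr{T}_{u_i}$ dominates the sum over $\{u_j<^{(c)}u_i\}$. Thus the received power at $p$ from $\mathscr{T}_{u_i}$ is at least $(1+1/\gamma)^{-\alpha}\sum_{u_j<^{(c)}u_i}\rho^{(c)}(u_j)h_{u_i,u_j}$. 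After Step~1 of the roadmap, where every power was multiplied by $(1+1/\gamma)^{\alpha}$, the cooperative condition~(\ref{cond:coop}) at $u_i$ reads $\sum_{u_j<^{(c)}u_i}\rho^{(c)}(u_j)h_{u_i,u_j}\ge(1+1/\gamma)^{\alpha}$; this is legitimate because $u_i\ne s$, as every disk of $\mathcal{I}\subseteq\mathcal{D}$ is centred at a non‑source node. Multiplying the two estimates, the received power at $p$ from $\mathscr{T}_{u_i}$ is at least $(1+1/\gamma)^{-\alpha}\cdot(1+1/\gamma)^{\alpha}=1=P_{th}$, which is precisely the statement that $D'_{u_i}$ is bright with respect to $\mathscr{T}_{u_i}$.

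I do not expect a genuine obstacle here; the argument is a short geometric estimate that is uniform in $\gamma>1$ and in the index $i$, so it settles the claim for all disks of $\mathcal{I}_{\gamma}$ at once. The two points that need care are: (i) reading ``not inside $D_{u_i}$'' with the open‑interior convention, so that the transmitters preceding $u_i$ — which sit on the boundary or outside — are genuinely captured by $\mathscr{T}_{u_i}$ (this is exactly the remark made right before the proposition); and (ii) checking that the scaling constant $(1+1/\gamma)^{\alpha}$ chosen in the roadmap is exactly the factor lost in path gain when passing from $u_i$ to a point of the shrunk disk, so that the two factors cancel and nothing is wasted.
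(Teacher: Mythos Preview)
Your proposal is correct and follows essentially the same approach as the paper: a triangle-inequality estimate showing that for any $p\in D'_{u_i}$ and any $u_j\in\mathscr{T}_{u_i}$ one has $d_{p,u_j}\le(1+1/\gamma)\,d_{u_i,u_j}$, combined with the (rescaled) cooperative decoding condition at the center $u_i$. Your write-up is in fact slightly cleaner than the paper's, since you make explicit the containment $\{u_j<^{(c)}u_i\}\subseteq\mathscr{T}_{u_i}$ and the non-negativity argument that lets you pass between the two index sets.
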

\begin{proof}
  Please see Appendix~\ref{app:bri}.
\end{proof}

Next step, as mentioned in the roadmap (item 3), is to transfer powers from transmitting nodes in $\mathscr{A}^{(c)}$ 
to nodes at the center of the disks in $\mathcal{I}$.
We perform these power transfers iteratively as follows.


\vspace{0.2in}
\noindent
\textbf{The Iterative Power Transfer Procedure:}
  \begin{enumerate}
    \item \textbf{Offering powers:}\\
      For every $v \in \mathscr{A}^{(c)}$, let 
      \[
    		\mathcal{C}_v = \{u_i | D_{u_i} \in \mathcal{I},~~  v \in \mathscr{T}_{u_i}\}.
    	  \]  

      In the $j$th iteration, $j\geq 1$, each node $v \in \mathscr{A}^{(c)}$ offers all its transmission power remaining from the previous iterations to
      its $j$th closest node in $\mathcal{C}_v$.
      A node in $\mathcal{C}_v$ may be offered powers from multiple nodes in each iteration, and 
      may take any portion of the powers offered to it as will be explained in Step 2 (taking powers).
    \item \textbf{Taking powers:}\\
      Imagine that each node $u_i$ at the center of disk $D_{u_i}\in\mathcal{I}$ has a bucket to be filled by the offered powers.
      Initially all buckets are empty, and the size of the bucket assigned to node $u_i$ is set to $(\frac{r_{u_i}}{2})^\alpha$.
      In each iteration, $u_i$  is offered powers from a subset of nodes in $\mathscr{A}^{(c)}$,
      according to Step~1.
      In response, $u_i$ takes as much as those offered powers (from arbitrary offerers, and of arbitrary portions) to make its bucket full.
      If $u_i$'s bucket becomes full, $u_i$ will return any extra power offered to its offerer,
      and will decline any offer in the next iterations.
  \end{enumerate}
  
For every $v \in \mathscr{A}^{(c)}$, the size $|\mathcal{C}_v| \leq |\mathcal{I}|$. 
Therefore, a node $v \in \mathscr{A}^{(c)}$ can offer its power in at most $|\mathcal{I}|$ iterations, 
hence the above power transfer procedure will eventually terminate.
Clearly, when the power transfer procedure terminates, we will have one of the following two scenarios 
1) all the buckets are full; 2) there is at leas one bucket that is not full.
We first cover the first scenario, which is the simpler scenario in our analysis:

\bigskip
\noindent
\textbf{Scenario 1:}\\
  Note that in the power transfer procedure, powers from nodes in $\mathscr{A}^{(c)}$ are transferred into buckets.
  If all buckets are full, the total transmission powers at nodes in  $\mathscr{A}^{(c)}$ (i.e., $3^{\alpha} P_{tot}^{c}$) must be 
  more that the sum of the sizes of all buckets, that is
  \[
    3^{\alpha} \cdot P_{tot}^{c} \geq \sum_{D_{u_i}\in \mathcal{I}} 
    \underbrace{\left(\frac{r_{u_i}}{2}\right)^{\alpha}}_{\text{size of } u_i\text{'s bucket}}.
  \]
  Therefore,
  \[
    \frac{ \sum_{D_{u_i}\in \mathcal{I}} r_{u_i}^\alpha}{P_{tot}^{c}} \leq 6^\alpha,
  \]
  Thus, by (\ref{equ:ptotCons}), the cooperative broadcast gain is at most $30^\alpha$, which is constant, 
  as $\alpha$ is a constant.

\bigskip
\noindent  
\textbf{Scenario 2:}\\
  In this scenario, all the transmission powers must have been transferred to the center of disks in $\mathcal{I}_{\gamma}$, 
  and each center $u_i$ has at most $\left( \frac{{r'}_{u_i}}{2}\right) ^\alpha$ power, which is the size of its bucket.
  To study this scenario, we analyze the following related problem:

\begin{quote}
  \textbf{Brightening Non-overlapping Disks:}\\
  One input/parameter of this problem is a set of $n$ no-overlapping disks $D_1, D_2, \ldots, D_n$, with radii $r_1, r_2, \ldots, r_n$, and centers $c_1, c_2, \ldots, c_n$, respectively. 
  The second parameter of the problem is a real number  $\gamma >1$.
  
  Let $\dist(x,y)$ denote the distance between two points $x$ and $y$.
  Suppose that $\dist(c_i, c_j)\geq \gamma (r_i+r_j)$, for every $1\leq i,j\leq n$, $i\neq j$.
  Therefore, every two different disks $D_i$ and $D_j$ are not only non-overlapping, but also separated by a gap set by the parameter $\gamma$. 
   
  Each center $c_i$, $1\leq i \leq n$, is assigned a transmission power $p_i$, where $0\leq p_i\leq r_i^\alpha$.
  Such a power assignment is called ``valid'' if it assures every disk is bright with respected to the set of all transmitters $c_j$, $1\leq j \leq n$.
  The brightening non-overlapping disk problem is to find a valid power assignment with minimum total power, 
  that is to find a valid power assignment with minimum $\sum_{i=1}^{n}p_i$.
\end{quote}

A simple power assignment valid for any parameter in the  brightening non-overlapping disks problem is $p_i=r_i^\alpha$.
The next theorem proves that this simple power assignment is asymptotically optimal when $\alpha>2$, and $\gamma$ is large enough.

\begin{theorem}
\label{thm:valid}
    In the brightening non-overlapping problem with parameter $\gamma \geq g(\alpha)$, for any valid power assignment $p_i$, $1\leq i \leq n$, we have
  \[
   \sum_{i=1}^{n} p_i \geq \beta \cdot \sum_{i=1}^{n} r_i^\alpha,
  \]
  where
  \[
    \beta = 1- \left( \left( \frac{\gamma}{\gamma-1} \right)^\alpha +1 \right) \left( \frac{1}{(\alpha-2) \cdot \gamma^{\alpha-2}}\cdot\left(\frac{2\gamma+1}{2\gamma-1}\right)^\alpha \right)
  \]
\end{theorem}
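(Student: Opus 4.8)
\emph{Plan.} The idea is to use, for each disk $D_i$, only the brightness requirement at \emph{one} point on its boundary; this yields a lower bound on $p_i$ involving the powers of the other transmitters, and then the $n$ such inequalities are added together so that the mutual ``interference'' terms cancel up to a small multiplicative loss. Concretely, fix $i$ and pick a point $x$ with $\dist(x,c_i)=r_i$. Validity (with $P_{th}=1$) gives $1\le \frac{p_i}{r_i^\alpha}+\sum_{k\neq i}\frac{p_k}{\dist(x,c_k)^\alpha}$, and since $\dist(x,c_k)\ge \dist(c_i,c_k)-r_i$ we get
\[
  p_i\ \ge\ r_i^\alpha\Bigl(1-\sum_{k:\,r_k\le r_i}\tfrac{r_k^\alpha}{(\dist(c_i,c_k)-r_i)^\alpha}-\sum_{k:\,r_k> r_i}\tfrac{p_k}{(\dist(c_i,c_k)-r_i)^\alpha}\Bigr),
\]
where ties $r_k=r_i$ are broken by index so the two index sets are complementary, and where $p_k\le r_k^\alpha$ has been used in the first inner sum. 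Call the two inner sums $I_i^{s}$ (``small'') and $I_i^{b}$ (``big'').

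\emph{The geometric lemma (the heart of the proof).} I would prove that for any center $c$, any $\rho>0$, and any family of pairwise $\gamma$-separated disks each also $\gamma$-separated from a disk of radius $\rho$ centred at $c$,
\[
  \sum_{k}\tfrac{r_k^\alpha}{(\dist(c,c_k)-\rho)^\alpha}\le C_s,
  \qquad
  \sum_{k:\,r_k\le\rho}\tfrac{r_k^\alpha}{(\dist(c,c_k)-r_k)^\alpha}\le C_b,
\]
where $C_b$ is the quantity $\frac{1}{(\alpha-2)\gamma^{\alpha-2}}\bigl(\frac{2\gamma+1}{2\gamma-1}\bigr)^\alpha$ occurring in $1-\beta$ and $C_s=\bigl(\frac{\gamma}{\gamma-1}\bigr)^\alpha C_b$, so that $C_s+C_b=1-\beta$. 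The proof uses $\dist(c,c_k)\ge\gamma(r_k+\rho)$ to obtain $\dist(c,c_k)-\rho\ge\gamma r_k$ and $D_k\subseteq\{x:\dist(x,c)\ge\gamma\rho\}$; the distance-distortion estimates $\frac{\dist(c,c_k)}{\dist(c,c_k)-\rho}\le\frac{\gamma}{\gamma-1}$ and $\frac{\dist(c,c_k)+r_k}{\dist(c,c_k)-r_k}\le\frac{2\gamma+1}{2\gamma-1}$ (the latter because $r_k\le\frac{\dist(c,c_k)}{2\gamma}$) explain the two displayed factors; then bounding $r_k^{\alpha-2}\le\rho^{\alpha-2}$, comparing $r_k^2/\dist(c,c_k)^\alpha$ with $\int_{D_k}\dist(x,c)^{-\alpha}\,dx$, and using that the $D_k$ are disjoint and lie outside the ball of radius $\gamma\rho$ about $c$, reduces everything to $\int_{\dist(x,c)\ge\gamma\rho}\dist(x,c)^{-\alpha}\,dx$, which in the plane is finite \emph{exactly because $\alpha>2$} and equals a constant times $(\gamma\rho)^{2-\alpha}$, cancelling the $\rho^{\alpha-2}$ in front. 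Applying the first inequality with $c=c_i,\rho=r_i$ bounds $I_i^{s}\le C_s$ for all $i$.

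\emph{Summation.} From the inequality in the first paragraph, $p_i\ge(1-C_s)r_i^\alpha-r_i^\alpha I_i^{b}$. Summing over $i$ and swapping the order of summation,
\[
  \sum_i r_i^\alpha I_i^{b}=\sum_k p_k\!\!\sum_{i:\,r_i<r_k}\!\!\tfrac{r_i^\alpha}{(\dist(c_i,c_k)-r_i)^\alpha}\ \le\ C_b\sum_k p_k ,
\]
since the inner sum is precisely the second inequality of the lemma centred at $c_k$ with $\rho=r_k$. Hence $(1+C_b)\sum_i p_i\ge(1-C_s)\sum_i r_i^\alpha$, and because $\gamma\ge g(\alpha)$ forces $C_b\le 1$ (in fact $C_s,C_b\to0$ as $\gamma\to\infty$), this gives $\sum_i p_i\ge\frac{1-C_s}{1+C_b}\sum_i r_i^\alpha\ge(1-C_s-C_b)\sum_i r_i^\alpha=\beta\sum_i r_i^\alpha$.

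\emph{Main obstacle.} The tempting move — bounding the big-disk interference $I_i^{b}$ term by term — is impossible: a tiny disk can be made bright ``for free'' by an arbitrarily long geometric chain of ever-larger, well-separated disks around it, so $I_i^{b}$ can exceed $1$. The whole point is that $I_i^{b}$ is never controlled in isolation; the same sum, read from the larger disk's side, is tame and gets absorbed into $C_b\sum_k p_k$. The real work is the lemma in the second paragraph: isolating the convergent radial integral $\int_{\dist(x,c)\ge\gamma\rho}\dist(x,c)^{-\alpha}\,dx$ and keeping track of the $\frac{\gamma}{\gamma-1}$ and $\frac{2\gamma+1}{2\gamma-1}$ distortion factors to land on the stated constant — and it is exactly the finiteness of that integral (hence the whole approach) that fails when $\alpha=2$, which is why that case requires a separate, logarithmic, treatment.
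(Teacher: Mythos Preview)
Your proof is correct and lands on the same constant $\beta=1-C_s-C_b$ (your $C_b,C_s$ are exactly the paper's $\beta_1,\beta_2$), but the architecture differs from the paper's. The paper argues by induction on $n$: it peels off the \emph{largest} disk $D_n$, proves $p_n\ge(1-\beta_2)r_n^\alpha$ via the boundary-point trick applied only to $D_n$, then invokes a ``transfer'' lemma (stated but not proved there) asserting that $p_i+p_n\bigl(\tfrac{r_i}{\dist(c_n,c_i)-r_i}\bigr)^\alpha$, $1\le i\le n-1$, is a valid assignment for the remaining $n-1$ disks, and combines the induction hypothesis with the $\beta_1$-bound. You instead write the boundary-point inequality for \emph{every} disk simultaneously, split the interference at each $D_i$ into smaller and larger neighbours, sum over $i$, and swap the order of summation so that the big-disk contribution is read from the big disk's side and absorbed as $C_b\sum_k p_k$. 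This is more direct, avoids the transfer lemma and the induction entirely, and in fact first yields the slightly sharper factor $\tfrac{1-C_s}{1+C_b}$ before you relax it to $1-C_s-C_b$. The geometric core --- comparing $\sum_k r_k^\alpha/\dist(c,c_k)^\alpha$ to the convergent planar integral $\int_{|x|\ge\gamma\rho}|x|^{-\alpha}\,dx$ and tracking the $\tfrac{\gamma}{\gamma-1}$ and $\tfrac{2\gamma+1}{2\gamma-1}$ distortion factors --- is identical in the two proofs. One small imprecision worth fixing: the first inequality of your geometric lemma, as stated without the restriction $r_k\le\rho$, is actually false (many huge, far-apart disks each contribute a term of order $\gamma^{-\alpha}$, so the sum can be made arbitrarily large); but you only ever apply it to $I_i^s$, where $r_k\le r_i=\rho$ already holds and your sketch (which uses $r_k^{\alpha-2}\le\rho^{\alpha-2}$) goes through, so the argument itself is unaffected.
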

\begin{proof}
  Please see Appendix~\ref{app:valid}.
\end{proof}

The above theorem shows that we need at least the total power of 
$\beta\cdot  \sum_{D_{u_i}\in \mathcal{I}} \left(\frac{r_{u_i}}{\gamma}\right)^\alpha$ to brighten all disks in $\mathcal{I}_{\gamma}$.
Therefore, following the same argument given in Scenario 1, we get that the cooperation gain is $\mathcal{O}(1)$.

\begin{example}
Suppose the path loss exponent $\alpha=3$.
Let us set $\gamma=20$.
Then, using the definition of $\beta$ in the statement of Theorem~\ref{thm:valid}, we get $\beta\geq \frac{1}{3}$.
This implies that, we need at least $\frac{1}{3}\cdot  \sum_{D_{u_i}\in \mathcal{I}} \left(\frac{r_{u_i}}{20}\right)^3$ total power to brighten all the disks in $\mathcal{I}_{20}$.
On the other hand, by Lemma~\ref{prp:bri}, we know that all the disks in $\mathcal{I}_{20}$ can be brightened 
using the total power of $(1+1/20)^3 \cdot P_{tot}^c$.
Thus, we must have
  \[
    (1+1/20)^3 \cdot P_{tot}^c \geq \frac{1}{3}\cdot  \sum_{D_{u_i}\in \mathcal{I}} \left(\frac{r_{u_i}}{20}\right)^3
  \]
Therefore, irrespective of value of $n$
  \[
    \frac{ \sum_{D_{u_i}\in \mathcal{I}} r_{u_i}^\alpha}{P_{tot}^{c}} \leq 3.48\cdot20^3,
  \]
 Thus, by (\ref{equ:ptotCons}), the cooperation gain is constant.
\end{example}

\subsection{Upper Bound Analysis for $\alpha=2$}
\label{Upper-eq-2}

Next theorem (Theorem~\ref{the:consRatio}) proves that the total power consumption of the constructed non-cooperative algorithm is 
at most $\mathcal{O}(\log n)$ times that of the cooperative algorithm used for the conversion.
This implies that the cooperation gain, $G_{tot}$, is in $\mathcal{O}(\log n)$ 
because any given cooperative algorithm can be converted to a non-cooperative algorithm with at most a factor of $\mathcal{O}(\log n)$ increase in the total transmission power.

\begin{theorem}
	\label{the:consRatio}
	Let $(\mathscr{A}^{(c)}, \rho^{(c)})$ be a cooperative broadcast algorithm, and $(\mathscr{A}^{(n)}, \rho^{(n)})$ be the non-cooperative broadcast algorithm constructed from it. 
	We have
	\[
	G_{tot} = \frac{\sum_{u_i \in \mathscr{U}}^{}\rho^{(n)}(u_i)}{\sum_{u_i \in \mathscr{U}}^{}\rho^{(c)}(u_i)} \in \mathcal{O}(\log n).
	\]\end{theorem}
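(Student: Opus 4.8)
The plan is to reuse the conversion method verbatim and only change how the disk set $\mathcal{I}$ is accounted for. By Remark~\ref{rem:ptotCons} specialized to $\alpha=2$, the constructed non-cooperative algorithm obeys $P_{tot}^{(n)}\le\sum_{D_{u_i}\in\mathcal{I}}(5r_{u_i})^2=25\sum_{D_{u_i}\in\mathcal{I}}r_{u_i}^2$, so
\[
	\Gtot=\frac{P_{tot}^{(n)}}{P_{tot}^{(c)}}\le 25\cdot\frac{\sum_{D_{u_i}\in\mathcal{I}}r_{u_i}^2}{P_{tot}^{(c)}},
\]
and it suffices to prove the single inequality $\sum_{D_{u_i}\in\mathcal{I}}r_{u_i}^2=\mathcal{O}(\log n)\cdot P_{tot}^{(c)}$. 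Note that the lower bound of Theorem~\ref{thm:valid}, which drove the $\alpha>2$ argument, is vacuous here (the constant $\beta$ is non-positive), so a genuinely different estimate is needed.

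The first step is a charging argument. Fix a disk $D_{u_i}\in\mathcal{I}$ with center $u_i$. Since $\mathcal{R}(u_i)$ is the closest node transmitting before $u_i$, every $v\in\mathscr{A}^{(c)}$ with $v<^{(c)}u_i$ lies at distance at least $r_{u_i}$ from $u_i$, hence outside $D_{u_i}$, hence in $\mathscr{T}_{u_i}$; and the cooperative condition~(\ref{cond:coop}) at $u_i$ already gives $\sum_{v<^{(c)}u_i}\rho^{(c)}(v)\,d_{u_i,v}^{-2}\ge P_{th}=1$. Multiplying by $r_{u_i}^2$ and enlarging the index set to all of $\mathscr{T}_{u_i}$ (all terms are nonnegative),
\[
	r_{u_i}^2\le\sum_{v\in\mathscr{T}_{u_i}}\rho^{(c)}(v)\,\frac{r_{u_i}^2}{d_{u_i,v}^2}.
\]
Summing over $D_{u_i}\in\mathcal{I}$ and exchanging the order of summation yields
\[
	\sum_{D_{u_i}\in\mathcal{I}}r_{u_i}^2\le\sum_{v\in\mathscr{A}^{(c)}}\rho^{(c)}(v)\cdot\Phi(v),\qquad\Phi(v):=\sum_{D_{u_i}\in\mathcal{I},\;v\in\mathscr{T}_{u_i}}\frac{r_{u_i}^2}{d_{u_i,v}^2}.
\]
Hence the whole theorem reduces to the purely geometric claim: $\Phi(v)=\mathcal{O}(\log n)$ for every node $v$.

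To bound $\Phi(v)$, fix $v$. The disks contributing to $\Phi(v)$ are pairwise non-overlapping (they lie in $\mathcal{I}$), and $v$ lies outside each of them, so $d_{u_i,v}>r_{u_i}$. Let $\delta$ be the smallest value of $d_{u_i,v}$ appearing in the sum, and partition the contributing disks into dyadic shells $S_k=\{\,i:2^{k-1}\delta\le d_{u_i,v}<2^{k}\delta\,\}$, $k\ge1$. For $i\in S_k$ we have $r_{u_i}<d_{u_i,v}<2^{k}\delta$, so all the (disjoint) disks of $S_k$ lie inside the disk of radius $2^{k+1}\delta$ centered at $v$; comparing areas gives $\sum_{i\in S_k}r_{u_i}^2=\mathcal{O}(4^{k}\delta^2)$, and since $d_{u_i,v}\ge2^{k-1}\delta$ on $S_k$, the shell contributes $\mathcal{O}(1)$ to $\Phi(v)$. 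It then remains to show that the number of non-empty shells is $\mathcal{O}(\log n)$; I expect this to be the crux, since a bare packing bound only limits the shell count by the logarithm of the ratio between the largest and smallest $d_{u_i,v}$ in the sum, and one must use additional structure --- that at most $n-1$ disks lie in $\mathcal{I}$, and that the radii $r_{u_i}$ and distances $d_{u_i,v}$ are all inter-node distances of the network --- to bring this down to $\mathcal{O}(\log n)$. This is the two-dimensional analogue of the elementary fact $\sum_{i=1}^{n}1/i=\Theta(\log n)$ invoked in the grid discussion, and it is precisely here that $\alpha=2$ parts ways with $\alpha>2$: for $\alpha>2$ the per-shell contribution would decay geometrically in $k$ and $\Phi(v)$ would be $\mathcal{O}(1)$, matching Theorem~\ref{thm:valid}.

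Combining, $\sum_{D_{u_i}\in\mathcal{I}}r_{u_i}^2\le\big(\max_v\Phi(v)\big)\sum_{v\in\mathscr{A}^{(c)}}\rho^{(c)}(v)=\mathcal{O}(\log n)\cdot P_{tot}^{(c)}$, whence $\Gtot=\mathcal{O}(\log n)$. Besides the shell-counting just described, the only other point needing care is the charging step: the received power attributed to a transmitter $v$ must be attributed only when $v$ genuinely lies outside the disk being charged, so that both the exchange of summation and the subsequent packing estimate are legitimate; the definitions of $\mathcal{R}$, $\mathcal{I}$ and $\mathscr{T}_{u_i}$ are tailored to make this automatic.
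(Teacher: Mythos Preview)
Your charging step and the dyadic-shell packing bound are both correct, but the argument breaks precisely where you flag it: the claim that the number of non-empty shells is $\mathcal{O}(\log n)$ is false, and the additional structure you list does not rescue it. Here is a concrete counterexample. Place the source $s$ at the origin and nodes $u_j$ at position $2^{j}-1$ on a line, $1\le j\le n-1$; let $s$ transmit with power~$1$ and each $u_j$ ($j\le n-2$) transmit with power $4^{j}$, in that order. This is a valid cooperative broadcast, $\mathcal{R}(u_j)=u_{j-1}$, $r_{u_j}=2^{j-1}$, consecutive disks overlap so $\mathcal{I}$ contains roughly every second $D_{u_j}$, and for $v=s$ every surviving term of $\Phi(s)$ equals $\bigl(2^{j-1}/(2^{j}-1)\bigr)^{2}\approx\tfrac14$. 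Hence $\Phi(s)=\Theta(n)$, and your final inequality $\sum r_{u_i}^2\le(\max_v\Phi(v))\,P_{tot}^{(c)}$ yields only an $\mathcal{O}(n)$ bound.

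What saves the theorem, and what a per-transmitter bound cannot see, is that the transmitters $v$ with large $\Phi(v)$ necessarily carry a tiny share of $P_{tot}^{(c)}$: in the example above $\rho^{(c)}(s)=1$ while $P_{tot}^{(c)}\asymp 4^{\,n-1}$, so $\rho^{(c)}(s)\Phi(s)$ is negligible. The paper exploits this through a \emph{global} estimate. It integrates the capped received-power field $\mathcal{F}(p)=\sum_{u}f_u(p)$ over $U_\mathcal{I}=\bigcup_{D\in\mathcal{I}}D$: from below $\mathcal{F}\ge\tfrac14$ on $U_\mathcal{I}$, giving $\iint_{U_\mathcal{I}}\mathcal{F}\ge\tfrac14\Delta_\mathcal{I}\ge\tfrac{\pi}{100}P_{tot}^{(n)}$; from above a symmetric-rearrangement bound gives $\iint_{U_\mathcal{I}}f_u\le\pi\rho^{(c)}(u)\bigl(1+\ln(\Delta_\mathcal{I}/(\pi\rho^{(c)}(u)))\bigr)$, and summing these over $u$ the aggregate logarithms are controlled by the entropy inequality $\sum_u\tfrac{\rho^{(c)}(u)}{P_{tot}^{(c)}}\ln\tfrac{P_{tot}^{(c)}}{\rho^{(c)}(u)}\le\ln n$. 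Comparing the two sides yields $G_{tot}=\mathcal{O}(\log n)$. Thus the $\log n$ originates from the entropy of the power profile $\{\rho^{(c)}(u)/P_{tot}^{(c)}\}$, not from any uniform bound on the number of dyadic scales visible to a single transmitter. If you want to salvage your route, you must bound the \emph{weighted} sum $\sum_v\rho^{(c)}(v)\Phi(v)$ directly rather than $\max_v\Phi(v)$; doing so essentially reproduces the paper's integral--entropy argument.
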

\begin{proof}
  Please see Appendix~\ref{app:consRatio}.
\end{proof}

Theorem~\ref{the:consRatio} shows the the cooperation gain grows at most logarithmically with the number of nodes $n$.
An interesting question is whether a logarithmic growth can be achieved in any 2D networks.
Following, we show that in grid networks the growth of cooperation gain is indeed logarithmic in $n$.

\bigskip
\subsubsection{Cooperation Gain in Grid Networks}
\label{Lower-B-A}
\hfil \break
Fig. \ref{grd:2D} shows the topology of a grid network with minimum node distance $d$, and $n=m^2$ nodes, where $m$ is a positive integer.

The algorithm $(\mathscr{U},\rho^{(n)}(u)=d^2)$ is a simple non-cooperative broadcast algorithm in which all nodes in $\mathscr{U}$ transmit with power $d^2$.
The total power consumption of this algorithm is clearly $P^{(n)}_{tot}=nd^2$.
There are non-cooperative broadcast algorithms with total power consumption less than $nd^2$. 
For example, if only nodes in every third row transmit,\footnote{To guarantee that every node receives the message, nodes in the top and bottom rows may need to transmit too.} 
every node will receive the message and we get $P^{(n)}_{tot} \simeq \frac{m^2d^2}{3} = \frac{nd^2}{3}$.
The next proposition, however, shows that the total power consumption of any non-cooperative algorithm for the given grid network is $\Omega(nd^2)$. This implies that $(\mathscr{U},\rho^{(n)}(u)=d^2)$ is asymptotically optimum (i.e., it can be improved by at most a constant factor).

\begin{prop}
\label{prp:grd1}
	For any non-cooperative broadcast algorithm over the 2D grid network (Fig. \ref{grd:2D}) with power consumption $P^{(n)}_{tot}$ we have 
	\[
	P^{(n)}_{tot} \in \Omega(nd^2).
	\]
\end{prop}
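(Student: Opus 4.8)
The plan is to bound the power consumption of any non-cooperative broadcast algorithm over the grid from below by a simple \emph{coverage} (area) argument, mirroring the intuition already given in the introductory discussion of the grid but applied to the non-cooperative setting. Fix an arbitrary non-cooperative broadcast algorithm $(\mathscr{A}^{(n)},\rho^{(n)})$ over the $m\times m$ grid of Fig.~\ref{grd:2D}, with $\alpha=2$. By the non-cooperative condition~(\ref{cond:non-coop}), every node $u_i\neq s$ has some transmitter $u_j$ with $\rho^{(n)}(u_j)\,d_{u_i,u_j}^{-2}\geq 1$, i.e.\ $d_{u_i,u_j}\leq \sqrt{\rho^{(n)}(u_j)}$. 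So if we associate to each transmitter $u_j\in\mathscr{A}^{(n)}$ the disk $B_j$ of radius $\sqrt{\rho^{(n)}(u_j)}$ centered at $u_j$, these disks cover all $n-1$ non-source nodes (and the source trivially). The first step is therefore: the union of the $B_j$ contains all $n$ grid points.

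The second step converts ``covers all grid points'' into ``covers area $\Omega(n d^2)$''. Each grid point sits in a $d\times d$ cell of the plane; a grid point covered by $B_j$ forces at least a constant fraction of its cell to lie within a slightly enlarged disk $B_j'$ of radius $\sqrt{\rho^{(n)}(u_j)}+\sqrt{2}\,d$ (the diagonal of a cell). Hence $\bigcup_j B_j'$ has area at least $c\cdot n d^2$ for an absolute constant $c>0$. By subadditivity of area, $\sum_j \pi\bigl(\sqrt{\rho^{(n)}(u_j)}+\sqrt 2\, d\bigr)^2 \geq c\, n d^2$. The third step is to turn this into a bound on $\sum_j \rho^{(n)}(u_j)=P^{(n)}_{tot}$. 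Here one must handle two regimes: transmitters with $\rho^{(n)}(u_j)\geq d^2$, for which $(\sqrt{\rho^{(n)}(u_j)}+\sqrt2\,d)^2 = \mathcal{O}(\rho^{(n)}(u_j))$; and transmitters with $\rho^{(n)}(u_j)<d^2$, whose disks $B_j'$ have area $\mathcal{O}(d^2)$ each but whose reach is less than $d$, so such a transmitter can only ``help'' the node(s) in its own cell or immediate neighbours — in fact a transmitter with $\rho^{(n)}(u_j)<d^2$ reaches no other grid node at all, so it is useless and may be assumed absent, or alternatively each such useless transmitter still contributes positively to $P^{(n)}_{tot}$ and can simply be dropped from the lower-bound accounting. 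Combining, $P^{(n)}_{tot} \geq \sum_{j:\,\rho^{(n)}(u_j)\geq d^2} \rho^{(n)}(u_j) \geq c' \sum_j \bigl(\sqrt{\rho^{(n)}(u_j)}+\sqrt2\,d\bigr)^2 \geq c'' n d^2$ after subtracting the $\mathcal{O}(n d^2)$ slack coming from the boundary terms — wait, that slack is itself $\Theta(n d^2)$, so more care is needed.

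The main obstacle is exactly that last point: the additive $\sqrt 2\,d$ enlargement is not negligible when there are many small-power transmitters, because $\sum_j d^2$ over $\Theta(n)$ transmitters is $\Theta(n d^2)$, the same order as the target bound. The clean way around it is to observe that a transmitter with $\rho^{(n)}(u_j)< d^2$ cannot reach any grid node other than $u_j$ itself (the nearest distinct grid node is at distance $\geq d$), so in the non-cooperative condition such a transmitter is never the ``witness'' $u_j$ for any $u_i\neq u_j$; thus we may restrict attention to transmitters with $\rho^{(n)}(u_j)\geq d^2$, for which $\bigl(\sqrt{\rho^{(n)}(u_j)}+\sqrt2\,d\bigr)^2 \leq (1+\sqrt2)^2\rho^{(n)}(u_j)$, and these alone must cover all $n-1$ non-source grid nodes, giving $(1+\sqrt2)^2\sum_j\rho^{(n)}(u_j) \geq \tfrac{1}{\pi}\sum_j \pi\bigl(\sqrt{\rho^{(n)}(u_j)}+\sqrt2 d\bigr)^2 \geq \tfrac{c}{\pi} n d^2$, i.e.\ $P^{(n)}_{tot}\in\Omega(nd^2)$. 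I would also double-check the covered-area constant $c$ by a direct geometric estimate (a disk covering a grid point covers at least, say, area $d^2/2$ of the union after enlargement, using that the cells tile the plane disjointly), and note that the argument is insensitive to where the source $s$ sits since it only costs us one grid point.
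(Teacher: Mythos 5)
Your proposal is correct and follows essentially the same route as the paper's proof: both discard transmitters with power below $d^2$ (since they reach no other grid node and hence are never witnesses), observe that the remaining transmitters must cover every non-source node, and bound the number of nodes a transmitter of power $\rho$ can serve by $\mathcal{O}(\rho/d^2)$, then sum over transmitters. The only difference is cosmetic: the paper counts lattice points inside a square of side $2\sqrt{\rho}$ directly (yielding $P^{(n)}_{tot}\geq nd^2/9$), whereas you convert coverage of grid points into coverage of disjoint $d\times d$ cells by disks enlarged by $\sqrt{2}\,d$ and invoke area subadditivity, yielding the same $\Omega(nd^2)$ conclusion with constant $1/\bigl(\pi(1+\sqrt{2})^2\bigr)$.
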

\begin{proof}
  Please see Appendix~\ref{app:grd1}.
\end{proof}

Next, we propose a simple cooperative broadcast algorithm \mbox{$(\mathscr{A}^{(c)},\rho^{(c)})$} with total power consumption $P_{tot}^{(c)}\in\mathcal{O}(\frac{nd^2}{\log n})$.  
Fig. \ref{grd:2D} illustrates this cooperative algorithm in which the transmission proceeds horizontally, from the source, in both directions. 
Reaching the left-most column, the broadcast continues vertically, in both directions. 
As depicted in Fig. \ref{grd:2D}, nodes in every $L$th row transmit the message, where $L$ is an adjustable parameter of the algorithm. 
Every such row is  called a \emph{transmitting row}.
The transmission power of each node $u \in \mathscr{A}^{(n)}$ is set to $d^2$.
The following proposition shows that the algorithm illustrated in Fig. \ref{grd:2D} is a cooperative broadcast algorithm (i.e. every node will successfully decode the message) even when $L$ is as large as $0.15\ln (n)$. 
This implies that total power consumption $P_{tot}^{(c)}$ can be as low as $\mathcal{O}(\frac{nd^2}{\log n})$.

\begin{prop}
  \label{prp:grd2}
	The broadcast algorithm $(\mathscr{A}^{(c)},\rho^{(c)})$ illustrated in Fig.~\ref{grd:2D} is a cooperative broadcast algorithm (i.e the message is delivered to all nodes) for any $L \leq 0.15\ln (n)$.
\end{prop}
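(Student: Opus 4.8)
The plan is to verify the cooperative decoding condition (\ref{cond:coop}) — with $P_{th}=1$ and $\alpha=2$, so $h_{u,v}=d_{u,v}^{-2}$ — at every node, splitting $\mathscr{U}$ into (a) the transmitting nodes $\mathscr{A}^{(c)}$, i.e. the nodes of the leftmost column, of the source's row, and of every $L$-th ``transmitting'' row, and (b) the remaining nodes, each of which lies in a non-transmitting row and outside the leftmost column. For a transmitting node $w\neq s$, I would use the chain structure of the propagation in Fig.~\ref{grd:2D}: the message runs horizontally from $s$ to the leftmost column, then vertically along the leftmost column, then horizontally along each transmitting row, so $w$ has some neighbour $v\in\mathscr{A}^{(c)}$ at distance exactly $d$ that transmits before it; since $\rho^{(c)}(v)=d^2$, the single term $\rho^{(c)}(v)h_{w,v}=d^2/d^2=1$ already meets the threshold, and $s$ itself is exempt from (\ref{cond:coop}).

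The core of the proof is case (b). Fix such a node $u$, in column $c\neq1$, and write $b_1<b_2<\cdots$ for the distances (in units of $d$) from $u$'s row to the transmitting rows lying on the side where the grid extends farther, so $b_1\le L-1$ and $b_{i+1}=b_i+L$. By Definition~\ref{def:BA1} every node of $\mathscr{A}^{(c)}$ precedes $u$, hence the left-hand side of (\ref{cond:coop}) at $u$ equals $\sum_{v\in\mathscr{A}^{(c)}}d^2/d_{u,v}^2$, which I bound from below by keeping only the nodes that lie in transmitting rows. A transmitting row at vertical distance $bd$ from $u$ contributes $\sum_k 1/(b^2+k^2)$, where $k$ runs over the horizontal offsets available inside the grid; since $c\neq1$, one of the two horizontal directions always supplies at least $\lfloor(m-1)/2\rfloor$ offsets, so for $b\le(m-1)/2$ this contribution is at least $\tfrac1b\arctan\!\big(\tfrac{m-1}{2b}\big)\ge\tfrac{\pi}{4b}$. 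Summing these bounds over the $N_0$ transmitting rows with $b_i\le(m-1)/2$ — of which there are $N_0\ge\tfrac{m-1}{4L}$ — and using $b_i\le iL$ gives $\sum_{v\in\mathscr{A}^{(c)}}d^2/d_{u,v}^2\ \ge\ \tfrac{\pi}{4L}\sum_{i=1}^{N_0}\tfrac1i\ \ge\ \tfrac{\pi}{4L}\ln\!\big(\tfrac{m-1}{4L}\big)$. With $m=\sqrt n$ and $L\le0.15\ln n$ the right side is $\tfrac{\pi}{4L}\big(\tfrac12\ln n-O(\ln\ln n)\big)$, which exceeds $1$ once $n$ is large; the remaining small-$n$ range forces $L=1$, where every row transmits and there is nothing to check. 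Therefore (\ref{cond:coop}) holds at $u$ too, and $(\mathscr{A}^{(c)},\rho^{(c)})$ is a cooperative broadcast algorithm.

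The main obstacle is precisely this lower bound in case (b): one has to recognise that energy accumulation is what carries the argument, namely that a single distant transmitting row at distance $bd$ still delivers $\Theta(1/b)$ received power (not $O(1/b^2)$), and that there are $\Theta(\sqrt n/L)$ transmitting rows, so their contributions accumulate to the harmonic quantity $\Theta\!\big((\ln n)/L\big)$, which beats the unit threshold exactly in the regime $L=O(\log n)$. Everything else — truncating a transmitting row to the $m$ columns present, a node $u$ near the left or right edge, the source's row, and the transmitting rows whose spacing to the grid boundary falls below $L$ — only helps, and the ample gap between $0.15\ln n$ and the crude threshold the argument actually needs (of order $\tfrac{\pi}{8}\ln n$) absorbs all of it.
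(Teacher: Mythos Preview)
Your argument follows the same line as the paper's: reduce to a non-transmitting node $u$, lower-bound each transmitting row's contribution by $\Theta(1/b)$, and sum the resulting harmonic series over $\Theta(m/L)$ rows to get $\Theta((\ln n)/L)$. The paper gets the per-row bound by the cruder count $\sum_{j=0}^{iL-1}\frac{1}{(iL)^2+j^2}\ge \frac{iL}{2(iL)^2}=\frac{1}{2iL}$ (taking the worst-case rightmost column so the offsets are $0,\dots,m-1$, and truncating at $j=iL-1$), whereas you use the integral/arctan estimate; both yield the same $\Theta(1/b)$.

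The one genuine gap is your endgame. The paper closes by checking the explicit inequality $\frac{1}{2L}\ln\!\big(\frac{m}{2L}-\frac12\big)\ge 1$ directly for all $L\le 0.15\ln n$ and $n\ge 4$. Your split into ``$n$ large enough that $\frac{\pi}{4L}\big(\frac12\ln n-O(\ln\ln n)\big)\ge 1$'' and ``$n$ small enough that $L=1$'' does not meet in the middle: $L=2$ is already permitted once $0.15\ln n\ge 2$, i.e.\ $n\gtrsim e^{13.3}$, well before an unspecified asymptotic bound takes over. In fact your inequality $\frac{\pi}{4L}\ln\frac{m-1}{4L}\ge 1$ \emph{does} hold throughout that range (at the threshold $m=e^{L/0.3}$ it evaluates to roughly $\frac{\pi}{4}\big(\frac{1}{0.3}-\frac{\ln(4L)}{L}\big)>1$ for every integer $L\ge 2$), so the fix is simply to do that explicit check instead of appealing to ``$n$ large''.
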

\begin{proof}
  Please see Appendix~\ref{app:grd2}.
\end{proof}

Considering the grid network (Fig. \ref{grd:2D}), there are $\frac{m}{L}$ transmitting rows within each $m$ nodes broadcast with power $d^2$. Therefore, we have \mbox{$P_{tot}^{(c)}\in\mathcal{O}(\frac{nd^2}{\log n}$)},
hence, $G_{tot}\in\Omega(\log n)$.

\section{NUMERICAL RESULTS}\label{third:simulation}

To calculate the cooperation gain, we used the best existing broadcast algorithms in the literature.
For non-cooperative broadcast, we implemented the algorithm proposed by Caragiannis et al. in \cite{caragiannis2013exponential}.
Among the existing non-cooperative broadcast algorithms, 
this algorithm has the best approximation factor to the broadcast power consumption minimization problem.
As for the cooperative algorithm, we implemented the greedy filling algorithm proposed in \cite{maric2004cooperative}.

In the simulation, a set of $n$ nodes are placed uniformly at random in a disk.
For each given number of nodes $n$, we execute the above two algorithms over 50 different node placements.
Figures~\ref{fg:linear_noFading} shows the numerical results.

\begin{figure}[htb!]
	\includegraphics[width=0.9\columnwidth]{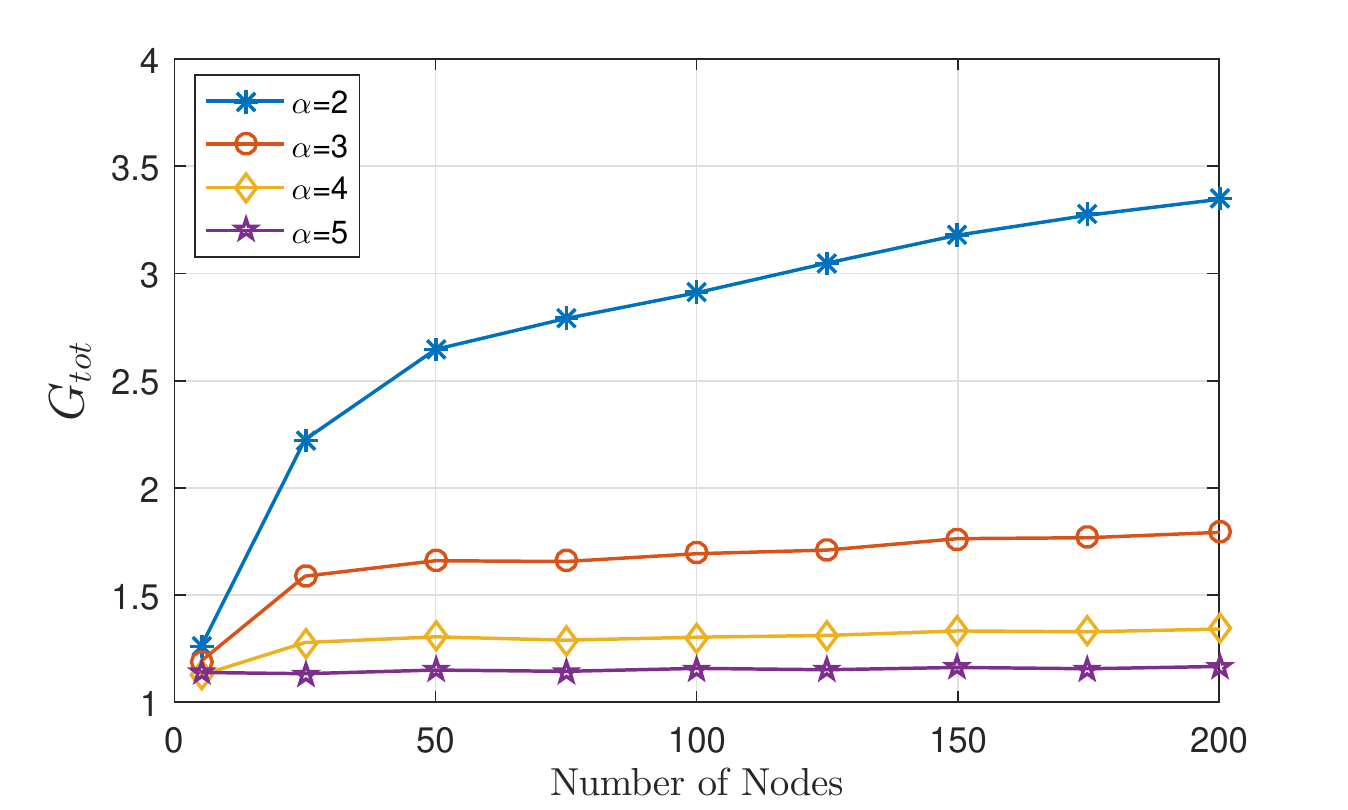}
	\centering
	\caption{$\Gtot$ versus $n$. Nodes are uniformly distributed in the network.}
	\label{fg:linear_noFading}
\end{figure}


We compared the algorithm by Caragiannis et al.~\cite{caragiannis2013exponential} with the \textit{greedy filling} algorithm under two other settings each with different node distributions.
In one setting, nodes positions follow a Gaussian distribution with standard deviation of 0.5 around the center the disk.
The result of this simulation is shown in Figures~\ref{fg:AC_linear_noFading}.

\begin{figure}[htb!]
	\includegraphics[width=0.9\columnwidth]{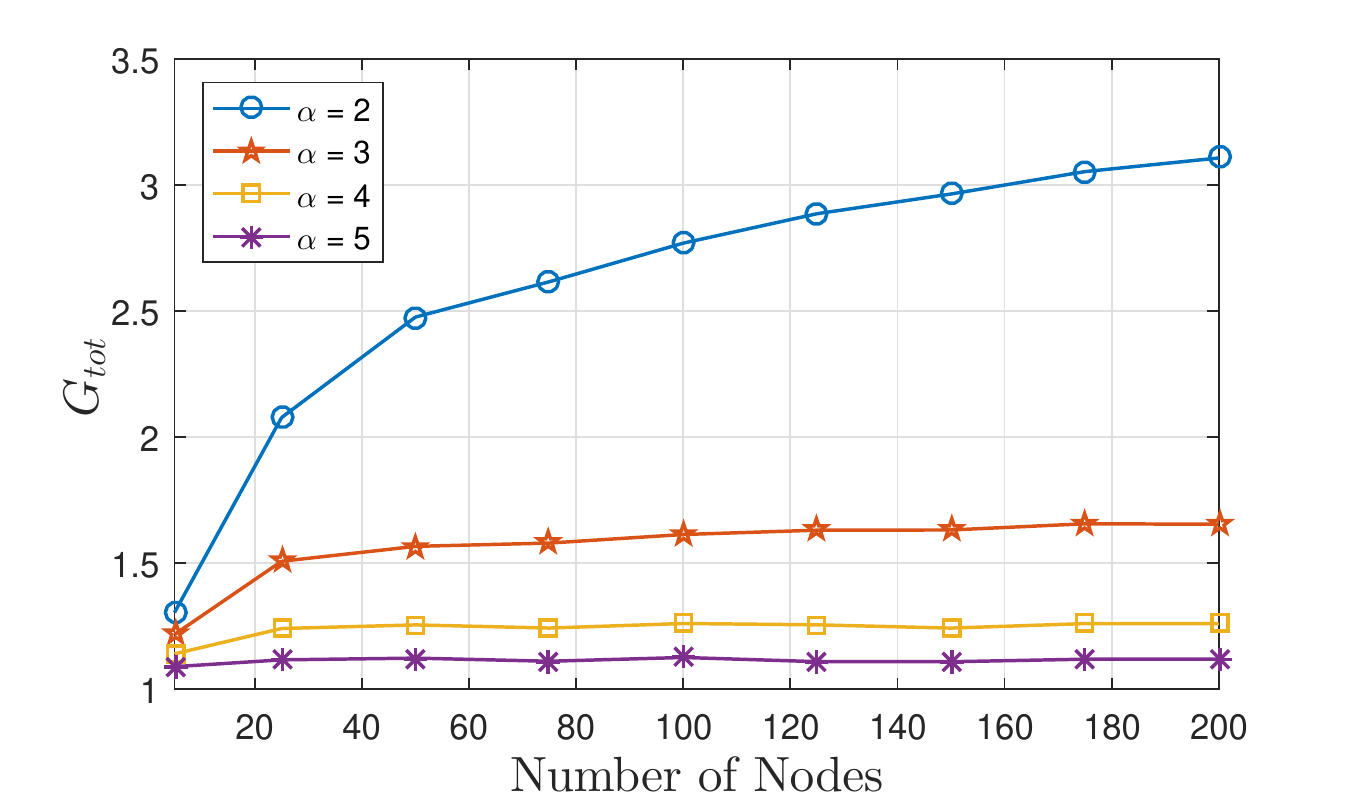}
	\centering
	\caption{$\Gtot$ versus $n$. Nodes have Gaussian distribution around center.}
	\label{fg:AC_linear_noFading}
\end{figure}

In the second setting,  for nodes distribution, we considered a clustered structure consisting of 5 cluster centers. 
Nodes locations around these centers follow a Gaussian distribution ($\sigma = 0.5$).
The numerical results are shown in Figures~\ref{fg:CL_linear_noFading}.
\begin{figure}[htb!]
	\includegraphics[width=0.9\columnwidth]{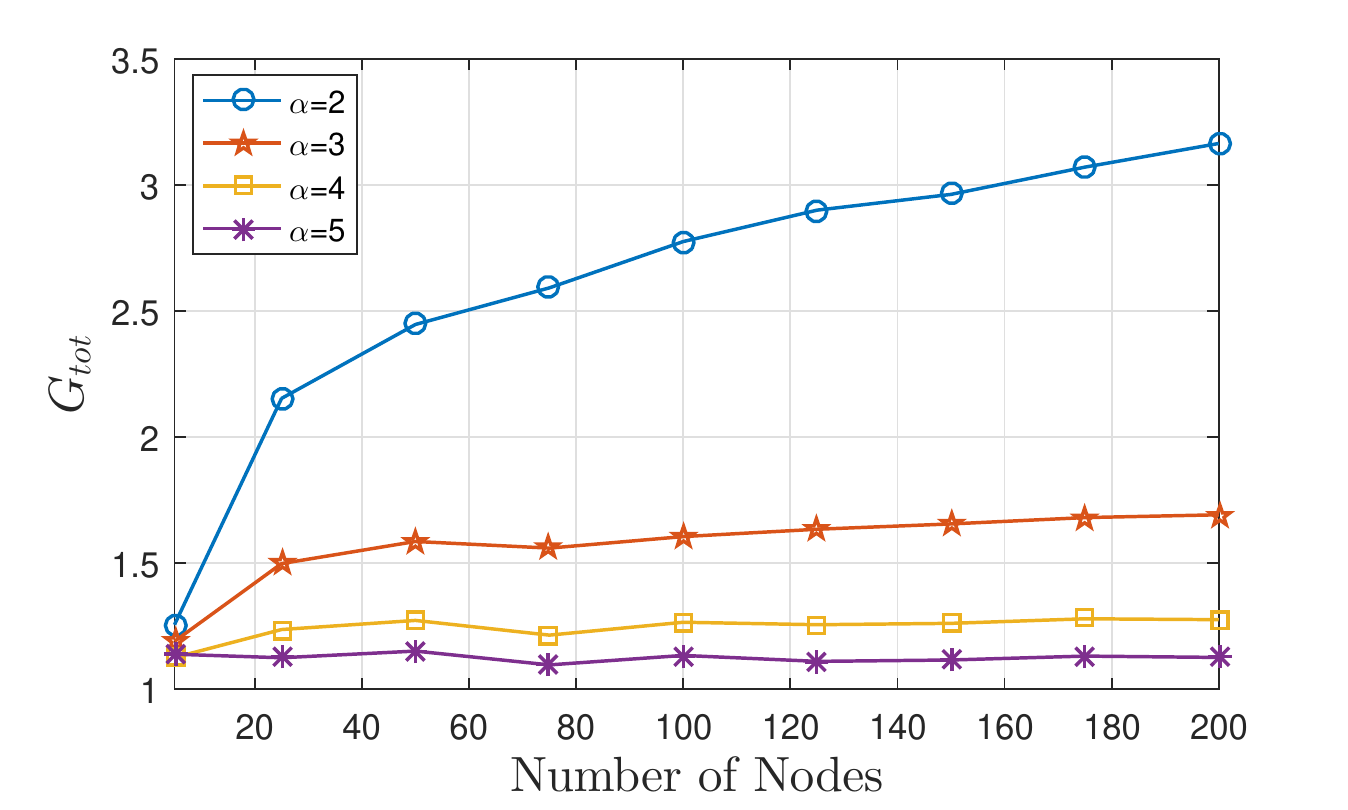}
	\centering
	\caption{$\Gtot$ versus $n$. Nodes are placed in a clustered structure.}
	\label{fg:CL_linear_noFading}
\end{figure}

Next, we evaluated our conversion method explained in Section \ref{Upper-eq-2}.
When $\alpha=2$, this method converts any given cooperative broadcast algorithm with total power consumption of $P_{tot}^{(c)}$ to 
a non-cooperative broadcast algorithm with total power consumption $P_{tot}^{(n)}\leq 127 \ln(n)$. 
To verify this result, we considered networks with size up to  $n=400 \times 400$.
We placed  nodes  uniformly at random in the network, 
and used the greedy filling algorithm proposed in \cite{maric2004cooperative} as the input to our conversion method.
For each value of $n$, we performed 1000 runs of simulations.
In every run, we verified that the constructed algorithm achieves full delivery. 
Fig.~\ref{fg:c-method-log} shows the conversion radio versus  $\ln(n)$, where the conversion  ratio is defined as 
the ratio of the total power consumption of the constructed algorithm to that of the conversion method input (in this case, the greedy filling algorithm).
The maximum slope of the curve in Fig.~\ref{fg:c-method-log} is about five, 
much lower than the slope of 127 in the proven bound of  $127 \ln(n)$ on the conversion ratio (See proof of Theorem~\ref{the:consRatio}).

\begin{figure}[htb!]
	\includegraphics[width=0.9\columnwidth]{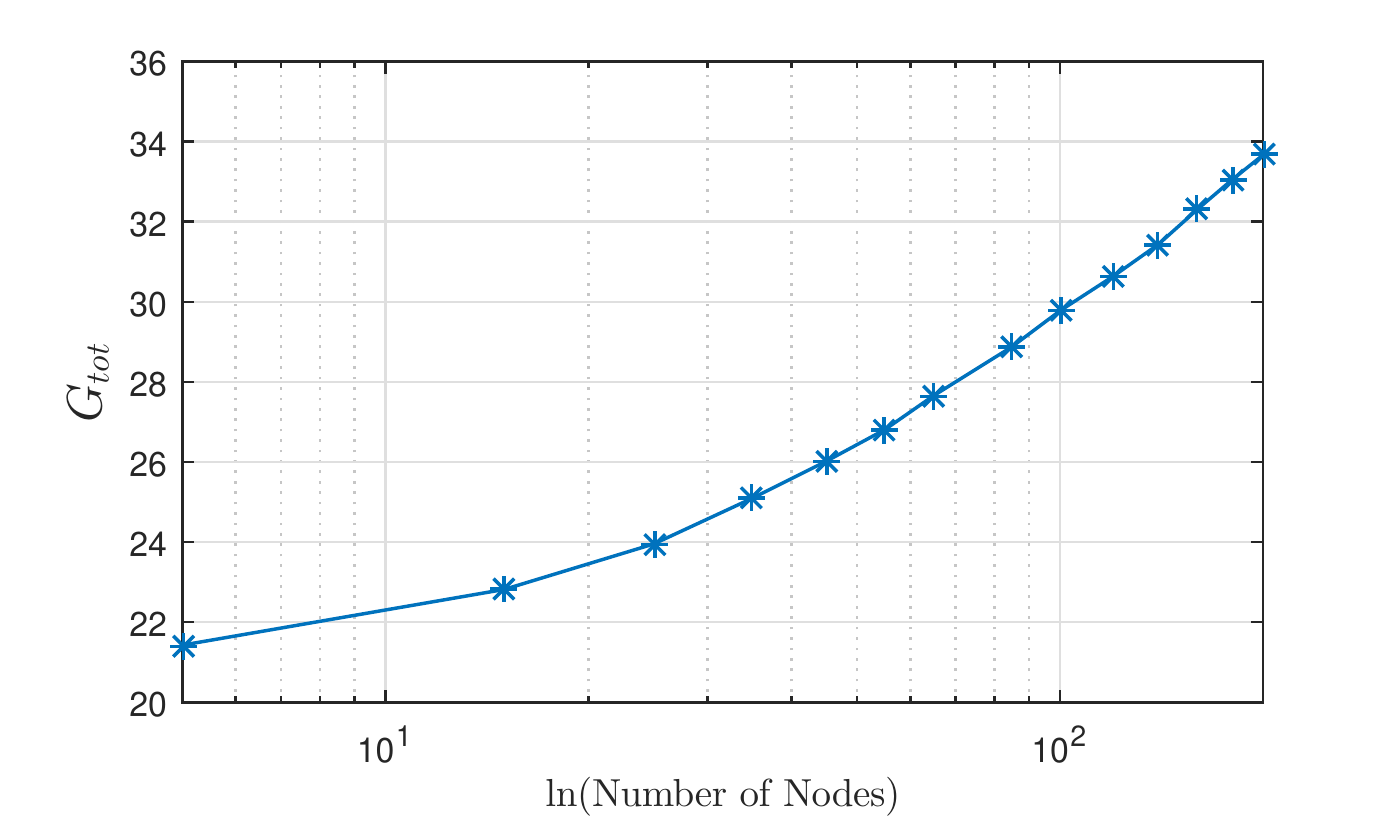}
	\centering
	\caption{The conversion ratio when the greedy algorithm is used as the input of the conversion method.}
	\label{fg:c-method-log}
\end{figure}


For grid networks, we simulated the cooperative broadcast algorithm discussed in Section~\ref{Lower-B-A}.
In the simulated algorithm, in addition to nodes in every $L$th row, the nodes in the top and bottom rows transmit.
For a given size of the grid network and a given position of the source node, we first search for the largest value of $L$ that guarantees full delivery.
Then, we use the maximum value of $L$ in the cooperative algorithm, and calculate the total power consumption.
As for the non-cooperative algorithm, we use the simple algorithm in which all the nodes transmit with power $d^2$,
where $d$ denotes the minimum distance between nodes in the grid.
This simple non-cooperative algorithm was proven to be asymptotically optimal.

In the simulation, we go up to a grid size of $50\times50$, with $n=50\times50$ nodes.
For a given grid size, we run the simulation 100 times.
In each run, the source is selected uniformly at random from the nodes in the gird.
Fig. \ref{fg:grid-ln} shows the ratio of the total power consumption of the non-cooperative algorithm over that of the cooperative algorithm.
To show the logarithmic growth of the cooperation gain, in Fig. \ref{fg:grid-ln}, we plotted the gain versus $\ln n$.

\begin{figure}[htb!]
	\includegraphics[width=0.9\columnwidth]{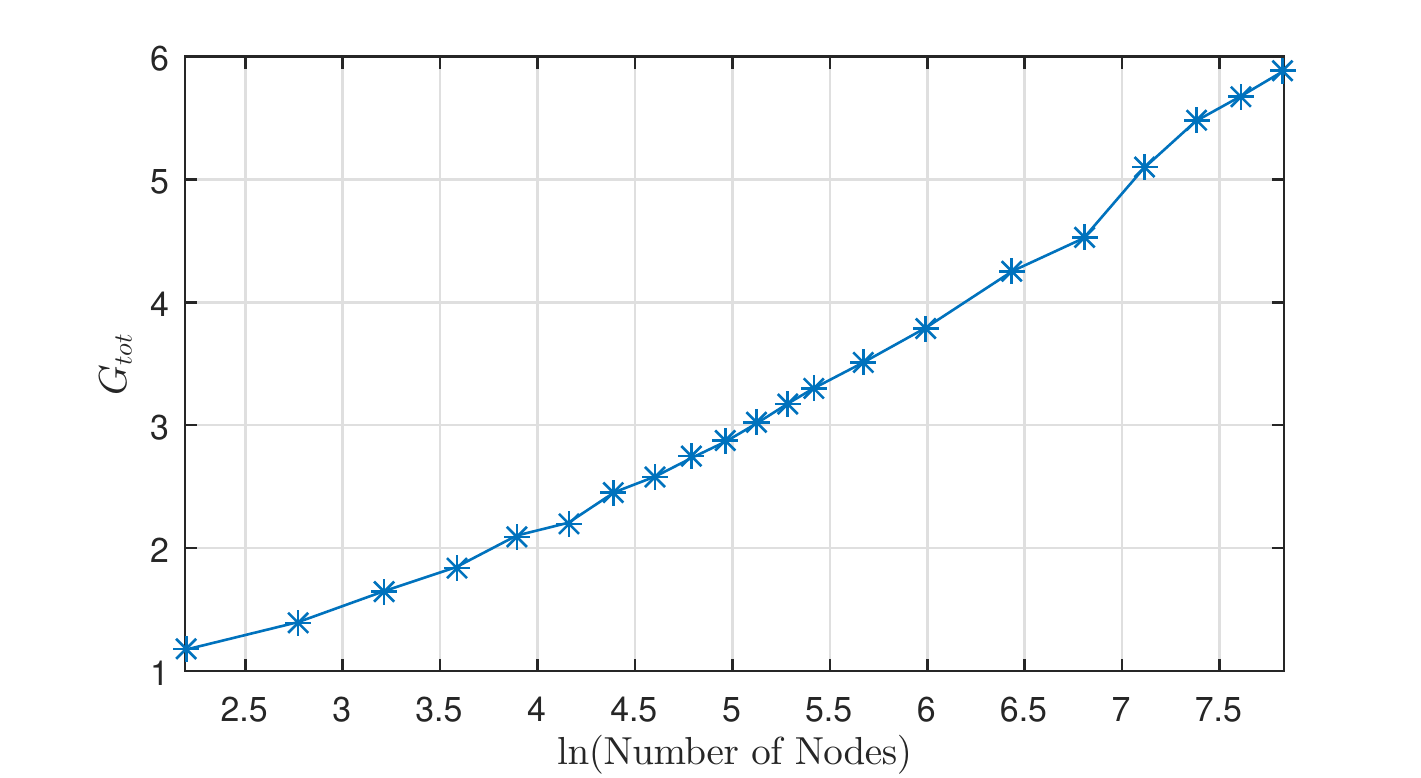}
	\centering
	\caption{$\Gtot$ versus $\ln(n)$ for 2D grid networks.}
	\label{fg:grid-ln}
\end{figure}


\section{Conclusion and Future Work}
\label{conclu}
The cooperation gain is a measure of how much energy can be saved when energy accumulation is used in wireless broadcast.
In this work, we analyzed the cooperation gain in 2D networks.
We proved that when $\alpha>2$, the cooperation gain is constant irrespective of the network size and topology.
When $\alpha=2$, we proved that the cooperation gain grows at best logarithmically with the number of nodes.
Then, we showed that in grid networks the cooperation gain indeed increases logarithmically with the number of nodes when $\alpha=2$.

This work can be extended in multiple ways. 
One is to consider random channels. 
To this end, we can use, for example, the Rayleigh fading model instead of the path loss model.
Also, in this work, we showed that the cooperation gain is constant in grid networks when $1<\alpha<2$. 
As a future work, this result can be extended to general 2D networks.
Finally, we analyzed the cooperation gain without placing any constraint on latency.
It would be interesting to know how much gain we can get from energy accumulation when we need to meet a desired latency constraint.

\begin{appendices}

\section{Proof of Theorem~\ref{the:FD}}
\label{app:FD}
	Towards showing a contradiction, we assume that there is at least one node that does not receive the message. 
	Among nodes that have not received the message, let $f$ be the smallest node with respect to the ordering $<^{(n)}$ (i.e. every node $u<^{(n)}f$ has successfully received the message).
	In the~following, we will prove that there is a node $v$, $v<f$, which has received the message and transmitted with power $\rho^{(n)}(v) \geq d^2_{v,f}$.
	This implies that node $f$ must have received the message from node $v$, a contradiction.
	We consider two cases.
	In the first case, we assume $D_f \in \mathcal{I}$; in the second case we assume $D_f \notin \mathcal{I}$.
	\begin{enumerate}
		\item Case 1: $D_f \in \mathcal{I}$:\\
			Since  $D_f \in \mathcal{I}$, we assume that $f = u_i$, for some $i \in \{1,2...,|\mathcal{I}|\}$.
			Fig. \ref{fig:Df-remain} shows disk $D_{u_i} \in \mathcal{I}$ (with thick boundary) as well as all disks in $\mathcal{D}$ that are not bigger than $D_{u_i}$ and intersect with $D_{u_i}$.
			For every disk $D_u$ shown in Fig.~\ref{fig:Df-remain}, node $\mathcal{R}(u)$ is represented by an asterisk.
			Therefore, the asterisks shown in Fig.~\ref{fig:Df-remain} are the set $\mathcal{S}_i$ defined in (\ref{equ:S_i}).
			Note that, $w_i$, defined in (\ref{equ:w_i}), is one of those asterisks. 
			We have the following for $w_i$:
			\begin{enumerate} [label=(\roman*)]
				\item $\rho^{(n)}(w_i) \geq (5r_{u_i})^\alpha$, where $r_{u_i}$ is the radius of the disk $D_{u_i}=D_f$: \\
				It is because, in the $i$th iteration of the power assignment algorithm (Algorithm~\ref{alg:1}), 
				the power of node $w_i$ (i.e. $\rho^{(n)}(w_i)$) is set to a number at least equal to $(5r_{u_i})^\alpha$. 
				Also if $w_i=w_j$ for some integer $j>i$, $\rho^{(n)}(w_i)$ will not be reduced in iteration $j$.
				\item $d_{w_i,u_i} \leq 3r_{u_i}$:\\
				Let $P$ be a point on the circumference of any disk shown in Fig.~\ref{fig:Df-remain}. 
				The distance between $P$ and $f=u_i$ is at most $3r_{u_i}$ because the radius of every disk is at most equal to $r_{u_i}$.
				\item $w_i <^{(n)} u_i$:\\
				From (\ref{equ:resp}), we get $\mathcal{R}(u_i) <^{(c)} u_i$, and by (\ref{equ:w_i}) we get $w_i <^{(c)} \mathcal{R}(u_i)$. 
				Therefore, from (\ref{equ:order-m}), we have $w_i <^{(n)} u_i$, since $w_i \in \mathscr{A}^{(n)}$.
				\item $w_i$ has received the message:\\ 
				This is simply by the assumption that every node $u<^{(n)} f$ has received the message and because $w_i<^{(n)} f$.
			\end{enumerate}  
			\begin{figure}[htb!]
				\includegraphics[width = 85mm,scale = 1.5]{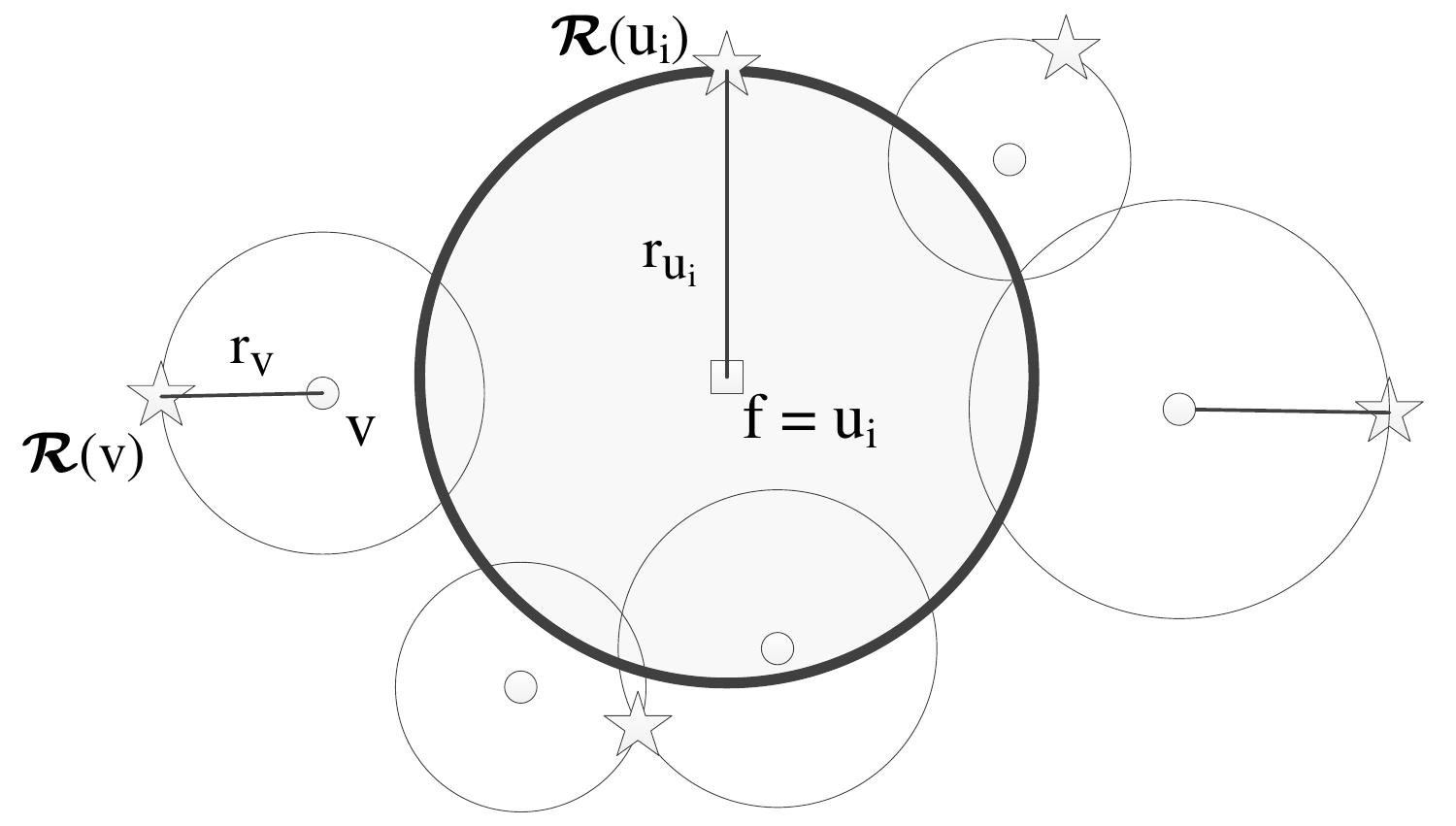}
				\centering
				\caption{Disk $D_{u_i}=D_f$ (with bold border), and the disks in $\mathcal{D}$ that are not bigger than $D_{u_i}$ and intersect with $D_{u_i}$. 
					Asterisks represent the nodes in $\mathcal{S}_i$.}
				\label{fig:Df-remain}
			\end{figure}
			To sum up, there is a node $w_i <^{(n)} f$ that has received the message, and transmits with a power at least equal to $(5r_{u_i})^\alpha$, 
			and is at most $3r_{u_i}$ away from $f$.
			Consequently $f$ must have received the message from $w_i$.
		
		\item Case 2: $D_f \not\in \mathcal{I}$:\\
			Since $D_f\notin \mathcal{I}$, there must be at least a disk in $\mathcal{I}$ that intersects with $D_f$. Among such disks, let $D_{u_i}$ be the largest one. 
			The disk $D_{u_i}$ must be at least as large as $D_f$ as otherwise, $D_f$ would have entered the set $\mathcal{I}$ instead of $D_{u_i}$. 
			In Fig.~\ref{fig:Df-discard}, the disk $D_{u_i}$ is shown with thick boundary. 
			The Figure also shows all the disks in $\mathcal{D}$ that intersect with $D_{u_i}$ and are at most as large as $D_{u_i}$.
			Similar to Case 1, every node in $\mathcal{S}_i$ is represented by an asterisk, and $w_i$ is one of the asterisks. 
			Similar to Case 1, we have
			\begin{enumerate} [label=(\roman*)]
				\item $\rho^{(n)}(w_i) \geq (5r_{u_i})^\alpha$.
				\item $d_{w_i,u_i} \leq 5r_{u_i}$:\\
				This is because the radius of all disks is at most $r_{u_i}$. 
				As illustrated in Fig.~\ref{fig:Df-discard}, any point on the circumference of any disk is at most $5r_{u_i}$ away from node $f$.
				\item $w_i <^{(n)} f$:\\
				From (\ref{equ:resp}), we get $\mathcal{R}(f) <^{{(c)}} f$, and (\ref{equ:w_i}) implies that $w_i <^{(c)} \mathcal{R}(f)$. 
				Hence, by (\ref{equ:order-m}), we get that $w_i <^{(n)} f$.
				\item $w_i$ has received the message:\\
				Again, this is by the assumption that every node $u<^{(n)} f$ has received the message, and because $w_i <^{(n)} f$.
			\end{enumerate} 
			\begin{figure}[htb!]
				\includegraphics[width = 85mm,scale = 1]{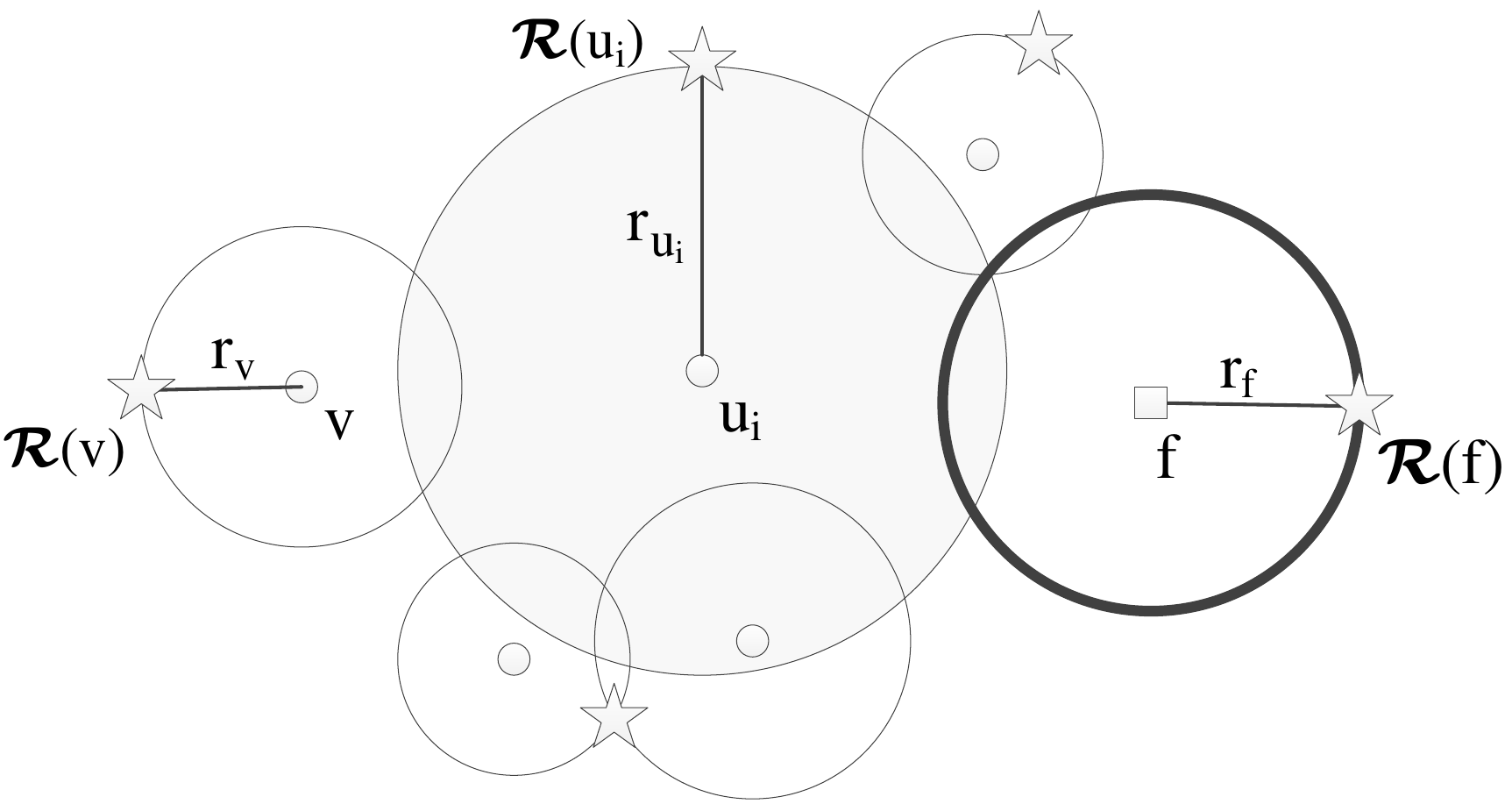}
				\centering
				\caption{The shaded disk $D_{u_i}$ is the biggest disk in $\mathcal{I}$ that intersects with $D_f$ (the disk with bold border).}
				\label{fig:Df-discard}
			\end{figure}
			To sum up, node $w_i$ has received the message, $w_i<^{(n)} f$. 
			It transmits with a power at least equal to $(5r_{u_i})^\alpha$, and is at most $5r_{u_i}$ away from $f$. 
			Consequently, as in Case 1, $f$ must have received the message from $w_i$.
	\end{enumerate}

\section{Proof of Proposition~\ref{prp:bri}}
\label{app:bri}
  
  Let $c'_i$ denote the center of $D'_{u_i}$.
  There is a node located at $c'_i$, and that node can decode the message when the algorithm $(\mathscr{A}^{(c)},\rho^{(c)})$ is used.
  Therefore, the sum of received power  from nodes in $\mathscr{T}_{u_i}$ at $c'_i$ must be at least equal to the decoding threshold.
  Let $v$ be any node in $\mathscr{T}_{u_i}$ and $p'$ be any point in $D'_{u_i}$.
  We have
  \[
    d_{v,p'} \leq d_{v,c'_i}+r'_i/\gamma,
  \]
  where $d_{v,p'}$ denotes the distance between $v$ and $p'$ (similarly for  $d_{v,c'_i}$), and $r'_i$ denotes the radius of $D'_{u_i}$.
  We also have
  \[
    d_{v,c'}\geq r'_i
  \]
  as $v$ is in $\mathscr{T}_{u_i}$, hence cannot be inside $D'_{u_i}$.
  Therefore, the received power from $v$ at point $p'$ is at least
  \[
  \begin{split}
    \left(\frac{d_{v,c'_i}}{d_{v,c'_i}+r'_i/\gamma} \right)^\alpha
    &= \left(\frac{1}{1+\frac{r'_i/\gamma}{d_{v,c'}}} \right)^\alpha \\
    &\geq \left(\frac{1}{1+\frac{1}{\gamma}} \right)^\alpha
  \end{split}
  \]
  fraction of the received power from $v$ at the center.
  Therefore, if we multiply all the transmission powers (which includes every transmission power in $\mathscr{T}_{u_i}$)
  by a factor $(1+\frac{1}{\gamma})^\alpha$ (which we do in the first step of the roadmap), 
  the total received powers from $\mathscr{T}_{u_i}$ at $p'$ will be at least equal to the decoding threshold.

\section{Proof of Theorem~\ref{thm:valid}}
\label{app:valid}
  The proof is trivial for $n=1$, thus we assume that $n\geq 2$.
  Without loss of generality,  assume that $D_n$ is the largest disk, that is $r_n\geq r_i$, for every $1\leq i\leq n$. 
  
  For every $1\leq i \leq n$, let $u_i(p): \mathbb{R}^2 \rightarrow \mathbb{R}$ be a function such that
    \[
    u_i(p)=
      \begin{cases}
        r_i^{\alpha-2}		& \text{if } \dist (p, c_i)\leq r_i;\\
        0						& \text{otherwise}.
      \end{cases}
    \]  
  In proving the theorem, we use double integrations in several places, including the following lemma.
  In all those places, we assume a polar coordinate system with the pole at  $c_n$.
  We start by proving some lemmas.  
%
%
%

  \begin{lemma}
  \label{lem:ui}
     For every $1\leq i \leq n-1$, we have
     \[
       \int\int_{D_i}  \frac{u_i(r,\theta)}{r^{\alpha}}~r\diff r \diff \theta \geq 2\pi\left(\frac{2\gamma-1}{2\gamma+1}\right)^\alpha \left( \frac{r_i}{\dist(c_n, c_i)-r_i} \right)^\alpha
     \]      
  \end{lemma}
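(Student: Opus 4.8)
The plan is to prove Lemma~\ref{lem:ui} by reducing the double integral to a one‑line area estimate. By the definition of $u_i$ we have $u_i(p)=r_i^{\alpha-2}$ for every $p\in D_i$, and since $i\le n-1$ the separation hypothesis, together with $r_n\ge r_i$ (recall that $D_n$ is the largest disk, as fixed at the start of the proof of Theorem~\ref{thm:valid}), gives $\dist(c_n,c_i)\ge\gamma(r_i+r_n)\ge 2\gamma r_i>r_i$; in particular $c_n\notin D_i$, so the factor $1/r^{\alpha}$ is bounded on $D_i$. Reading $r\,\diff r\,\diff\theta$ as the area element of the polar system centred at $c_n$ and writing $r=\dist(p,c_n)$, the left-hand side equals
\[
r_i^{\alpha-2}\int\!\!\int_{D_i}\frac{\diff A(p)}{\dist(p,c_n)^{\alpha}}.
\]

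The next step is to bound this integrand uniformly on $D_i$. By the triangle inequality $\dist(p,c_n)\le\dist(c_n,c_i)+r_i$ for every $p\in D_i$, so the integrand is at least $r_i^{\alpha-2}\big(\dist(c_n,c_i)+r_i\big)^{-\alpha}$ everywhere on $D_i$; multiplying by the area $\pi r_i^{2}$ of $D_i$ gives the lower bound $\pi r_i^{\alpha}\big(\dist(c_n,c_i)+r_i\big)^{-\alpha}$. Finally I would convert the ``$+r_i$'' into the ``$-r_i$'' of the statement: since $t\mapsto(t+r_i)/(t-r_i)$ is decreasing for $t>r_i$, plugging in $t=\dist(c_n,c_i)\ge 2\gamma r_i$ yields $\frac{\dist(c_n,c_i)+r_i}{\dist(c_n,c_i)-r_i}\le\frac{2\gamma+1}{2\gamma-1}$, hence
\[
\pi\,r_i^{\alpha}\big(\dist(c_n,c_i)+r_i\big)^{-\alpha}\ \ge\ \pi\left(\frac{2\gamma-1}{2\gamma+1}\right)^{\alpha}\left(\frac{r_i}{\dist(c_n,c_i)-r_i}\right)^{\alpha},
\]
which gives the claimed inequality. (This clean estimate delivers the leading constant $\pi$; since the same area bound also caps the integral above by $\pi r_i^{\alpha}\big(\dist(c_n,c_i)-r_i\big)^{-\alpha}$, the $2\pi$ in the statement should presumably read $\pi$, unless a sharper angular estimate on $\iint_{D_i}\dist(p,c_n)^{-\alpha}\diff A$ is intended, e.g. one obtained by pairing antipodal points of $D_i$ and using convexity of $t\mapsto t^{-\alpha/2}$.)

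I do not expect a genuine obstacle here: once the left-hand side is recognised as $r_i^{\alpha-2}$ times an integral of $\dist(\cdot,c_n)^{-\alpha}$ over a disk that avoids $c_n$, everything is elementary. The only points requiring care are (i) invoking $r_n\ge r_i$ so that the separation yields $\dist(c_n,c_i)\ge 2\gamma r_i$ rather than merely $\ge\gamma r_i$ — this factor of two is exactly what turns $(\gamma+1)/(\gamma-1)$ into $(2\gamma+1)/(2\gamma-1)$ in the final constant — and (ii) noting that the restriction $i\le n-1$ is used only to guarantee $c_n\notin D_i$, so that the integrand stays bounded and the area estimate applies verbatim.
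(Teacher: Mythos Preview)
Your argument is correct and mirrors the paper's proof almost line for line: replace $u_i$ by the constant $r_i^{\alpha-2}$, bound $1/r^{\alpha}$ below by $1/(\dist(c_n,c_i)+r_i)^{\alpha}$ on $D_i$, multiply by the area $\pi r_i^{2}$, and then use $\dist(c_n,c_i)\ge 2\gamma r_i$ (from $r_n\ge r_i$) to trade $+r_i$ for $-r_i$ at the cost of the factor $\big(\tfrac{2\gamma-1}{2\gamma+1}\big)^{\alpha}$. Your parenthetical remark is also on target: the paper obtains $2\pi$ only because it writes $\int\!\!\int_{D_i} r\,\diff r\,\diff\theta = 2\pi r_i^{2}$, whereas the area of $D_i$ is $\pi r_i^{2}$, so the constant in the lemma (and in its downstream uses) should indeed be $\pi$ rather than $2\pi$.
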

  \begin{proof}
    For every point $p$ , $p \in D_i$, we have 
    \[
      \dist(c_n, c_i)-r_i \leq \dist(c_n, p) \leq  \dist(c_n, c_i) +r_i.
    \]
    Therefore, 
    \[
      \int\int_{D_i}  \frac{1}{r^{\alpha}}~r \diff r \diff \theta \geq \frac{2\pi r_i^2}{ (\dist(c_n, c_i) + r_i)^\alpha},
    \]
    because 
    \[
        \int\int_{D_i} r \diff r \diff \theta = 2\pi r_i^2,
    \]
    and
    \[
       \frac{1}{r^{\alpha}} \geq \frac{1}{(\dist(c_n, c_i) +r_i)^\alpha}
    \]
    in the region $D_i$.
    Thus 
    \begin{equation}
    \label{equ:dist}
    \begin{split}
       \int\int_{D_i}  \frac{u_i(r,\theta)}{r^{\alpha}}~r \diff r \diff \theta 
       &= \int\int_{D_i}  \frac{r_i^{\alpha-2}}{r^{\alpha}}~r \diff r \diff \theta \\
       &\geq r_i^{\alpha-2} \cdot  \frac{2\pi r_i^2}{ (\dist(c_n, c_i) + r_i)^\alpha}\\
       &=\frac{2\pi r_i^\alpha}{ (\dist(c_n, c_i) + r_i)^\alpha}\\
     \end{split}
    \end{equation}

\noindent
    We have $r_n\geq r_i$, and
    \[
      \dist(c_n, c_i) \geq \gamma(r_n+r_i) \geq 2\gamma r_i,
    \]
   With some simple calculation, we get
   \[
     \frac{1}{\dist(c_n, c_i)+r_i} \geq \left( \frac{2\gamma-1}{2\gamma+1} \right)\cdot \frac{1}{\dist(c_n, c_i)-r_i}
   \]
   Therefore, by (\ref{equ:dist}), we get
   \[
     \int\int_{D_i}  \frac{u_i(r,\theta)}{r^{\alpha}}~r\diff r \diff \theta \geq 2\pi\left(\frac{2\gamma-1}{2\gamma+1}\right)^\alpha \left( \frac{r_i}{\dist(c_n, c_i)-r_i} \right)^\alpha
   \]

  \end{proof}

  \begin{lemma}
  \label{lem:sumRi}
    We have
    \[
      \sum_{i=1}^{n-1} \left( \frac{r_i}{\dist(c_n, c_i)-r_i} \right)^\alpha \leq  \beta_1,
    \]
    where
    \[
      \beta_1=\frac{1}{(\alpha-2) \cdot \gamma^{\alpha-2}}\cdot\left(\frac{2\gamma+1}{2\gamma-1}\right)^\alpha.
    \]
  \end{lemma}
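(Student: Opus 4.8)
The plan is to convert the claimed sum into an integral and dominate it using the fact that the disks $D_1,\dots,D_{n-1}$ are pairwise disjoint and all lie far from $c_n$. Lemma~\ref{lem:ui} already gives, for each $1\le i\le n-1$,
\[
  \left(\frac{r_i}{\dist(c_n,c_i)-r_i}\right)^\alpha \leq \frac{1}{2\pi}\left(\frac{2\gamma+1}{2\gamma-1}\right)^\alpha \int\int_{D_i}\frac{u_i(r,\theta)}{r^\alpha}\,r\diff r\diff\theta,
\]
so, summing over $i$, it suffices to show that $\sum_{i=1}^{n-1}\int\int_{D_i}\frac{u_i(r,\theta)}{r^\alpha}\,r\diff r\diff\theta \le \dfrac{2\pi}{(\alpha-2)\gamma^{\alpha-2}}$, and then divide.

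First I would record two elementary geometric facts. Since $\dist(c_i,c_j)\ge\gamma(r_i+r_j)$ with $\gamma>1$, the disks $D_1,\dots,D_{n-1}$ are pairwise disjoint; hence at every point $p$ at most one of the functions $u_i$ is nonzero, and because $u_i\le r_i^{\alpha-2}\le r_n^{\alpha-2}$ (recall $D_n$ is the largest disk and $\alpha>2$) we get the pointwise bound $\sum_{i=1}^{n-1}u_i(p)\le r_n^{\alpha-2}$. Second, for $i\le n-1$ and any $p\in D_i$,
\[
  \dist(c_n,p)\ \ge\ \dist(c_n,c_i)-r_i\ \ge\ \gamma(r_n+r_i)-r_i\ =\ \gamma r_n+(\gamma-1)r_i\ \ge\ \gamma r_n,
\]
so $\bigcup_{i=1}^{n-1}D_i$ is contained in the exterior of the disk of radius $\gamma r_n$ about $c_n$.

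Combining these, and working in polar coordinates centered at $c_n$,
\[
  \sum_{i=1}^{n-1}\int\int_{D_i}\frac{u_i(r,\theta)}{r^\alpha}\,r\diff r\diff\theta \ \le\ \int_0^{2\pi}\!\!\int_{\gamma r_n}^{\infty}\frac{r_n^{\alpha-2}}{r^\alpha}\,r\diff r\diff\theta \ =\ 2\pi\,r_n^{\alpha-2}\int_{\gamma r_n}^{\infty} r^{1-\alpha}\diff r \ =\ \frac{2\pi}{(\alpha-2)\,\gamma^{\alpha-2}},
\]
where the last step uses $\alpha>2$ so that the tail integral converges to $\frac{(\gamma r_n)^{2-\alpha}}{\alpha-2}$. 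Plugging this into the inequality above and dividing by $2\pi\left(\frac{2\gamma-1}{2\gamma+1}\right)^\alpha$ gives exactly $\sum_{i=1}^{n-1}\left(\frac{r_i}{\dist(c_n,c_i)-r_i}\right)^\alpha\le\beta_1$.

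I do not anticipate a real obstacle. The only point that needs care is the passage from the sum of integrals over the disjoint $D_i$'s to a single integral of the constant $r_n^{\alpha-2}$ over the annular region $\{r\ge\gamma r_n\}$: this rests on disjointness (so $\sum u_i$ replaces the individual $u_i$), on the pointwise bound $\sum u_i\le r_n^{\alpha-2}$, and on the containment $\bigcup D_i\subseteq\{r\ge\gamma r_n\}$ together with nonnegativity of the integrand. Everything else is a one-line computation, and the hypothesis $\alpha>2$ in force throughout this section is precisely what makes the tail integral finite.
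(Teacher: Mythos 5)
Your proof is correct and follows essentially the same route as the paper's: invoke Lemma~\ref{lem:ui} termwise, use disjointness of the disks and $u_i\le r_n^{\alpha-2}$ together with the containment $\bigcup_{i<n} D_i\subseteq\{r\ge\gamma r_n\}$ to dominate the sum of integrals by $\int_0^{2\pi}\int_{\gamma r_n}^{\infty} r_n^{\alpha-2}r^{1-\alpha}\diff r\diff\theta$, and evaluate the tail using $\alpha>2$. Your write-up is in fact slightly more careful than the paper's at the step where the sum of integrals is replaced by a single integral over the union, but the argument is the same.
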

  \begin{proof}
    By Lemma~\ref{lem:ui}, we get
    \[
    \begin{split}
      &\sum_{i=1}^{n-1} \left( \frac{r_i}{\dist(c_n, c_i)-r_i} \right)^\alpha\\
      &\leq  \frac{1}{2\pi}\cdot\left(\frac{2\gamma+1}{2\gamma-1}\right)^\alpha  \left( \sum_{i=1}^{n-1} \int\int_{D_i}  \frac{u_i(r,\theta)}{r^{\alpha}}~r\diff r \diff \theta \right) \\
      &=  \frac{1}{2\pi}\cdot\left(\frac{2\gamma+1}{2\gamma-1}\right)^\alpha  \left( \int\int_{\cup_{i=1}^{n-1} D_i}  \frac{u_i(r,\theta)}{r^{\alpha}}~r\diff r \diff \theta \right)      \\
      &\leq  \frac{1}{2\pi}\cdot\left(\frac{2\gamma+1}{2\gamma-1}\right)^\alpha  \left( \int_{0}^{2\pi}\int_{\gamma r_n}^{\infty}  \frac{u_i(r,\theta)}{r^{\alpha}}~r\diff r \diff \theta \right)      \\      
      &\leq  \frac{1}{2\pi}\cdot\left(\frac{2\gamma+1}{2\gamma-1}\right)^\alpha  \left( \int_{0}^{2\pi}\int_{\gamma r_n}^{\infty}  \frac{r_n^{\alpha-2}}{r^{\alpha}}~r\diff r \diff \theta \right)      \\     
      &\leq  \frac{r_n^{\alpha-2}}{2\pi}\cdot\left(\frac{2\gamma+1}{2\gamma-1}\right)^\alpha  \left( \int_{0}^{2\pi}\int_{\gamma r_n}^{\infty}  \frac{1}{r^{\alpha}}~r\diff r \diff \theta \right)      \\  
      &=  \frac{r_n^{\alpha-2}}{2\pi}\cdot\left(\frac{2\gamma+1}{2\gamma-1}\right)^\alpha  \left( 2\pi \cdot \frac{r^{2-\alpha}}{2-\alpha} \left|_{\gamma r_n}^{\infty} \right. \right)      \\                        
      &=  \frac{r_n^{\alpha-2}}{2\pi}\cdot\left(\frac{2\gamma+1}{2\gamma-1}\right)^\alpha  \left( 2\pi \cdot \frac{(\gamma r_n)^{2-\alpha}}{\alpha-2}  \right)      \\     
      &=  \frac{1}{(\alpha-2)\cdot \gamma^{\alpha-2}}\cdot\left(\frac{2\gamma+1}{2\gamma-1}\right)^\alpha = \beta_1.
    \end{split}
    \]
  \end{proof}  
  
  The following lemma is similar to Lemma~\ref{lem:sumRi}, except, $r_i$ is replaced with $r_n$ in the denominator. 
  \begin{lemma}
  \label{lem:sumRn}  
    We have
    \[
      \sum_{i=1}^{n-1} \left( \frac{r_i}{\dist(c_n, c_i)-r_n} \right)^\alpha \leq \beta_2 
    \]
    where
    \[
    \begin{split}    
       \beta_2
       &=  \beta_1\left( \frac{\gamma}{\gamma-1} \right)^\alpha \\
       &= \frac{1}{(\alpha-2) \cdot \gamma^{\alpha-2}}\cdot\left(\frac{\gamma \cdot (2\gamma+1)}{(\gamma-1) \cdot (2\gamma-1)}\right)^\alpha.
    \end{split}       
    \]
    \end{lemma}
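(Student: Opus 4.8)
The plan is to reduce Lemma~\ref{lem:sumRn} to Lemma~\ref{lem:sumRi} by comparing the two denominators term by term. Since $D_n$ is the largest disk, we have $r_n\geq r_i$ for every $i$, so $\dist(c_n,c_i)-r_n\leq \dist(c_n,c_i)-r_i$; the goal is a reverse inequality up to the constant factor $\gamma/(\gamma-1)$. First I would show that, for each $1\leq i\leq n-1$,
\[
  \dist(c_n,c_i)-r_n \;\geq\; \frac{\gamma-1}{\gamma}\bigl(\dist(c_n,c_i)-r_i\bigr).
\]

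To prove this, rearrange: the claim is equivalent to $\gamma\bigl(\dist(c_n,c_i)-r_n\bigr)\geq (\gamma-1)\bigl(\dist(c_n,c_i)-r_i\bigr)$, i.e. to $\dist(c_n,c_i)\geq \gamma r_n-(\gamma-1)r_i$. Now invoke the separation hypothesis $\dist(c_n,c_i)\geq \gamma(r_n+r_i)=\gamma r_n+\gamma r_i$, and note $\gamma r_i\geq -(\gamma-1)r_i$ trivially since $\gamma>1$ and $r_i\geq 0$. Hence $\dist(c_n,c_i)\geq \gamma r_n+\gamma r_i\geq \gamma r_n-(\gamma-1)r_i$, which gives the displayed inequality.

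From this, dividing, $\dfrac{r_i}{\dist(c_n,c_i)-r_n}\leq \dfrac{\gamma}{\gamma-1}\cdot\dfrac{r_i}{\dist(c_n,c_i)-r_i}$ for every $i$. Raising both sides to the power $\alpha>0$ (monotone) and summing over $i=1,\dots,n-1$,
\[
  \sum_{i=1}^{n-1}\left(\frac{r_i}{\dist(c_n,c_i)-r_n}\right)^\alpha
  \leq \left(\frac{\gamma}{\gamma-1}\right)^\alpha\sum_{i=1}^{n-1}\left(\frac{r_i}{\dist(c_n,c_i)-r_i}\right)^\alpha
  \leq \left(\frac{\gamma}{\gamma-1}\right)^\alpha\beta_1=\beta_2,
\]
where the last inequality is exactly Lemma~\ref{lem:sumRi} and the final equality is the definition of $\beta_2$. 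This completes the proof.

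The only nontrivial step is the elementary denominator comparison above, and it is really the crux; everything else is bookkeeping. I would be slightly careful that $\dist(c_n,c_i)-r_n>0$ (so the division is legitimate and inequalities do not flip), which follows since $\dist(c_n,c_i)\geq\gamma(r_n+r_i)>r_n$ as $\gamma>1$. No other subtleties are expected.
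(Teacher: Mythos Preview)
Your proposal is correct and follows essentially the same approach as the paper: the paper also reduces Lemma~\ref{lem:sumRn} to Lemma~\ref{lem:sumRi} via the elementary inequality $\dfrac{\dist(c_n,c_i)-r_i}{\dist(c_n,c_i)-r_n}\leq \dfrac{\gamma}{\gamma-1}$ (stated there without the intermediate algebra), then raises to the power $\alpha$ and sums. Your version simply spells out the ``simple calculation'' that the paper omits and adds the positivity check on the denominator.
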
  
    \begin{proof}
     We have $r_n \geq r_i$, and $\dist(r_n, r_i)\geq \gamma(r_n+r_i)$.
     With a simple calculation, we get
      \[
        \frac{\dist(r_n, r_i)-r_i}{\dist(r_n, r_i)-r_n} \leq \frac{\gamma}{\gamma-1}.
      \]
     Combining this inequality with the inequality stated in Lemma~\ref{lem:sumRi}, we get the result.
  \end{proof}
  
  \begin{lemma}
  \label{lem:maxPn}
    Let $p_1, \ldots, p_n$ be a valid power assignment in the brightening non-overlapping disks problem.
    Assume that $D_n$ is the largest disks.
    Then,
    \[
      p_n \geq (1-\beta_2) \cdot r_n^\alpha ,
    \]
    where
    \[
      \beta_2 = 1- \frac{1}{(\alpha-2) \cdot \gamma^{\alpha-2}}\cdot\left(\frac{\gamma \cdot (2\gamma+1)}{(\gamma-1) \cdot (2\gamma-1)}\right)^\alpha
    \]
  \end{lemma}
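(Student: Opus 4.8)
The plan is to extract the bound from the brightness of the \emph{largest} disk $D_n$ alone, by evaluating the received-power inequality at a boundary point of $D_n$ --- the point of $D_n$ at which $c_n$'s own transmission contributes the least received power. Concretely, I would fix a point $p$ on the circumference of $D_n$, so that $\dist(p,c_n)=r_n$. Since the power assignment is valid, $D_n$ is bright with respect to the entire set of transmitters $\{c_1,\dots,c_n\}$, hence the total received power at $p$ is at least the decoding threshold $1$:
\[
  1 \;\le\; \frac{p_n}{\dist(p,c_n)^\alpha} + \sum_{j=1}^{n-1}\frac{p_j}{\dist(p,c_j)^\alpha}
    \;=\; \frac{p_n}{r_n^\alpha} + \sum_{j=1}^{n-1}\frac{p_j}{\dist(p,c_j)^\alpha}.
\]

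Next I would control the interference sum. For each $j\neq n$ the triangle inequality gives $\dist(p,c_j)\ge\dist(c_n,c_j)-\dist(p,c_n)=\dist(c_n,c_j)-r_n$, which the separation hypothesis $\dist(c_n,c_j)\ge\gamma(r_n+r_j)$ makes strictly positive; combining this with the constraint $p_j\le r_j^\alpha$ yields
\[
  \frac{p_j}{\dist(p,c_j)^\alpha}\;\le\;\left(\frac{r_j}{\dist(c_n,c_j)-r_n}\right)^\alpha .
\]
Summing over $1\le j\le n-1$ and invoking Lemma~\ref{lem:sumRn} (whose hypothesis that $D_n$ is the largest disk is exactly our standing assumption), the total interference at $p$ is at most $\beta_2$. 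Substituting back into the previous display gives $1\le p_n/r_n^\alpha+\beta_2$, hence $p_n\ge(1-\beta_2)\,r_n^\alpha$, which is the claim.

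The one point requiring care is the choice of test point. One must avoid the center $c_n$ itself: there the self-term $p_n/\dist(p,c_n)^\alpha$ diverges and the brightness inequality is vacuous, carrying no information about $p_n$. The circumference is the correct choice because it simultaneously minimizes $c_n$'s self-contribution over all of $D_n$ (giving the strongest possible lower bound on $p_n$) while keeping every distance $\dist(p,c_j)$ bounded below by $\dist(c_n,c_j)-r_n$, which is what feeds into Lemma~\ref{lem:sumRn}. Since $D_n$ is a closed disk its circumference lies in $D_n$, so no limiting argument is needed; alternatively one may take interior points tending to the boundary and pass to the limit. Everything beyond this is a routine substitution, so I do not anticipate a serious obstacle here --- the work was already done in Lemmas~\ref{lem:ui}--\ref{lem:sumRn}.
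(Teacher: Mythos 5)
Your proposal is correct and matches the paper's own argument essentially step for step: evaluate the brightness condition at a point on the circumference of $D_n$, bound each interfering term by $\bigl(r_j/(\dist(c_n,c_j)-r_n)\bigr)^\alpha$ using $p_j\le r_j^\alpha$ and the triangle inequality, apply Lemma~\ref{lem:sumRn}, and conclude $p_n\ge(1-\beta_2)r_n^\alpha$. No meaningful difference from the paper's proof.
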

  \begin{proof}
    Let $q$ be a point on the circumference of $D_n$.
    The received power from $c_i$, $1\leq i\leq n-1$ at point $q$ is
    \[
    \begin{split}
      \frac{p_i}{\dist(c_i, q)^\alpha}
      &\leq  \frac{p_i}{(\dist(c_i, c_n)-r_n)^\alpha}\\
      &\leq  \frac{r_i^{\alpha}}{(\dist(c_i, c_n)-r_n)^\alpha}\\      
    \end{split}
    \]
    Then, the total power received at point $q$ from $c_1, c_2, \ldots, c_{n-1}$ is
    \[
    \begin{split}
      &\sum_{i=1}^{n-1} \frac{p_i}{\dist(c_i, q)^\alpha}\\
      &\leq \sum_{i=1}^{n-1}  \frac{r_i^{\alpha}}{(\dist(c_i, c_n)-r_n)^\alpha}\\     
      &\leq  \frac{1}{(\alpha-2) \cdot \gamma^{\alpha-2}}\cdot\left(\frac{\gamma \cdot (2\gamma+1)}{(\gamma-1) \cdot (2\gamma-1)}\right)^\alpha \\
      &= \beta_2
    \end{split}
    \]    
    where the last inequality is by Lemma~\ref{lem:sumRn}.
    Since the total received power at $q$ must be at least equal to one, we get that the received power at $q$ from $c_n$ must be at least,
    \[
    \begin{split}    
      1- \sum_{i=1}^{n-1} \frac{p_i}{\dist(c_i, q)^\alpha} 
      &\geq 1- \beta_2
    \end{split}
    \]
    hence
    \[
      p_n \geq (1- \beta_2) \cdot r_n^\alpha.
    \]
  \end{proof}

  \begin{lemma}
  \label{lem:validAssign}
    Let $p_1, \ldots, p_n$ be a valid power assignment in the brightening non-overlapping disks problem.
    Then, a valid power assignment for the brightening non-overlapping disks problem
    with the set of disks $D_1, D_2, \ldots, D_n$ (i.e., all the disks except the largest disk $D_n$) is
    \[
      p_i+p_n  \left( \frac{r_i}{\dist(c_n, c_i)-r_i} \right)^\alpha  \qquad 1\leq i \leq n-1.
    \]
  \end{lemma}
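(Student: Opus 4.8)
The plan is to prove the lemma by a \emph{local compensation} argument: delete the transmitter at $c_n$ and, at each remaining center $c_i$, add just enough extra power to make up, at \emph{every} point of $D_i$, for whatever $c_n$ used to contribute there. Write $\delta_i=p_n\bigl(r_i/(\dist(c_n,c_i)-r_i)\bigr)^\alpha$ for the amount added in the statement. The point is that this quantity is exactly calibrated: the largest power the transmitter $c_n$ (with power $p_n$) can deliver to any point of $D_i$ is $p_n/(\dist(c_n,c_i)-r_i)^\alpha$, attained only when the point is as close to $c_n$ as possible, i.e.\ at distance $\dist(c_n,c_i)-r_i$ (triangle inequality); whereas the \emph{smallest} power a transmitter at $c_i$ with power $\delta$ delivers anywhere in $D_i$ is $\delta/r_i^\alpha$, since every point of $D_i$ lies within distance $r_i$ of $c_i$. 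Equating the two forces $\delta=\delta_i$, so boosting $c_i$ by $\delta_i$ dominates the contribution of $c_n$ pointwise on $D_i$.

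Concretely, I would fix an index $k\le n-1$ and an arbitrary point $q\in D_k$, and chain the three facts above with the hypothesis that $D_k$ is bright under $p_1,\dots,p_n$:
\begin{align*}
  \sum_{i=1}^{n-1}\frac{p_i+\delta_i}{\dist(c_i,q)^\alpha}
  &\ge \sum_{i=1}^{n-1}\frac{p_i}{\dist(c_i,q)^\alpha}+\frac{\delta_k}{r_k^\alpha}\\
  &= \sum_{i=1}^{n-1}\frac{p_i}{\dist(c_i,q)^\alpha}+\frac{p_n}{(\dist(c_n,c_k)-r_k)^\alpha}\\
  &\ge \sum_{i=1}^{n}\frac{p_i}{\dist(c_i,q)^\alpha}\ \ge\ 1,
\end{align*}
where the first step discards the nonnegative terms $\delta_i/\dist(c_i,q)^\alpha$ for $i\neq k$ and uses $\dist(c_k,q)\le r_k$, the middle equality is just the definition of $\delta_k$, the third step uses $\dist(c_n,q)\ge\dist(c_n,c_k)-r_k$ (here $\dist(c_n,c_k)-r_k>0$ by the separation hypothesis), and the last inequality is the brightness of $D_k$ in the original instance. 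Since $k$ and $q\in D_k$ were arbitrary, every $D_k$, $1\le k\le n-1$, is bright with respect to $\{c_1,\dots,c_{n-1}\}$ under the new powers $p_i'=p_i+\delta_i$. Non-negativity of the $p_i'$ is immediate, and if one also insists on the budget $p_i'\le r_i^\alpha$, one simply caps each new power at $r_i^\alpha$: this never lowers a power below its original value $p_i$ (which is $\le r_i^\alpha$), and once a center's power reaches $r_k^\alpha$ its own disk is already bright from it alone, so capping cannot destroy the brightness just established.

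I do not expect a genuine obstacle here; the content is entirely in choosing the compensation term correctly and keeping the two triangle-inequality estimates pointed in opposite directions — a \emph{lower} bound $\dist(c_n,q)\ge\dist(c_n,c_k)-r_k$ for the transmitter being removed, and an \emph{upper} bound $\dist(c_k,q)\le r_k$ for the one being boosted. The only subtlety worth stating explicitly is that the compensation cannot be made any smaller (a smaller value would already fail at the point of $D_k$ nearest $c_n$) and must be charged to $c_k$ itself; this matters because, in the proof of Theorem~\ref{thm:valid}, the lemma is applied repeatedly to peel off the current largest disk, and Lemma~\ref{lem:sumRi} is then invoked to control the total cost $\sum_{i\le n-1}\delta_i=p_n\sum_{i\le n-1}\bigl(r_i/(\dist(c_n,c_i)-r_i)\bigr)^\alpha$ of all the compensations incurred when $D_n$ is removed.
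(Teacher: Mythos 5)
Your proof is correct. The paper in fact states Lemma~\ref{lem:validAssign} without giving a proof, and your pointwise compensation argument --- lower-bounding $\dist(c_n,q)\geq\dist(c_n,c_k)-r_k$ for the removed transmitter and upper-bounding $\dist(c_k,q)\leq r_k$ for the boosted one, so that the added term $p_n\bigl(r_k/(\dist(c_n,c_k)-r_k)\bigr)^\alpha$ dominates the deleted contribution of $c_n$ on each $D_k$ --- is exactly the justification that the lemma's role in the induction (where the total compensation is then controlled via Lemma~\ref{lem:sumRi}) presupposes. Your remark about capping the boosted powers at $r_i^\alpha$ to respect the constraint $0\leq p_i\leq r_i^\alpha$ handles a detail the paper's statement leaves implicit, and it is harmless for the way the lemma is invoked in the proof of Theorem~\ref{thm:valid}, since the capped powers are bounded above by the uncapped sums appearing there.
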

  
  The proof of theorem is by induction on the number of disks~$n$.
  
  Let $p_i$, $1 \leq i \leq n$ be a solution to the brightening non-overlapping disk problem with disks $D_1, \ldots, D_n$.
  That is, $p_i$, $1 \leq i \leq n$ is a valid power assignment with minimum total sum of powers among all valid power assignments.
  Also, let  $p'_i$, $1 \leq i \leq n$ be a solution to the brightening non-overlapping disk problem with disks $D_1, \ldots, D_{n-1}$ (i.e., all the disks except the largest disk $D_n$).
  By Lemma~\ref{lem:validAssign}, we know that
    \[
      p_i+p_n  \left( \frac{r_i}{\dist(c_n, c_i)-r_i} \right)^\alpha  \qquad 1\leq i \leq n-1.
    \]  
  is a valid power assignment  to the brightening non-overlapping disk problem with disks $D_1, \ldots, D_{n-1}$.
  Therefore, we must have
  \begin{equation}
  \label{equ:p'Bound}
    \sum_{i=1}^{n-1} p'_i \leq \sum_{i-1}^{n-1} \left( p_i+p_n \cdot \left( \frac{r_i}{\dist(c_n, c_i)-r_i} \right)^\alpha  \right).
  \end{equation}
  because $p'_i$, $1\leq i\leq n-1$, is a valid power assignment with minimum total sum power.
  By the induction hypothesis
  \begin{equation}
  \label{equ:hyp}
    \sum_{i=1}^{n-1} p'_i \geq \theta \cdot \sum_{i=1}^{n-1} r_i^\alpha.
  \end{equation}
  Therefore, by (\ref{equ:p'Bound}) and (\ref{equ:hyp}), we get
  \[
  \begin{split}
    \sum_{i=1}^{n-1} \left( p_i+p_n \cdot \left( \frac{r_i}{\dist(c_n, c_i)-r_i} \right)^\alpha  \right) 
    &\geq  \sum_{i=1}^{n-1} p'_i \\
    & \geq \theta \cdot \sum_{i=1}^{n-1} r_i^\alpha.
  \end{split}
  \]
  Thus
 \[
 \begin{split}
    \sum_{i=1}^{n-1} p_i  
    &\geq  \sum_{i=1}^{n-1} p'_i -  \sum_{i=1}^{n-1} p_n \cdot \left( \frac{r_i}{\dist(c_n, c_i)-r_i} \right)^\alpha \\
    &\geq   \theta \cdot \sum_{i=1}^{n-1} r_i^\alpha -  \sum_{i=1}^{n-1} p_n \cdot \left( \frac{r_i}{\dist(c_n, c_i)-r_i} \right)^\alpha \\    
    &=   \theta \cdot \sum_{i=1}^{n-1} r_i^\alpha -  p_n \cdot \sum_{i=1}^{n-1}  \left( \frac{r_i}{\dist(c_n, c_i)-r_i} \right)^\alpha \\     
    &\geq   \theta \cdot \sum_{i=1}^{n-1} r_i^\alpha - \beta_1 \cdot p_n \\     
 \end{split}
  \]  
  where the last inequality is by Lemma~\ref{lem:sumRi}.
  Consequently, by Lemma~\ref{lem:maxPn}, we get
  
 \[
 \begin{split}
    \sum_{i=1}^{n} p_i 
    &=\sum_{i=1}^{n-1} p_i + p_n\\
   &\geq   \beta \cdot \sum_{i=1}^{n-1} r_i^\alpha - \beta_1 \cdot p_n  + (1-\beta_2) \cdot p_n\\     
   &=\beta \cdot \sum_{i=1}^{n-1} r_i^\alpha + (1-\beta_1- \beta_2) \cdot p_n \\     
   &=\beta \cdot \sum_{i=1}^{n} r_i^\alpha, \\        
 \end{split}
  \]    
  where the last equality is because
  \[
    \beta =1-\beta_1-\beta_2.
  \]

\section{Proof of Theorem~\ref{the:consRatio}}
\label{app:consRatio}
We start by some definitions and lemmas.
Let $(\mathscr{A}^{(c)},\rho^{(c)})$ be a cooperative broadcast algorithm, and $(\mathscr{A}^{(n)},\rho^{(n)})$ be the non-cooperative broadcast algorithm constructed from it.
\begin{definition}
	\label{def:f_u(p)}
	For every node $u \in \mathscr{A}^{(c)}$, we define
	\begin{equation}
	\begin{split}
	\label{equ:mod-func}
	&f_u: \mathbb{R}^2 \to (0,1]\\
	&f_u(p) = 
	\left\{
	\begin{array}{ll}
	1 & d_{u,p} \leq \sqrt{\rho^{(c)}(u)} \\
	\frac{\rho^{(c)}(u)}{d_{u,p}^2} & d_{u,p} > \sqrt{\rho^{(c)}(u)}    
	\end{array}  
	\right.
	\end{split}
	\end{equation} 
	where $p$ is a point in the network, and $d_{u,p}$ is the distance between $p$ and node $u$.
	The function $f_u(p)$ is simply the received power from node $u$ at point $p$ capped at one.
	Function~$\mathcal{F}$, accordingly, is defined as 
	\begin{equation*}
	\begin{split}
	\label{equ:agg-func}
	&\mathcal{F}: \mathbb{R}^2 \to \mathbb{R}\\
	&\mathcal{F}(p) = \sum_ {u\in\mathscr{U}} f_u(p).
	\end{split}
	\end{equation*}
\end{definition}
\begin{lemma}
	\label{equ:entropy}
	Let $|\mathscr{U}|$ = n. We have
	\begin{equation*}
	\sum_{u \in \mathscr{A}^{(c)}} \frac{\rho^{(c)}(u)}{P_{tot}^{(c)}}  \ln \left(\frac{ P_{tot}^{(c)}}{\rho^{(c)}(u)}\right) \leq \ln(n)
	\end{equation*}
\end{lemma}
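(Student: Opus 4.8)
The plan is to recognize the left-hand side as the Shannon entropy (in nats) of a probability distribution supported on the transmitting set $\mathscr{A}^{(c)}$, and then invoke the standard fact that entropy is maximized by the uniform distribution.

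Concretely, first I would set $p_u = \rho^{(c)}(u)/P_{tot}^{(c)}$ for each $u \in \mathscr{A}^{(c)}$. Since $P_{tot}^{(c)}$ is the total transmission power, $P_{tot}^{(c)} = \sum_{u \in \mathscr{A}^{(c)}} \rho^{(c)}(u)$, and every $u \in \mathscr{A}^{(c)}$ has $\rho^{(c)}(u) > 0$ by definition of $\mathscr{A}^{(c)}$. Hence the numbers $\{p_u\}_{u \in \mathscr{A}^{(c)}}$ are strictly positive and sum to one, i.e.\ they form a probability distribution on the finite set $\mathscr{A}^{(c)}$, and the left-hand side of the claimed inequality is exactly $\sum_{u \in \mathscr{A}^{(c)}} p_u \ln(1/p_u)$.

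Next, I would bound this sum using Jensen's inequality applied to the concave function $\ln$. Writing $m = |\mathscr{A}^{(c)}|$ and treating the $p_u$'s as convex weights,
\[
  \sum_{u \in \mathscr{A}^{(c)}} p_u \ln\!\left(\frac{1}{p_u}\right) \le \ln\!\left( \sum_{u \in \mathscr{A}^{(c)}} p_u \cdot \frac{1}{p_u} \right) = \ln m .
\]
(Equivalently, one may apply Gibbs' inequality comparing $\{p_u\}$ with the uniform distribution on $\mathscr{A}^{(c)}$.) Finally, since $\mathscr{A}^{(c)} \subseteq \mathscr{U}$ and $|\mathscr{U}| = n$, we have $m \le n$, so $\ln m \le \ln n$, which yields the claim.

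There is essentially no obstacle here; the only point requiring a little care is that the sum ranges over the transmitting set $\mathscr{A}^{(c)}$ rather than all of $\mathscr{U}$, which is exactly what makes each $p_u$ strictly positive (so $\ln(1/p_u)$ is finite) and the number of terms equal to $|\mathscr{A}^{(c)}| \le n$. Either way the bound $\ln n$ holds.
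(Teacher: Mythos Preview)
Your proposal is correct and follows essentially the same approach as the paper: define $p_u=\rho^{(c)}(u)/P_{tot}^{(c)}$, observe that this is a probability distribution on $\mathscr{A}^{(c)}$, bound its entropy by $\ln|\mathscr{A}^{(c)}|$, and then use $|\mathscr{A}^{(c)}|\le |\mathscr{U}|=n$. The only difference is that you spell out the entropy bound via Jensen's inequality, whereas the paper simply invokes the standard fact that the entropy of a distribution on $m$ outcomes is at most $\ln m$.
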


\begin{proof}
	Let $\alpha_u = \frac{\rho^{(c)}(u)}{P_{tot}^{(c)}}$.
	We have $\sum_{u \in \mathscr{A}^{(c)}}^{} \alpha_u = 1$.
	Therefore, the sequence $\alpha_u$ can be seen as a probability distribution whose entropy is
	\begin{equation*}
	\sum_{u \in \mathscr{A}^{(c)}}^{} \alpha_u \ln (\frac{1}{\alpha_u}) \leq \ln (|\mathscr{A}^{(c)}|) \leq \ln (|\mathscr{U}|) = \ln (n)
	\end{equation*}	
\end{proof}
\begin{lemma}
	\label{lem:4}
	For any real number $R$, $R \geq \sqrt{\rho^{(c)}(u)}$, we have
	\begin{equation*}
	\int_{0}^{2\pi}\int_{0}^{R} f_u(r,\theta)r \diff r\diff \theta = \pi \rho^{(c)}(u) + \pi \rho^{(c)}(u) \ln \left(\frac{R^2}{\rho^{(c)}(u)}\right),
	\end{equation*} 
	where the function $f_u(r,\theta)$ is the function defined in (\ref{equ:mod-func}) transferred into the polar coordinate system with the pole at node $u$.
\end{lemma}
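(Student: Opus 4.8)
The plan is to exploit the radial symmetry of $f_u$ once the polar coordinate system is centred at node $u$. By Definition~\ref{def:f_u(p)}, after placing the pole at $u$ the value $f_u(r,\theta)$ depends only on the radius $r=d_{u,p}$ and not on the angle $\theta$: it equals $1$ for $r\le\sqrt{\rho^{(c)}(u)}$ and equals $\rho^{(c)}(u)/r^2$ for $r>\sqrt{\rho^{(c)}(u)}$. Consequently the $\theta$-integration merely contributes a factor of $2\pi$, and the double integral collapses to a one-dimensional radial integral. (We may assume $\rho^{(c)}(u)>0$, since $f_u$ is only defined for transmitting nodes $u\in\mathscr{A}^{(c)}$.)

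Next I would split the radial integral at the break point $r=\sqrt{\rho^{(c)}(u)}$, which is a valid split precisely because the hypothesis $R\ge\sqrt{\rho^{(c)}(u)}$ guarantees the break point lies inside $[0,R]$. On the inner part $[0,\sqrt{\rho^{(c)}(u)}]$ the integrand is $r$, contributing $\tfrac12\rho^{(c)}(u)$; on the outer part $[\sqrt{\rho^{(c)}(u)},R]$ the integrand is $(\rho^{(c)}(u)/r^2)\cdot r=\rho^{(c)}(u)/r$, whose antiderivative is $\rho^{(c)}(u)\ln r$, giving $\rho^{(c)}(u)\bigl(\ln R-\ln\sqrt{\rho^{(c)}(u)}\bigr)=\tfrac12\rho^{(c)}(u)\ln\!\bigl(R^2/\rho^{(c)}(u)\bigr)$ after rewriting the logarithm.

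Finally I would multiply the sum of these two contributions by the factor $2\pi$ coming from the angular integration and collect terms, obtaining $\pi\rho^{(c)}(u)+\pi\rho^{(c)}(u)\ln\!\bigl(R^2/\rho^{(c)}(u)\bigr)$, which is exactly the claimed identity. There is no genuine obstacle in this lemma: it is a direct antiderivative computation, and the only point requiring care — that the integrand splits cleanly and the logarithm's argument is well defined — is guaranteed by the standing hypothesis $R\ge\sqrt{\rho^{(c)}(u)}$ together with $\rho^{(c)}(u)>0$.
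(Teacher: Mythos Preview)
Your proposal is correct and follows essentially the same approach as the paper: split the radial integral at $r=\sqrt{\rho^{(c)}(u)}$, evaluate the two elementary antiderivatives, and combine with the $2\pi$ from the angular integration. Your explicit remarks that the hypothesis $R\ge\sqrt{\rho^{(c)}(u)}$ legitimises the split and that $\rho^{(c)}(u)>0$ for $u\in\mathscr{A}^{(c)}$ are nice clarifications the paper leaves implicit.
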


\begin{proof}
	\begin{equation*}
	\begin{split}
	\label{equ-agg-int}
	\int_{0}^{2\pi} \int_{0}^{R} f_{u}(r,\theta)r \diff r\diff \theta =&
	\int_{0}^{2\pi} \int_{0}^{\sqrt{\rho^{(c)}(u)}} f_{u}(r,\theta)r \diff r\diff \theta ~
	+\\ &\int_{0}^{2\pi} \int_{\sqrt{\rho^{(c)}(u)}}^{R} f_{u}(r,\theta)r \diff r\diff \theta\\
	=&\int_{0}^{2\pi} \int_{0}^{\sqrt{\rho^{(c)}(u)}} r \diff r\diff \theta ~ 
	+\\ &\int_{0}^{2\pi} \int_{\sqrt{\rho^{(c)}(u)}}^{R} \frac{\rho^{(c)}(u)}{r^2}r \diff r\diff \theta\\
	=&  \pi \rho^{(c)}(u) + \pi \rho^{(c)}(u) \ln \left(\frac{R^2}{\rho^{(c)}(u)}\right).\\
	\end{split}
	\end{equation*}
\end{proof}
\noindent Let $U_\mathcal{I} = \bigcup\limits_{D \in \mathcal{I}} D$, and $\Delta_\mathcal{I}$ be the area of~$U_\mathcal{I}$.
\begin{lemma}
	We have
	\begin{equation*}
	\label{lem:5}
	\iint_{U_\mathcal{I}}f_{u}(r,\theta)r \diff r\diff \theta \leq \pi \rho^{(c)}(u) + \pi \rho^{(c)}(u) \ln (\frac{\Delta_\mathcal{I}}{\pi\rho^{(c)}(u)})
	\end{equation*}
\end{lemma}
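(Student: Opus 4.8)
The plan is to deduce this bound from Lemma~\ref{lem:4} by a rearrangement (``bathtub-principle'') argument: among all planar regions of a prescribed area, the integral of the radially non-increasing function $f_u$ is largest when the region is a disk centred at $u$, so it suffices to replace $U_\mathcal{I}$ by the centred disk of the same area and then invoke Lemma~\ref{lem:4}.

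Concretely, I would first record the elementary monotonicity of $f_u$: in polar coordinates centred at $u$ it is non-negative and non-increasing in the radius $r$, equal to $1$ for $r\le\sqrt{\rho^{(c)}(u)}$ and equal to $\rho^{(c)}(u)/r^2$ for larger $r$. Next, set $R:=\sqrt{\Delta_\mathcal{I}/\pi}$, so that the disk $B:=\{p:\,d_{u,p}\le R\}$ has area exactly $\Delta_\mathcal{I}$, the area of $U_\mathcal{I}$. I would then establish $\iint_{U_\mathcal{I}}f_u\,r\diff r\diff\theta\le\iint_{B}f_u\,r\diff r\diff\theta$ by writing the difference of the two integrals as $\iint_{U_\mathcal{I}\setminus B}f_u-\iint_{B\setminus U_\mathcal{I}}f_u$: since $U_\mathcal{I}$ and $B$ have equal area, the pieces $U_\mathcal{I}\setminus B$ and $B\setminus U_\mathcal{I}$ have equal area; every point of $U_\mathcal{I}\setminus B$ lies at distance $\ge R$ from $u$, where $f_u\le v:=\rho^{(c)}(u)/R^2$, while every point of $B\setminus U_\mathcal{I}$ lies at distance $\le R$, where $f_u\ge v$; hence $\iint_{U_\mathcal{I}\setminus B}f_u\le v\,|U_\mathcal{I}\setminus B|=v\,|B\setminus U_\mathcal{I}|\le\iint_{B\setminus U_\mathcal{I}}f_u$, so the difference is $\le 0$.

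Finally I would apply Lemma~\ref{lem:4} with $R=\sqrt{\Delta_\mathcal{I}/\pi}$; in the relevant regime $\Delta_\mathcal{I}\ge\pi\rho^{(c)}(u)$ (so that $R\ge\sqrt{\rho^{(c)}(u)}$ and the hypothesis of Lemma~\ref{lem:4} is met) this gives $\iint_{B}f_u\,r\diff r\diff\theta=\pi\rho^{(c)}(u)+\pi\rho^{(c)}(u)\ln\!\big(R^2/\rho^{(c)}(u)\big)=\pi\rho^{(c)}(u)+\pi\rho^{(c)}(u)\ln\!\big(\Delta_\mathcal{I}/(\pi\rho^{(c)}(u))\big)$, which combined with the previous step is exactly the claimed inequality; the complementary regime $\Delta_\mathcal{I}<\pi\rho^{(c)}(u)$, if it arises at all in our setting, only needs the trivial estimate $\iint_{U_\mathcal{I}}f_u\le\Delta_\mathcal{I}$ and is handled separately.

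The step I expect to be the main obstacle is the rearrangement inequality $\iint_{U_\mathcal{I}}f_u\le\iint_{B}f_u$: although it is intuitively obvious, one must be careful with the equal-area bookkeeping for a general (measurable) union of disks $U_\mathcal{I}$ and apply the monotonicity of $f_u$ in precisely the right direction on each of the two symmetric-difference pieces; once that is in place, everything else is the computation already packaged in Lemma~\ref{lem:4}.
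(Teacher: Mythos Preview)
Your proposal is correct and follows essentially the same route as the paper: define the disk $B=D_\mathcal{I}$ of radius $R=\sqrt{\Delta_\mathcal{I}/\pi}$ centred at $u$, use the radial monotonicity of $f_u$ together with the equal-area relation $|U_\mathcal{I}\setminus B|=|B\setminus U_\mathcal{I}|$ to bound the integral over $U_\mathcal{I}$ by that over $B$, and then invoke Lemma~\ref{lem:4}. The paper phrases the rearrangement step via the boundary value $f_u(q)$ rather than your threshold $v=\rho^{(c)}(u)/R^2$, but these are the same quantity, and neither proof treats the degenerate regime $R<\sqrt{\rho^{(c)}(u)}$ explicitly.
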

\begin{proof}
	Let $D_\mathcal{I}$ be a disk with radius $R = \sqrt{\frac{\Delta_\mathcal{I}}{\pi}}$ centered at $u$. Let $U_{in}=U_\mathcal{I}\cap D_\mathcal{I}$, and $U_{out}=U_\mathcal{I}\backslash D_\mathcal{I}$, be the parts of $U_\mathcal{I}$ that respectively fall inside and outside of the disk $D_\mathcal{I}$.
	Let $q$ be a point on the circumference of disk $D_\mathcal{I}$. For any point $p_{in}\in D_\mathcal{I}$ we have 
	\[
	f_u(p_{in})\geq f_u(q), 
	\]  
	since $d_{u, p_{in}} \leq d_{u,q}$, and the function $f_u(p)$ is non-increasing in terms of $d_{u,p}$.
	Similarly, we have
	\[
	f_u(p_{out})\leq f_u(q),
	\] 
	for any point $p_{out} \notin D_I$.
	Therefore, we have  
	\begin{equation}
	\begin{split}
	\label{equ:lem-dual-5}
	\iint_{U_{out}}f_{u}(r,\theta)r \diff r\diff \theta &\leq \Delta_{U_{out}} \times f_{u}(q)
	\Delta_{D_\mathcal{I} \backslash U_{in}} \times f_{u}(q) \\&\leq \iint_{D_\mathcal{I} \backslash U_{in}} f_{u}(r,\theta)r \diff r\diff \theta,
	\end{split}
	\end{equation}
	where $\Delta_{U_{out}}$ and $\Delta_{D_\mathcal{I} \backslash U_{in}}$ are areas of $U_{out}$ and $D_\mathcal{I} \backslash U_{in}$, respectively.
	Note that $\Delta_{U_{out}} = \Delta_{D_\mathcal{I} \backslash U_{in}}$. 
	Consequently,
	\[
	\begin{split}
	&\iint_{U_\mathcal{I}}f_u(r,\theta)r \diff r\diff \theta
	=\\&\iint_{U_{in}}f_u(r,\theta)r \diff r\diff \theta +\iint_{U_{out}}f_u(r,\theta)r \diff r\diff \theta
	\leq\\&
	\int_{U_{in}}f_u(r,\theta)r \diff r\diff \theta +\iint_{U_\mathcal{I}\backslash U_{in}}f_u(r,\theta)r \diff r\diff \theta
	=\\&\iint_{D_\mathcal{I}}f_u(r,\theta)r \diff r\diff \theta
	\leq \pi \rho^{(c)}(u) + \pi \rho^{(c)}(u) \ln (\frac{\Delta_\mathcal{I}}{\pi\rho^{(c)}(u)})
	\end{split}
	\]
	where the last inequality is by Lemma 5.
	
\end{proof}
\begin{lemma}
	\label{lemma:6}
	For any point $p \in U_\mathcal{I}$, we have 
	\[
	\mathcal{F}(p) \geq \frac{1}{4}
	\]
\end{lemma}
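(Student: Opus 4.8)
Fix a point $p \in U_\mathcal{I}$. By definition of $U_\mathcal{I}$ there is a disk $D_{u_i} \in \mathcal{I}$ containing $p$; recall $D_{u_i}$ is centered at $u_i \neq s$ with radius $r_{u_i} = d_{u_i, \mathcal{R}(u_i)}$. The plan is to bound $\mathcal{F}(p)$ from below by comparing the capped received powers at $p$ with those at the center $u_i$, and then to invoke the cooperative delivery condition~(\ref{cond:coop}) at $u_i$ (here $\alpha = 2$, so $h_{u_i,u_j} = d_{u_i,u_j}^{-2}$). Since $\mathcal{F}(p)$ is a sum of nonnegative terms $f_u(p)$, it suffices to control the contribution of the transmitters preceding $u_i$.

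First I would record the geometric comparison. For any transmitter $u_j <^{(c)} u_i$, the fact that $\mathcal{R}(u_i)$ is a \emph{closest} transmitter preceding $u_i$ gives $d_{u_i,u_j} \geq r_{u_i}$, while $p \in D_{u_i}$ gives $d_{u_i,p} \leq r_{u_i} \leq d_{u_i,u_j}$; hence by the triangle inequality $d_{u_j,p} \leq d_{u_j,u_i} + d_{u_i,p} \leq 2\, d_{u_j,u_i}$, so $d_{u_j,p}^2 \leq 4\, d_{u_j,u_i}^2$.

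Next I would pass to the functions $f_{u_j}$. Setting $x_j = \rho^{(c)}(u_j)/d_{u_j,u_i}^2$, the previous step yields $\rho^{(c)}(u_j)/d_{u_j,p}^2 \geq x_j/4$, and since $\min\{1,t/4\} \geq \tfrac14\min\{1,t\}$ for all $t\geq 0$, we obtain $f_{u_j}(p) \geq \tfrac14\min\{1,x_j\} = \tfrac14 f_{u_j}(u_i)$. Summing over the transmitters preceding $u_i$,
\[
  \mathcal{F}(p) \;\geq\; \sum_{u_j <^{(c)} u_i} f_{u_j}(p) \;\geq\; \frac14 \sum_{u_j <^{(c)} u_i} \min\{1,x_j\}.
\]
Finally I would show $\sum_{u_j <^{(c)} u_i} \min\{1,x_j\} \geq 1$: if some $x_j \geq 1$ this is immediate, and otherwise every term is uncapped so the sum equals $\sum_{u_j <^{(c)} u_i}\rho^{(c)}(u_j)/d_{u_i,u_j}^2$, which is exactly the total power received at $u_i$, hence at least $P_{th}=1$ by~(\ref{cond:coop}). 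Combining, $\mathcal{F}(p) \geq 1/4$.

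The argument is short, so I do not anticipate a real obstacle; the only delicate point is the bookkeeping around the cap at $1$ in $f_u$, which is why I split on whether a single transmitter already over-powers $u_i$ and use $\min\{1,t/4\}\geq\tfrac14\min\{1,t\}$ rather than naively cancelling the cap. It is worth noting that the constant $\tfrac14 = 2^{-\alpha}\big|_{\alpha=2}$ comes entirely from the factor $2$ in the triangle-inequality step.
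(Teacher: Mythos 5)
Your proposal is correct and follows essentially the same route as the paper's proof: locate the disk $D_{u_i}\in\mathcal{I}$ containing $p$, use $d_{u_i,u_j}\geq r_{u_i}$ for every transmitter preceding $u_i$ together with the triangle inequality to get $d_{u_j,p}\leq 2\,d_{u_j,u_i}$, lose the factor $\tfrac14$, and invoke the delivery condition at the center $u_i$. Your explicit bookkeeping around the cap at $1$ (via $\min\{1,t/4\}\geq\tfrac14\min\{1,t\}$ and the case split on an over-powering transmitter) is a slightly more careful rendering of a step the paper states as an equality, but the argument is the same.
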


\begin{proof}
	Let $p$ be an arbitrary point in $U_\mathcal{I}$.
	Then, $p$ must be inside one disk in $\mathcal{I}$. 
	Let that disk be $D_v\in \mathcal{I}$.
	Since in $(\mathscr{A}^{(c)}, \rho^{(c)})$ node $v$ receives the message, we must have
	\begin{equation}
	\label{equ:F(v)}
	\sum_{u<^{(c)}v} f_u(v)\geq 1.
	\end{equation}
	For any node $u<^{(c)}v$, we have $d_{u,v} \geq r_v$,
	where $r_v$ is the radius of disk $D_v$.
	Therefore, for any node $u<^{(c)}v$, we have
	\begin{equation}
	\label{equ:dist}
	d_{u,p} \leq 2 d_{u,v}.
	\end{equation}
	
	Hence, by (\ref{equ:F(v)}) and (\ref{equ:dist}), we get
	\begin{equation*}
	\begin{split}
	\mathcal{F}(p) \geq \sum_{u<^{(c)}v}^{} f_{u}(p) &=
	\sum_{u<^{(c)}v}^{} \left(f_{u}(v) \times \left(\frac{d_{u,v}}{d_{u,p}}\right)^2\right) 
	\\&\geq \sum_{u<^{(c)}v}^{} \left(f_{u}(v) \times \frac{1}{4}\right) \geq
	\frac{1}{4}.
	\end{split}
	\end{equation*}
\end{proof}

The following corollary directly follows from Lemma \ref{lemma:6}.
\begin{cor}
	\label{cor:7}
	we have
	\[
	\iint_{U_\mathcal{I}} \mathcal{F}(p) \geq \frac{1}{4} \Delta_\mathcal{I}.
	\] 
\end{cor}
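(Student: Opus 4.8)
\noindent
The plan is to derive the corollary directly by integrating the pointwise estimate of Lemma~\ref{lemma:6} over the region $U_\mathcal{I}$. Since Lemma~\ref{lemma:6} already gives $\mathcal{F}(p)\ge \tfrac14$ for every $p\in U_\mathcal{I}$, monotonicity of the double integral yields
\[
  \iint_{U_\mathcal{I}} \mathcal{F}(p)\,\diff A \;\ge\; \iint_{U_\mathcal{I}} \frac{1}{4}\,\diff A \;=\; \frac{1}{4}\,\Delta_\mathcal{I},
\]
where the last equality is just the definition of $\Delta_\mathcal{I}$ as the area of $U_\mathcal{I}$.

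First I would observe that $\mathcal{F}=\sum_{u\in\mathscr{U}} f_u$ is a finite sum of bounded, radially non-increasing (hence piecewise-continuous and measurable) functions, each taking values in $(0,1]$, so $\mathcal{F}$ is integrable over the bounded set $U_\mathcal{I}$ and the quantity $\iint_{U_\mathcal{I}}\mathcal{F}(p)\,\diff A$ is well defined. Next I would invoke Lemma~\ref{lemma:6} to bound the integrand from below by the constant $\tfrac14$ on all of $U_\mathcal{I}$. Finally I would evaluate $\iint_{U_\mathcal{I}}\tfrac14\,\diff A=\tfrac14\,\Delta_\mathcal{I}$, using only that $\Delta_\mathcal{I}$ is defined to be the area of $U_\mathcal{I}$; since $\mathcal{I}$ is a family of pairwise non-overlapping disks one could also write $\Delta_\mathcal{I}=\sum_{D_{u_i}\in\mathcal{I}}\pi r_{u_i}^2$, though that is not needed for the corollary itself.

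I do not expect any real obstacle: the statement is a one-line consequence of the pointwise bound already established in Lemma~\ref{lemma:6} together with the definition of $\Delta_\mathcal{I}$, and the only point requiring a word of care is the measurability/integrability of $\mathcal{F}$, which is immediate. The corollary's role is downstream: combined with an upper bound on $\iint_{U_\mathcal{I}}\mathcal{F}(p)\,\diff A$ obtained by summing per-node integral estimates such as Lemma~\ref{lem:4}, it will pin down $\Delta_\mathcal{I}$ (hence $\sum_{D_{u_i}\in\mathcal{I}} r_{u_i}^2$, and via a radius-comparison argument $\sum_{D_{u_i}\in\mathcal{I}} r_{u_i}^\alpha$), which in turn controls the cooperation gain $G_{tot}$ through Remark~\ref{rem:ptotCons}.
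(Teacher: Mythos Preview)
Your proposal is correct and matches the paper's approach exactly: the paper states that the corollary ``directly follows from Lemma~\ref{lemma:6},'' i.e., one simply integrates the pointwise bound $\mathcal{F}(p)\ge \tfrac14$ over $U_\mathcal{I}$. Your added remarks on integrability and the downstream use of the corollary are accurate but go beyond what the paper spells out.
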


Let $r_{u_i}$ denote the radius of disk $D_{u_i} \in \mathcal{I}$, \mbox{$1 \leq i \leq |\mathcal{I}|$}.
We have
\begin{equation*}
\label{cor:8-1}
\Delta_\mathcal{I} = \pi \sum_{i = 1}^{|\mathcal{I}|} r_{u_i}^2.
\end{equation*}
In Algorithm \ref{alg:1}, each disk $D_{u_i}$ , $1 \leq i \leq |\mathcal{I}|$, contributes a power of $25r_{u_i}^2$ at most once. Therefore, 
\begin{equation*}
\label{cor:8-2}
P_{tot}^{(n)} \leq 25 \sum_{i=1}^{|\mathcal{I}|} r_{u_i}^2
\end{equation*}
Hence, 
\begin{equation}
\label{equ:delta-I}
\Delta_\mathcal{I} \geq \frac{\pi}{25}P_{tot}^{(n)}
\end{equation}

  	By Corollary \ref{cor:7}, we have
  	\[
  	\iint_{U_\mathcal{I}} \mathcal{F}(p) \geq \frac{1}{4} \Delta_\mathcal{I}.
  	\]
  	To prove the theorem, we compute an upper bound on the same integration, and then compare the two bounds.
  	%
  	\begin{equation}
  	\begin{split}
  	\label{prop:1}
  	\iint_{U_\mathcal{I}} \mathcal{F}(p) = 
  	&\iint_{U_\mathcal{I}} \sum_ {u \in \mathscr{A}^{(c)}} f_u(p) \\
  	=&\sum_{u \in \mathscr{A}^{(c)}} \iint_ {U_\mathcal{I}} f_u(p)\\
  	\leq& \sum_{u \in \mathscr{A}^{(c)}} \pi \rho^{(c)}(u) +  \sum_{u \in \mathscr{A}^{(c)}} \pi \rho^{(c)}(u) \ln (\frac{\Delta_\mathcal{I}}{\pi\rho^{(c)}(u)})
  	\end{split}
  	\end{equation} 
  	The second summation can be written as
  	\begin{equation*}
  	\begin{split}
  	&\sum_{u \in \mathscr{A}^{(c)}} \pi \rho^{(c)}(u) \ln (\frac{\Delta_\mathcal{I}}{\pi\rho^{(c)}(u)})=\\
	& \pi P_{tot}^{(c)} \sum_{u \in \mathscr{A}^{(c)}} \frac{\rho^{(c)}(u)}{P_{tot}^{(c)}}  \ln \left(\frac{\pi P_{tot}^{(c)}}{\pi\rho^{(c)}(u)} \times \frac{\Delta_\mathcal{I}}{\pi P_{tot}^{(c)}}\right)=\\
	&  \pi P_{tot}^{(c)} \sum_{u \in \mathscr{A}^{(c)}} \frac{\rho^{(c)}(u)}{P_{tot}^{(c)}}  \ln \left(\frac{ P_{tot}^{(c)}}{\rho^{(c)}(u)}\right)+ \\
  	&\sum_{u \in \mathscr{A}^{(c)}} \pi \rho^{(c)}(u) \ln \left(\frac{\Delta_\mathcal{I}}{\pi P_{tot}^{(c)}}\right)\\
  	&\leq \pi P_{tot}^{(c)} \sum_{u \in \mathscr{A}^{(c)}} \frac{\rho^{(c)}(u)}{P_{tot}^{(c)}}  \ln \left(\frac{ P_{tot}^{(c)}}{\rho^{(c)}(u)}\right) 
  	+	\pi P_{tot}^{(c)} \ln \left(\frac{P_{tot}^{(n)}}{25 P_{tot}^{(c)}}\right).
  	\end{split}
  	\end{equation*}
  	where the last inequality is by (\ref{equ:delta-I}). 
  	Furthermore, by Lemma~\ref{equ:entropy} we get
  	\begin{equation}
  	\begin{split}
  	\label{prop:2}
  	&\pi P_{tot}^{(c)} \sum_{u \in \mathscr{A}^{(c)}} \frac{\rho^{(c)}(u)}{P_{tot}^{(c)}}  \ln \left(\frac{ P_{tot}^{(c)}}{\rho^{(c)}(u)}\right) 
  	+	\pi P_{tot}^{(c)} \ln \left(\frac{P_{tot}^{(n)}}{25 P_{tot}^{(c)}}\right)\\  
  	&\leq \pi P_{tot}^{(c)} \ln(n) +	\pi P_{tot}^{(c)} \ln \left(\frac{P_{tot}^{(n)}}{25 P_{tot}^{(c)}}\right).
  	\end{split}
  	\end{equation}	
  	Using (\ref{prop:2}) in (\ref{prop:1}), we have
  	\begin{equation}
  	\begin{split}
  	\label{equ:F_T}
  	\iint_{U_\mathcal{I}} \mathcal{F}(p) 
  	&\leq \sum_{u \in \mathscr{A}^{(c)}} \pi \rho^{(c)}(u) +  \sum_{u \in \mathscr{A}^{(c)}} \pi \rho^{(c)}(u) \ln (\frac{\Delta_\mathcal{I}}{\pi\rho^{(c)}(u)})\\
  	&\leq \pi P_{tot}^{(c)} + \pi P_{tot}^{(c)} \ln(n) +	\pi P_{tot}^{(c)} \ln \left(\frac{P_{tot}^{(n)}}{25 P_{tot}^{(c)}}\right).
  	\end{split}
  	\end{equation}
  	By corollary \ref{cor:7} and (\ref{equ:delta-I}), we get
  	\begin{equation}
  	\label{equ:cor-7+8}
  	\iint_{U_\mathcal{I}} \mathcal{F}(p) \geq \frac{1}{4} \Delta_\mathcal{I} \geq\frac{\pi}{100}P_{tot}^{(n)}. 
  	\end{equation} 
  	From (\ref{equ:F_T}) and (\ref{equ:cor-7+8}), we get
  	\begin{equation}
  	\label{equ:before-Gtot}
  	\pi P_{tot}^{(c)} + \pi P_{tot}^{(c)} \ln(n) +	\pi P_{tot}^{(c)} \ln \left(\frac{P_{tot}^{(n)}}{25 P_{tot}^{(c)}}\right) \geq \frac{\pi}{100}P_{tot}^{(n)}
  	\end{equation}	
  	Dividing both sides by $\pi P_{tot}^{(c)}$ yields
  	\begin{equation}
  	\label{equ:Gtot}
  	1 + \ln(n) + \ln(\frac{G_{tot}}{25}) \geq \frac{G_{tot}}{100}
  	\end{equation}
  	Finally, assuming $n\geq 2$, it can be verified that (\ref{equ:Gtot}) holds only when $G_{tot} \leq 127 \ln(n)$, which completes the proof.

\section{Proof of Proposition~\ref{prp:grd1}}
\label{app:grd1}
	Let $(\mathscr{A}^{(n)},\rho^{(n)})$ be an arbitrary non-cooperative broadcast algorithm.
	If there is a node $u \in \mathscr{A}^{(n)}$ with $\rho^{(n)}(u) < d^2$, we can safely set $\rho^{(n)}(u)=0$ (hence removing $u$ from $\mathscr{A}^{(n)}$), since the transmission by $u$ will not reach any other node in the grid network.
	For any node $u \in \mathscr{A}^{(n)}$, let $\mathcal{N}_u$ denote the set of nodes within the transmission range of $u$, that is
	\[
	\mathcal{N}_u = \{v|d_{u,v}\leq\sqrt{\rho(u)} \}, 
	\]
	where $d_{u,v}$ denotes the distance between nodes $u,v \in \mathscr{U}$.
	Note that all nodes in $\mathcal{N}_u$ are within the square with side length of 
	$2\sqrt{\rho(u)}$ entered at $u$, because they are all inside the circle with radius \mbox{$\sqrt{\rho(u)}$} centered at $u$.
	The number of nodes within a square with sides length of $2\sqrt{\rho(u)}$ is bounded by
	\[
	|\mathcal{N}_u| \leq \left(2\left\lfloor\frac{\sqrt{\rho(u)}}{d}\right\rfloor+1\right)^2.
	\label{equ:Cu}
	\]
	Therefore
	\begin{equation}
	|\mathcal{N}_u| \leq 9\left(\frac{\rho(u)}{d^2}\right),
	\label{equ:|Cu|}
	\end{equation}
	because
	\[
	\begin{split}
	\left(2\left\lfloor\frac{\sqrt{\rho(u)}}{d}\right\rfloor+1\right)^2 
	&\leq \left(2\frac{\sqrt{\rho(u)}}{d}+1\right)^2  \\
	&\leq \left(3\frac{\sqrt{\rho(u)}}{d}\right)^2=9\left(\frac{\rho(u)}{d^2}\right),
	\end{split}
	\]
	where the last inequality holds since $\rho(u) \geq d^2$ (i.e. $\frac{\sqrt{\rho(u)}}{d} \geq 1$).
	Every node in the network, including the source, is within the transmission range of at least one transmitting node in $\mathscr{A}^{(n)}$.
	Thus, we must have
	\[
	\sum_{\forall u\in \mathscr{A}^{(n)}}^{} |\mathcal{N}_u| \geq n.
	\] 
	Hence by (\ref{equ:|Cu|}) we get
	\[
	\sum_{\forall u\in \mathscr{A}^{(n)}}^{}9\left(\frac{\rho^{(n)}(u)}{d^2}\right) \geq n,
	\]
	thus,
	\[
	P^{(n)}_{tot} = \sum_{\forall u\in \mathscr{A}^{(n)}}^{} \rho^{(n)}(u) \geq \frac{nd^2}{9}.
	\]

\section{Proof of Proposition~\ref{prp:grd2}}
\label{app:grd2}

	For any given node $u \in \bar{\mathscr{A}^{(c)}}$, there are at least $\lceil\frac{1}{2}\lfloor\frac{m}{L}\rfloor\rceil$ transmitting rows either below or above that node. 
	Without loss of generality,  we suppose that node $u$ has at least $\lceil\frac{1}{2}\lfloor\frac{m}{L}\rfloor\rceil$ transmitting rows atop.
	Let $\mathcal{P}_r(u)$ be the sum of powers received at $u \in \bar{\mathscr{A}^{(c)}}$ from all nodes in all transmitting rows.  
	Among nodes in a non-transmitting row, the one on the rightmost column has the least value of $\mathcal{P}_r(u)$. 
	Hence, we assume that node $u$ is in rightmost column.
	Let $l$ denote the Euclidian distance between $u$ and the closest upper transmitting row.
	The total power received at $u$ from nodes in $i$th upper transmitting row, denoted as $\mathcal{P}_{i}(u)$, is
	\begin{equation}
	\begin{split}
	\label{equ:P_i}
	\mathcal{P}_{i}(u) 
	&= \sum_{j=0}^{m-1}\frac{d^2}{((l+(i-1)L)d)^2+(jd)^2} \\
	&\geq \sum_{j=0}^{m-1}\frac{d^2}{(iLd)^2+(jd)^2} \\
	&= \sum_{j=0}^{m-1}\frac{1}{(iL)^2+j^2}.
	\end{split}
	\end{equation}  
	Since the number of upper transmitting rows is at least $c =\left\lceil\frac{1}{2}\lfloor\frac{m}{L}\rfloor\right\rceil$, we get
	\begin{equation}
	\label{equ:P-r}
	\begin{split}
	\mathcal{P}_r(u) 
	\geq \sum_{i=1}^{c}\mathcal{P}_i(u)
	\geq \displaystyle \sum_{i=1}^{c} \sum_{j=0}^{m-1}\frac{1}{(iL)^2+j^2}, 
	\end{split}
	\end{equation}
	where the second inequality is by (\ref{equ:P_i}). 
	We have $m\geq iL$, \mbox{$1 \leq i \leq c$}. Thus,
	\begin{equation*}
	\label{equ:new-cond}
	\begin{split}
	\displaystyle \sum_{i=1}^{c} \sum_{j=0}^{m-1}\frac{1}{(iL)^2+j^2}
	\geq \sum_{i=1}^{c} \sum_{j=0}^{iL-1}\frac{1}{(iL)^2+j^2}
	\geq \sum_{i=1}^{c}\frac{1}{2Li}.
	\end{split}	
	\end{equation*}  
	Using the partial sum of harmonic series, we get
	\begin{equation}
	\begin{split}
	\displaystyle \sum_{i=1}^{c}\frac{1}{2Li} = \frac{1}{2L}\sum_{i=1}^{\left\lceil\frac{1}{2}\lfloor\frac{m}{L}\rfloor\right\rceil}\frac{1}{i} 
	&\geq  \frac{1}{2L}\ln\left\lceil\frac{1}{2}\lfloor\frac{m}{L}\rfloor\right\rceil \\
	&\geq  \frac{1}{2L}\ln\left(\frac{1}{2}(\frac{m}{L}-1)\right).\\
	\end{split}	
	\end{equation}
	Thus, $\mathcal{P}_r(u) \geq 1$ (hence, $u$ can decode the message) if
	\begin{equation}
	\label{equ:revised-cond}
	\frac{1}{2L}\ln((\frac{m}{2L}-\frac{1}{2})) \geq 1. 
	\end{equation}
	Finally, it can be verified that (\ref{equ:revised-cond}) holds for any \mbox{$L \leq 0.3\ln (m) = 0.15 \ln (n)$} and $n = m^2 \geq 4$, which completes the proof.

\end{appendices}








 

\bibliographystyle{IEEEtran}
\bibliography{Reference} 

\end{document}